\title{On Properties of Network Systems}
\author{Evangelos Kipouridis}{BARC, University of Copenhagen, Universitetsparken 1, Copenhagen, Denmark}{kipouridis@di.ku.dk}{}{}
\author{Kostas Tsichlas}{School of Informatics, Aristotle University of Thessaloniki, Greece}{tsichlas@csd.auth.gr}{}{}
\authorrunning{E. Kipouridis and K. Tsichlas}
\keywords{network systems, network dynamics, convergence}
\newtheorem{prop}{Property}
\begin{document}

\maketitle

\begin{abstract}
The apparent disconnection between the microscopic and the macroscopic is a major issue in the understanding of complex systems. To this extent, we study properties of local rules that change the topology of a network, network systems as named by Wolfram, and barely touch on the expressive power of this model. There has been extensive research on this topic in various fields but almost in all cases (with minimum exceptions) the focus is either on modeling and experimental verification (complex systems) or in computability and complexity (communication networks) where rigorous proofs apply. Although there have been (notably not many) attempts to bridge these two general viewpoints, further effort is needed especially in transferring the computational viewpoint to the complex systems field. To be more precise, there may be some merit in looking at such networks from an algorithmic perspective with all the relevant analysis tools. We hope that this paper is a step towards this direction and will constitute an incentive for further research.

In this paper we look at the behavior of network systems when different types of local rules are applied on them. We prove convergence for a very general class of local rules, and provide guarantees on the speed of convergence for an important subclass of this class. We also study more general rules, and show that they do not converge. Our counterexamples resolve an open question of (Zhang, Wang, Wang, Zhou, KDD - 2009) as well, concerning whether a certain process converges. In addition, we provide a certain member of the aforementioned converging class that, when applied on a graph, efficiently computes its $k$-core and its $(k-1)$-crust giving hints on the expressive power of such a model. In the same spirit, we further provide local rules that solve the single source shortest path problem. Finally, we show the universality of our network system, by providing a local rule under which it is Turing-Complete. 
\end{abstract}

\newpage

\section{Introduction} \label{sec:intro}

There is an increasing interest on the interplay between the microscopic and the macroscopic in terms of emergent behavior. This is a crucial point for a better understanding of complex systems. The most striking examples come from biological systems that seem to form macroscopic structures out of local interactions between simpler structures on all levels of organization. For example, Physarum Polycephalum (a slime mold) has been shown \cite{journals/n/NakagakiYT00} to be able to solve computational problems such as the shortest path by simple local interactions. The underlying common characteristic of these systems is the emergent behavior in the macroscopic level out of simple local interactions at the microscopic level. Motivated by the plethora of such examples, we delve deeper into this field by studying repeated applications of local rules on a network and provide hints with respect to the expressive power of this model from a computer science perspective. 

There is a huge literature related to how the view is changed when the time dimension is introduced in the network. The incorporation of time in the network results in what is called in the literature dynamic networks \footnote{Henceforth the term network and graph refer to the same object.}, adaptive networks, time-varying networks, evolving networks and temporal networks that essentially refer to the same idea\footnote{Henceforth, we will use the term \textit{dynamic} network for consistency.}. This is mainly because dynamic networks have been the object of different scientific fields like sociology \cite{10.1086/428716}, physics \cite{doi:10.1142/S0219525911003050}, ecology \cite{rgd2018}, computer science \cite{DBLP:journals/cacm/MichailS18}, engineering (e.g., robotics, complex system design, etc.) \cite{saidani2004}, etc. However, in general, one can discern between two distinct viewpoints without excluding overlappingness: 

\begin{itemize}
 \item  \textbf{complex systems viewpoint:} (physics, sociology, ecology, etc.) the main focus is on modelling (differential/difference equations, random boolean networks, iterative maps, artificial neural networks and cellular automata - see \cite{Sayama2009}) and qualitative analysis (by means of mean field approximations, bifurcation analysis etc.). The main questions here are of qualitative nature and include phase transitions, complexity of system behavior, etc. Rigorous analysis is not frequent and simulation is the main tool for providing results.
 
 \item \textbf{computational viewpoint:} (mainly computer science and communications) the main focus is on the computational capabilities (computability/complexity) of dynamic networks in various settings and with different assumptions. The  main approach in computer science is based on rigorous proofs while in communications is based on experimental findings.

\end{itemize}
In a nutshell, the former asks what is the long/short term behavior of the system and its properties while the latter asks what can we compute and how fast. We provide more information as to these different viewpoints in Section \ref{ssec:related}. However, a small discussion on the computational viewpoint is imperative. The main approaches in this viewpoint is either to take the dynamic network for granted or reason w.r.t. computability/complexity about the specified local operations. In the first approach, (see for example \cite{Aggarwal:2014:ENA:2620784.2601412}) the source of dynamicity in the network is completely ignored and for the purposes of the analysis of the dynamics the effects of time on the network are assumed to be adversarial or stochastic, that is interactions are either worst-case or random. In the second approach, the local rules are specified and the questions asked are related to feasibility of computation and complexity to decide feasibility (e.g., \cite{ss2019,DBLP:conf/icalp/MichailSS17}). 

In the computational viewpoint what is missing is a programming language for local rules in a network and everything that such an achievement would entail. As discussed in the related work section and to the best of our knowledge, no such goal has been set in general in the computer science discipline. To accomplish this, we need to pinpoint the process that drives such a physical system (e.g., how Physarum Polycephalum actually evolves) or more generally program the network by means of local rules in order to get a target emergent behavior. This is one of the holy grails of complex systems theory and we think that computer science has a say about it. Note, that this approach is not data-driven since knowledge is required about the inner-workings of the network. This is probably one of the reasons why such an approach was not so popular in computer science (apart from its inherent difficulty). We hope that this paper is a small step towards this direction not only with respect to its results but also with respect to raising the awareness of the computer science community towards this direction.

In this paper, as discussed in \ref{ssec:results}, we touch on dynamic networks whose topological evolution alone is affected by local rules/programs. We look at problems related to the following questions: What is the emergent behavior in a network when a local rule is applied repetitively? What are the necessary conditions for the dynamic process driven by the repeated applications of local rules to converge and what does convergence means in this case? What local rules and assumptions should we use in order to solve a problem on a dynamic network? We provide results that give answers to certain settings with respect to the preceding questions. To do that, we first define a general model, which we call a network system. In the exposition of our results we tend to avoid cumbersome notation in order to make the results easier to reach and understand, without however sacrificing correctness. 

In the next section we provide the description of the proposed model as well as state the results of this paper. In addition, we discuss related work and how our model and the respective results are positioned within this vast field.

\section{Preliminaries} \label{sec:preliminaries}

We start by providing some definitions and notation and then move to the discussion of the proposed model. Then we describe the results of the paper and finally we discuss related work.

\subsection{Definitions}\label{ssec:def}

Let $G=(V,E)$ be an undirected simple graph. Assume that the graph evolves over time (discrete time) based on a set of rules. We represent the graph at time $t$ by $G^{(t)}=(V^{(t)},E^{(t)})$, and $G^{(0)}=G$. We define the energy of the edge $e=(u,v)$ at time $t$ to be some function related to this edge, which is represented by $\mathcal{E}_{G^{(t)}}^{(t)}(e)$ or $\mathcal{E}_{G^{(t)}}^{(t)}(u,v)$. We write $\mathcal{E}^{(t)}(u,v)$ or $\mathcal{E}(u,v)$ when the graph and the time we are referring to are clear from the context. Finally, assume that $n^{(t)}=|V^{(t)}|$ and $m^{(t)}=|E^{(t)}|$. 

Let $N_G(u)$ be the set of all neighbors of node $u$ and let $d_G(u)$ be the degree of node $u$ in graph $G$. We define $\left| E^{(t)}(u,v) \right|$ to be the number of edges between $u$ and $v$ at time $t$ (either $0$ or $1$), and $\left| E(G[N_{G^{(t)}}(u) \cap N_{G^{(t)}}(v)]) \right|$ to be the number of edges between common neighbors of $u$ and $v$ at time $t$.

Let $f: \mathbb{N}^2\rightarrow \mathbb{R}$ be a continuous function having the following two properties: i) Non-decreasing, that is $f(x,y+\epsilon)\geq f(x,y)$ for $\epsilon>0$ (similarly $f(x+\epsilon,y)\geq f(x,y)$) and ii) Symmetric, $f(x,y)=f(y,x)$. The second property is related to the fact that we consider undirected graphs. We call these functions \textit{proper}.

\subsection{The Network System}
\label{ssec:Network_System}
Informally, the network system in its general form is a discrete-time dynamic network of agents. Each agent executes its own algorithm that changes the labels/states on the network as well as its topology in an arbitrary way based on some interaction graph that also changes with time. In particular, a configuration $\mathcal{C}^{(t)}$ of a Network System (NS henceforth) is defined as follows:

\begin{itemize}
\item $G^{(t)}=(V^{(t)},E^{(t)}):$ A network of nodes $V^{(t)}$ and connections $E^{(t)}$ between them at time $t$. This is the underlying network where the dynamic process(es) are performed.
\item $L_V^{(t)}:V^{(t)}\rightarrow S_V:$ A map - possibly depending on time - from the node set $V^{(t)}$ to the node state set $S_V$. This is a set of labels for the nodes.
\item $L_E^{(t)}:E^{(t)}\rightarrow S_E:$ A map - possibly depending on time - from the edge set $E^{(t)}$ to the edge state set $S_E$. This is a set of labels for the edges.
\item $K^{(t)}:V^{(t)}\rightarrow \mathbb{N}:$ A map - possibly depending on time - that attaches a natural number to each node. This corresponds to the name of a node in case nodes are different between each other. For example, if all nodes in the network are anonymous then this map is constant and so we cannot discern between different nodes. 
\item $C^{(t)}=(V^{(t)},H^{(t)}):$ An interaction graph of nodes $V^{(t)}$ and connections $H^{(t)}$ that dictates the allowable communication links between pairs of nodes in $G^{(t)}$.
\end{itemize} 

The states and the topology of the network system are updated by means of algorithms (rules) that are run within each node in each discrete time step. As such, the temporal dynamics of the NS can be formally defined by the pair $(\mathcal{A},\mathcal{C}^{(0)})$:

\begin{itemize}
\item $\mathcal{A}:$ An algorithm (or update rule) that is run on each node in $V^{t}$. For an arbitrary time instance $t$ assume a node $v\in V^{(t)}$ that executes algorithm $\mathcal{A}$. $\mathcal{A}$ has access to all labels of $v$, its edges and their corresponding labels. In addition, it has similar access to adjacent nodes in the interaction graph $C^{(t)}$. However, the execution of $\mathcal{A}$ by node $v$ can only alter the labels of $v$, decide on the existence of edges between $v$ and its adjacent nodes in $C^{(t)}$ and alter their labels as well\footnote{There is no explicit control w.r.t. conflicts on the change of labels of edges by two different nodes.}. Notice that although $\mathcal{A}$ is common to all nodes, heterogeneity is captured since $\mathcal{A}$ has access to the names of the nodes induced by the mapping $K$. 
\item $\mathcal{C}^{(0)}:$ The initial configuration of the NS.
\end{itemize}

The NS can either be seen as a model for dynamic networks (complex systems approach) or as a model of computation (computer science approach). The interaction graph makes NS very flexible. For example, when used as a model for a physical process, the interaction graph can encode basic physical properties of the process like locality of interaction. When NS is used as a model of computation, the interaction graph can encode the computing restrictions posed by the nodes as well as the communication restrictions posed by the network itself.

It may seem peculiar that the NS contains two graphs, the network of agents and the interaction graph. However, this is not only a generalization for making it a general framework, as argued in the following discussion about simulation of other models by NS. In fact, it allows the model to focus on the emergence mechanism casting aside details related to computability (e.g. computing capabilities of agents) as well as capture aspects of such mechanisms related to different timescales. For example, imagine a social network $G$ where edges correspond to long term relations (e.g., friends) while the interaction graph corresponds to contacts during a time period. It may be the case that an agent does not contact some of his friends during a time period or it may also be the case that she contacts other agents with whom she does not have long term relations (no edge in $G$). Relating $G$ with the interaction graph $C$ through local rules can be achieved, for example, based on Structural Balance theory \cite{10.2307/2572978} taking into account different timescales.

With the existence of the interaction graph we manage to capture social dynamics into two different timescales. This explicit division into two timescales can also help set up more realistic models on networks like virus propagation and policy making with respect to a potential pandemic, by forcing massively less contacts in the interaction graph as well as reduction (but no so massive) in the network $G$ (e.g., by closing down schools). One could just as well generalize to a hierarchy of interaction graphs, which models interdependent dynamics on multiple timescales. The interaction graph $C$ can also be used to model assumptions (e.g., stochasticity) w.r.t. the underlying network. In particular, modeling random encounters between agents not connected by an edge in a social network can be done in an explicit way in the interaction graph. In subsection~\ref{ssec:simulation} we elaborate on this point by looking at how NS can simulate other well-known models for dynamic networks.

\subsection{Our results} \label{ssec:results}

In this paper, we restrict the model to capture only deterministic dynamic processes on the topology of the network. In this sense, we assume that the node and edge state sets are empty, that is $S_V=S_E=\emptyset$ and thus the mappings $L_V^{(t)}$ and $L_E^{(t)}$ are irrelevant. This restricted NS cannot capture dynamic processes on the network that change states of the nodes/edges, such as virus propagation, rumor spreading, etc., and thus it cannot also capture cases where such a dynamic process affects the topology of the network as well. It can only capture dynamic processes of the network, that is dynamic processes that change the topology of the network based only on its current topology.

To further simplify exposition, we assume without loss of generality that $\forall t: V^{(t)}=V$, that is the node set does not change. Changes to nodes can be emulated by having all possible nodes in $V$ and simply not allowing any relevant edges in $C^{(t)}$. For example, the Barab{\'a}si--Albert model \cite{Barabasi509}, can be simulated by simply setting $\mathcal{A}$ to add an edge between two nodes in $G^{(t)}$ when there exists such an edge in the interaction graph, which in turn these edges in $C^{(t)}$ are specified based on the preferential-attachment mechanism. We further restrict the model by looking at a particular family of deterministic algorithms $\mathcal{A}$\footnote{In the Barab{\'a}si--Albert model the algorithm is deterministic and the preferential attachment mechanism which is probabilistic in nature is implemented in the interaction graph $C^{(t)}$.} inspired by \cite{DBLP:conf/kdd/ZhangWWZ09} as described below. Notice that as shown in Section~\ref{sec:turing}, even this restricted Network System is Turing-Complete. 

Our goal is to find whether the following algorithm $\mathcal{A}$ stabilizes, or to state with other terms whether the dynamic process converges. In the following algorithm, the edge $e^{(t)}$ is also used as a boolean variable. In particular, when $e^{(t)}=0$ it means that $e^{(t)}\notin E^{(t)}$ (the edge is \textit{non-existent}) while $e^{(t)}=1$ means that $e^{(t)}\in E^{(t)}$ (the edge is \textit{existent}). Let $\alpha$ and $\beta$ be two parameters that correspond to two thresholds, the lower and the upper threshold respectively.

\begin{quote}
Algorithm $\mathcal{A}$\footnote{Another way to express the problem is to let $P^{(t)}$ be the vector with the energy values at time $t$ and to let $\mathcal{C}_{\alpha,\beta}$ be the deterministic process that applies the two thresholds on the energy. Then, the above algorithm can be expressed as:
\[P^{(t)}=\mathcal{E}(G^{(t)})\]
\[G^{(t+1)} = \mathcal{C}_{\alpha,\beta}(P^{(t)},G^{(t)})\]} at node $v:$

\hspace{0.3cm} For each adjacent to $v$ edge $e^{(t)} \in C^{(t)}$ we compute the energy $\mathcal{E}(e^{(t)})$. 

\hspace{0.6cm} Three cases for $\mathcal{E}(e^{(t)})$

\hspace{0.9cm} 1. $\mathcal{E}(e^{(t)})<\alpha$: Then edge $e^{(t+1)}=0$ (non-existent).

\hspace{0.9cm} 2. $\alpha \leq \mathcal{E}(e^{(t)})<\beta$: Then edge $e^{(t+1)}=e^{(t)}$. 
 
\hspace{0.9cm} 3. $\mathcal{E}(e^{(t)})\geq \beta$: Then edge $e^{(t+1)}=1$ (existent). 

\end{quote}

This algorithm is repetitively applied in each node in the network in each time instance $t$. Notice that for each edge $e^{(t)} \not\in C^{(t)}$ in the network, it holds that edge $e^{(t+1)}=e^{(t)}$. The algorithm is the same for all nodes since it does not use $K^{(t)}$. Note that in this algorithm the user needs only to define the energy, and in fact in the rest of the paper this is what we are going to do with some notable exceptions. Note that when $\alpha=\beta$, then the system has in effect no memory and it is in a sense an initial value problem. When $\alpha<\beta$, then the system is equipped with memory, which is not explicit, since the status of edges in maintained from the previous time point in case the energy of the edge falls in the range $[\alpha,\beta)$. This extra memory of the network system renders its analysis more complicated as we will see in Section~\ref{sec:convergence}.

We say that the algorithm \textit{converges} or \textit{stabilizes} when $\exists t: G^{(t)}=G^{(t+1)}$, which means that no change has taken place on the network. The time for the algorithm to converge is equal to the minimum such $t$, if it exists.
A discussion is in order with respect to the stabilizing condition of this algorithm. The user, apart from defining the local rule, must also define $C^{(t)}$. This stabilizing condition makes the silent and plausible assumption that the user does not behave as an adversary to the algorithm but in fact she tries to improve on the algorithm by trying to guarantee convergence, among others. In this sense, there is no meaning from the side of the algorithm to make an iteration without changing the graph. For example, if $C^{(t)}$ is the null graph then the algorithm always stabilizes. Notice that the stabilizing condition can be also related to other conditions based on the specified goal of the dynamic process; for example, the process may stabilize as soon as a clique of a particular size occurs in the network.

The following lemma describes the stabilizing condition for the case where $C^{(t)}=C$, that is the interaction graph is constant and does not change. This choice of $C^{(t)}$ is imposed by the fact that careful choices of $C^{(t)}$ can make almost any algorithm loop for ever, while for this case ($C^{(t)}=C$) it stabilizes (an example is shown in \ref{sssec:dependence}). 

\begin{lemma}\label{lem:end_cond_general}
Given that the interaction graph is constant $(\forall t: C^{(t)}=C)$ and for some $t'$ it holds that $G^{(t'-1)}=G^{(t')}$, then for any $t>t'$ it holds that $G^{(t)}=G^{(t')}$.
\end{lemma}
\begin{proof}
Since $C^{(t)}=C$, the update rule applies in each time step only on these pairs of nodes defined by $C$. Let $G^{(t'-1)}=G^{(t')}$.  It follows that no change took place at time $t'$. Since the algorithm is deterministic and the interaction graph does not change, no change will happen at the time step $t'+1$ and the lemma follows.
\end{proof}

In the case where the update rule does not converge, then since $V$ stays the same throughout the process, there are finitely many graphs with that many nodes. Thus, there exists some time point $t$ where the network from a previous step (say $t'<t-1$) has been reached. That is, $G^{(t')} = G^{(t)}$. We say that the cycle size of this process is $t-t'$ for the maximum $t'$, since the graphs $G^{(t')}, G^{(t'+1)},...,G^{(t-1)}$ will periodically repeat.

In the following, in order to demonstrate the potential applicability of this restricted network system we provide a small example inspired by structural balance theory \cite{10.2307/2572978} of networks with friendship and enmity relations (see \cite{ANTAL2006130}). Assume that the network of agents corresponds to people (nodes) with friendship relations (edges). Each agent $v$ is defined by how nice she is $n(v)$, how extrovert she is $x(v)$ as well as by the set of her enemies $\mathcal{EN}(v)$. We assume no labels on nodes and edges since these properties are not time-dependent while they can also be represented by the interaction graph. We wish to design a model that captures how friendships change in this setting when enemies do not change\footnote{Note that no changes are made on the network w.r.t. enemies that is not exactly compatible with structural balance theory on  networks.} as well as when friendships are lost in case of very few common friends and friends are made in the opposite case.

In this respect we define the social dynamics as follows: (a) if two agents $u$ and $v$ are enemies then they never become friends (no edge connects them in $C^{(t)}$ for any $t$, and thus they never connect in $G^{(t)}$ too), (b) else if two agents $u$ and $v$ are not connected by an edge (they are not friends) but their distance is at most the sum of their extrovertedness then they interact - that is, if at time $t$ it holds that $1<dist(u,v)\leq x(u)+x(v)$ then there is an edge $(u,v)$ in the interaction graph $C^{(t)}$, (c) if two agents $u$ and $v$ interact and their total confidence - energy - (the sum of the confidence of both agents) is at least as big as the threshold $\beta$ then they become friends and an edge is established. We define the \textit{confidence} of an agent $v$ as $g(v)=n(v)\sum_{u\in N(v)}{n(u)}$, so that it doesn't only depend on their niceness, but also on the niceness of their friends, (d) ceasing a friendship is more interesting; not only should the confidence of the two agents become less than the threshold $\alpha$ but they should also not have a lot of common friends that could balance things out. The first criterion is captured by the energy definition, while the second is captured in the interaction graph. More specifically, if two agents are connected by an edge in $G^{(t)}$, then they are only connected by an edge in $C^{(t)}$ if their number of common friends is at most $\gamma$. This $\gamma$ parameter could even depend on the time $t$ and the specific agents, but for demonstration purposes we just assume it to be a global parameter of the system, say $\gamma=10$.

For each node $v$ her extrovertedness, her enemies and her friends affect the interaction graph $C^{(t)}$ for each $t$. Since the energy is a simple sum and thus it is proper, and since confidence $g$ is a degree-like function (see \ref{ssec:Local_Rules} for its definition) it follows that Theorem~\ref{thm:conv_only} applies and thus the social network stabilizes. Theorem~\ref{thm:conv_only} allows us to add any rules w.r.t. the interaction graph $C^{(t)}$ like imposing a maximum number of friends, allowing for additional random connections (to achieve long-range interaction), etc. Similarly, we can change the definition of confidence as well as the definition of energy and still prove convergence as far as the assumptions of Theorem~\ref{thm:conv_only} are valid. Finally, the interaction graph allows us to remove the assumption of permanence on enmity by allowing edges after certain conditions have been satisfied, thus dynamically changing the set $\mathcal{EN}(v)$ for each node $v$.

We are ready now to move to the exposition of the results related to the restricted version of NS. First, in Section~\ref{sec:min} to exhibit the expressiveness in terms of emergent behavior we provide a very simple rule that, when applied on a graph, efficiently computes its $\alpha$-core and its $(\alpha-1)$-crust \cite{DBLP:conf/gd/BatageljMZ99}. 
In Section~\ref{sec:convergence} we provide proofs of convergence for a general class of energies. We also provide guarantees on the speed of convergence for an important subclass of the aforementioned class. Then, we study more general definitions of energies and show that in this case the algorithm does not converge. Our counterexamples resolve an open question of Zhang et al. \cite{DBLP:conf/kdd/ZhangWWZ09} as well, concerning whether a certain dynamic process to enhance network communities converges. 
In Section~\ref{sec:shortest} we show that this restricted NS can solve two versions of the shortest path problem by providing appropriate definitions of energy. In Section~\ref{sec:turing} we prove that the network system with a particular local rule is Turing-Complete. Finally, we conclude the paper with a discussion in Section~\ref{sec:discussion}.

\subsection{Related Work} \label{ssec:related}

As previously argued, the main work on dynamic networks stems either from computer science or from complex systems and it is inherently interdisciplinary in nature. In the following we just touch on these two approaches since trying to analyze the vast literature that spans all scientific fields would be a fool's errand. We also highlight all results that are directly related to ours.

Starting from the computer science realm, a very nice review of the dynamic network domain is in \cite{DBLP:journals/cacm/MichailS18} where a discussion is made with respect to how the Computer Science community attacks the introduction of the time dimension on networks. They propose a partitioning of the current literature into three subareas: Population Protocols, Powerful Dynamic Distributed Systems and Temporal Graphs. 

Population Protocols were introduced in \cite{Angluin2006, DBLP:journals/dc/AngluinAER07} and are still an active research topic (e.g., \cite{DBLP:journals/tcs/ChatzigiannakisMNPS11, DBLP:journals/acta/EsparzaGLM17}), where anonymous agents with only a constant amount of memory available \textit{randomly} interact with each other and are able to compute functions that do not seem possible at first, like leader election. Population protocols are a typical example of a passive model, and are related to chemical reaction networks \cite{DBLP:journals/nc/DotyZ18}.  There are also many active models, like Programmable Matter \cite{DBLP:journals/corr/abs-1807-10461, DBLP:conf/icalp/MichailSS17, DBLP:journals/cacm/MichailS18}, which models any type of matter that can algorithmically change its physical properties. In this case the goal \textit{is to check whether a desired configuration can be reached}, and if so, to provide a \textit{method to achieve it as fast as possible}. What distinguishes such active models from passive ones is the fact that movements of agents are controlled by the algorithm. On the other hand, on passive models the movements occur by the environment, and the algorithm can only decide whether to accept them or not. A similar notion to population protocols are graph relabeling systems \cite{graphrelabelling99}, where one chooses a subgraph and changes it based on certain rules. These systems are usually applied on static graphs but they have also been applied to dynamic graphs as well \cite{10.1007/978-3-642-11476-2_11}. The focus of this work is to \textit{impose properties on the dynamic graphs so that a particular computation is possible}, where the dynamic graphs are assumed to be adversarial. In general, graph relabeling systems aim at the abstraction of communication whereas the main goal of the population protocols is the abstraction of the dynamicity of the network. Finally, another very recent model, called the Network Constructors model \cite{Michail2016,Michail2017NetworkCA}, studied what stable networks can be constructed by a population of finite-automata that interact \textit{randomly} like molecules in a well-mixed solution and can establish bonds with each other according to the rules of a common small protocol. Their goal is to come up with protocols that allow the generation of particular networks (like paths, stars, rings and more complex networks). Apart from the stochastic interaction another difference to our approach is that they focus on what can be done given the knowledge that each agent has of its environment (e.g., its knowledge of the degree) leading to certain interesting complexity results. Apart from stabilisation, they also provide complexity results with respect to termination of the protocol. There are also a few ad-hoc results that aim at particular problems using mainly simple rules likes averaging or plurality. For example, in \cite{10.5555.3039686.3039745} it is shown how a simple averaging rule can lead to the identification of communities within large classes of networks. 

One of the first models for Powerful Dynamic Distributed Systems was developed in \cite{O'Dell:2005:IDH:1080810.1080828} which in fact correspond to dynamic networks with Turing-complete nodes. The main questions asked are algorithmic in nature and usually make the assumption that the dynamic network is either adversarial or stochastic. Some authors have also considered particular dynamicity patterns either stochastic \cite{10.1007/978-3-540-70575-8_11,doi:10.1137/090756053} (usually termed as Markovian Evolving Graphs) or deterministic \cite{FLOCCHINI201353} (periodic). However, in most of the cases the usage of dynamicity patterns is used more for the sake of analysis rather than for the sake of modeling the dynamic process that changes the network itself.

Finally, part of the literature of computer science on dynamic networks is about the structural and algorithmic properties of dynamic graphs. To this end, researchers developed frameworks that lead towards a Temporal Graph Theory. Time-Varying Graphs (TVGs) \cite{doi:10.1080/17445760.2012.668546} are a general framework that integrates concepts, formalisms and results related to graphs that carry time information. Note that TVG do not dictate the process that change the graph, this process is exogenous to TVG. Similarly, stream graphs and stream links \cite{Latapy2018} is a more recent framework for modeling such interactions over time. In each case, the temporal graphs are either controlled by the environment, by an adversary or are stochastic. For various results in this setting see \cite{Michail2015}.

From the previous discussion, it is clear that in the computer science domain the network dynamics are: 

\begin{itemize}
 \item \textbf{Adversarial:} The dynamics are governed by an adversary that chooses where and when changes will take place in the network. Usually certain restriction are imposed in order to make the analysis amenable (e.g., connectivity assumptions).
 \item \textbf{Stochastic:} The dynamics are governed by a stochastic process where changes are randomly chosen based on a distribution (e.g., the preferential attachment model for generating random networks). Apart from the choice for the stochastic model (e.g. markovian) there are also restrictions to make the analysis amenable (e.g., connectivity assumptions).
\end{itemize}

In the study of complex systems, one of the tools used for modeling and analysis is cellular automata. Cellular automata use simple update rules that give rise to interesting patterns \cite{DBLP:journals/corr/ArrighiD16, DBLP:journals/corr/abs-1711-10920}. A great example is one of the most simple cellular automata, Rule 110, which has been shown by Cook to be Turing-Complete \cite{DBLP:journals/compsys/000104}. However, in \cite{Ilachinski2018} starting from cellular automata, the authors introduced Structurally Dynamic Cellular Automata (SDCA) that couples the topology with the local site 0/1 value configuration. SDCA can be seen as a special case of NS. They formalize this notion and move to an experimental qualitative analysis of its behaviour for various parameteres. They left as extensions purely structural CA models (no value configurations as it holds in the restricted NS studied in this paper), probabilistic transition rules and extensions to more realistic models that take into account other quantities (e.g., length of edges). 
Network automata \cite{10.1007/3-540-19444-4_10} are cellular automata that allow for arbitrary but static connections between cells (automata). There are also models for dynamic connections where one can model at the same time dynamics on the states of the nodes as well as dynamics of its topology. One such preliminary example is graph automata \cite{tkm02-graphrew} that allows for dynamic connections by changing states as well as topology based on simple graph rewriting rules. In \cite{Sayama2009}, a general framework is described for modeling complex networks with dynamics on (functional) and of (topological) the network that may as well be coupled through the use of graph grammars. Simulations were used to explore the behavior for certain rules and ranges of parameters in order to show its rich behavior. In addition, a model for coupling topology with functional dynamics was given in \cite{doi:10.1142/S0219525911003050}. Their formulation is similar to ours and they extend it to stochastic rules as well as to restricted topologies. They also provide examples of their model, termed Functional Network Automata (FNA), to exhibit its expressiveness in a biological system. All their results are qualitative based on experimentation. Various additional versions of such models have been introduced - e.g., for automata that allow for the introduction of new cells see \cite{amod_2019}.
Based on this discussion, one would expect that our model would also be named Network Automata.  Wolfram \cite{2002:NKS:513738} uses the term network system for network automata that change their topology. We prefer to use this term instead of network automata to highlight our different approach. Finally, similarly to us, Silk et al. \cite{journals/ieeetnse/SilkHG16} designed local rules that allow a network to reach a desired steady-state degree distribution with proofs based on differential equations under certain assumptions.

Work has been also done on the predictability of emergent phenomena on cellular automata. In \cite{PhysRevE.73.026203} the authors use coarse-graining in order to study the cellular automaton on different scales by reducing the degrees of freedom. In this way, they manage to approximate the complex process by a simpler one with similar characteristics, whenever this is possible of course. This was extended in \cite{costa2019coarse} for partitioned cellular automata that are more appropriate in simulating linear and nonlinear problems in physics. 

There are also many other results in different scientific fields, ranging from communication \cite{6157579} to ecology \cite{sciencebirdformation1970}, on the design of local rules that give rise to a desired global behavior and their properties.  To name just a few, Valentini et al. proposed a global-to-local design methodology to compose heterogeneous swarms for self-organized task allocation \cite{ValHamDor2016:techreport-002}. In a similar spirit to ours, the author in \cite{saidani2004} looked at modular robots as an evolving network with respect to their topology alone. The author defined a graph topodynamic which in fact is a local program common to all modules of the robot with a specific target at mind (it is reminiscent of programmable matter although the questions asked are different). A program is provided to turn a tree topology to a chain topology while leaving as an open problem whether this truly converges. A game theoretic approach has also been adopted by allowing the dynamics to be governed by the interaction between the agents (nodes) that are rationally behaved based on their utility function. Here, the notion of convergence/stability is related to Nash equilibria. For example, in \cite{10.1086/428716} the prisoner's dilemma has been applied on an adaptive network and extensive experimental evaluation has been performed. 
To the best of our knowledge, the main tool for getting results in the complex systems domain is simulation.

Finally, in a similar spirit, a vast literature exists on multi-agent network systems where a set of agents iteratively interact over a communication network in order to achieve a particular goal like consensus or the optimization of an objective function. This literature spans mainly the case where the communication network is fixed. However, there is a lot of work that has been done in the case where state dynamics (dynamics on the network) and network dynamics (dynamics of the network) exist at the same time. The analysis of these systems is either based on the assumption of uncoupled (or loose coupling) network and state dynamics or in their strong coupling by assumming certain properties (like connectivity or communication symmetry). For example, in \cite{doi:10.1137/18M1217681,8814834} the author proposes a sequential optimization framework to prove whether such a dynamical system admits a Lyapunov function that allows us to prove stability. Chazelle has introduced influence systems \cite{DBLP:journals/siamcomp/Chazelle15} as a general framework for multiagent dynamics. In this setting there is a set of agents, each of which is modelled by a point in $\mathcal{R}^d$, while a communication graph governs their interaction. The edges of the communication graph are time-dependent and its edges are determined by first-order sentences over the reals. In the case of the diffusive influence systems where agents move to the convex region defined by their neighbors, Chazelle manages to prove almost always periodic behaviour under random perturbations while in the bidirectional case it always converge to a fixed point. The main goal of this model is to capture the complexity emanating from different communication patterns keeping the algorithms for the local rules very simple.

\subsection{Relation to Other Models}\label{ssec:simulation}

Our model is a general and expressive framework for analyzing dynamics on graphs. This is strengthened by the fact that the NS can simulate various other models as we discuss informally and briefly in the following. The Barab{\'a}si--Albert model \cite{Barabasi509}, can be simulated by simply setting $\mathcal{A}$ to add an edge between two nodes in $G^{(t)}$ when there exists such an edge in the interaction graph, where in turn these edges in $C^{(t)}$ are specified based on the preferential-attachment mechanism which is stochastic. Similarly, the Watts-Strogatz model \cite{Watts1998Collective} can be simulated by starting with a regual ring lattice and then in each step set the appropriate edges stochastically in the interaction graph to rewire them (the local algorithm).

NS resembles Functional Network Automata (FNA) \cite{doi:10.1142/S0219525911003050} but generalize it, since it allows for different interactions in each time step due to the interaction graph, while FNA forces every possible pair of interactions to take place. In fact, we show in Section~\ref{sec:convergence} that for the particular family of local rules Network Automata as defined in \cite{doi:10.1142/S0219525911003050} always converge. Influence systems are also a special case of NS since in Influence Systems the agents are points in the $d$-dimensional Euclidean space $\mathcal{R}^d$ while in our case apart from such properties they also have interelationships captured by the network $G^{(t)}$. 
From the distributed computing perspective, population protocols \cite{Angluin2006} can also be simulated since in this case the network $G^{(t)}$ coincides with the interaction graph $C^{(t)}$ and can either be a clique or stochastically choose the edges under a certain connectivity assumption. Network constructors \cite{Michail2016} can also be simulated by NS by simply allowing the network $G^{(t)}$ to be the network that evolves to the target network (e.g., a path graph) while the interaction graph contains a stochastically chosen edge (or many as a generalization) for which the local rule will apply. It is a simple exercise to trasfer the local rules in \cite{Michail2016,Michail2017NetworkCA} in our setting.

\section{A Simple Example of Emergence} \label{sec:min}

In this section we define the energy as $\mathcal{E}(u,v)=\min\{d_{G^{(t)}}(u),d_{G^{(t)}}(v)\}$, that is we choose the minimum degree of its two nodes as the new energy of the edge. In the following, we write $C^{(t)}=K_n$ to denote that the interaction graph is a clique. In this sense, the algorithm is always applied on all edges $(\forall t: C^{(t)}=K_n)$. To investigate convergence we take cases for the nodes in $G^{(t)}$ for arbitrary $t$. We prove that with this definition of energy the process converges in $O(n)$ steps, and provide a matching lower bound on the number of steps needed. In addition, we show that when the algorithm stabilizes the network has computed in effect the $\alpha$-core \cite{DBLP:conf/gd/BatageljMZ99} (implicitly this notion was defined in \cite{SzekeresW}) of the network $G^{(0)}$, which can be seen as emergent behavior of this simple algorithm.

\begin{theorem}
The process always converges in $O(n)$ steps, and there are cases where $\Omega(n)$ steps are required.
\end{theorem}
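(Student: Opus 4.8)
The plan is to prove the two bounds by different means: a monotonicity argument for the $O(n)$ upper bound and a single explicit family for the $\Omega(n)$ lower bound. I work in the regime relevant to peeling toward the $\alpha$-core, where the upper threshold $\beta$ is large enough that case~3 never fires (e.g. $\beta > n-1$, since no $\min$ of degrees can reach it); then the process is purely \emph{subtractive}. In particular the edge sets are nested, $E^{(t+1)}\subseteq E^{(t)}$, and every degree $d_{G^{(t)}}(v)$ is non-increasing in $t$.

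For the upper bound, the key observation is that the fate of each edge is governed entirely by which vertices still have high degree. Since $C^{(t)}=K_n$, the rule is applied to every edge, and an edge $(u,v)\in E^{(t)}$ is retained at time $t+1$ exactly when $\min\{d_{G^{(t)}}(u),d_{G^{(t)}}(v)\}\geq\alpha$, i.e. when both endpoints lie in $D^{(t)}:=\{v\in V : d_{G^{(t)}}(v)\geq\alpha\}$. Thus $E^{(t+1)}=\{(u,v)\in E^{(t)} : u,v\in D^{(t)}\}$. Because degrees never increase, a vertex that once falls below $\alpha$ stays below it, so $D^{(t+1)}\subseteq D^{(t)}$; the sets $D^{(t)}$ form a nested decreasing chain of subsets of the $n$-element set $V$. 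Such a chain strictly shrinks at most $n$ times and is constant thereafter. I would then verify that stabilization of $D$ forces the graph to freeze one step later: if $D^{(t)}=D^{(t+1)}$ then $E^{(t+2)}=\{(u,v)\in E^{(t+1)}:u,v\in D^{(t+1)}\}=E^{(t+1)}$, since every edge of $E^{(t+1)}$ already has both endpoints in $D^{(t)}=D^{(t+1)}$. Hence $G^{(t+1)}=G^{(t+2)}$ and the process converges after $O(n)$ steps.

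For the matching lower bound I would exhibit one bad family: the path $P_n$ on $v_1,\dots,v_n$ with $\alpha=2$. Its two endpoints have degree $1<\alpha$, so the two terminal edges are deleted; the path then shortens by one vertex at each end and the new endpoints again have degree $1$. Each step therefore peels exactly two edges, and since the $2$-core of a tree is empty, the process needs $\Theta(n)$ steps to remove all $n-1$ edges. This gives the $\Omega(n)$ bound and shows the upper bound is tight.

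The only real subtlety is the choice of progress measure. Tracking the number of edges is tempting but yields only $O(n^2)$, since a step may delete as few as a constant number of edges (the path example makes this concrete). The linear bound instead comes from tracking the vertex set $D^{(t)}$, whose monotonic shrinking is what caps the number of rounds at $n$; the remaining care is handling the one-step gap between $D^{(t)}$ stabilizing and $G^{(t)}$ stabilizing.
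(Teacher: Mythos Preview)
Your lower bound via the path $P_n$ with $\alpha=2$ is exactly the paper's example. Your upper-bound argument is correct for the regime you declare, but that declaration is the gap: the theorem is for arbitrary $\alpha\le\beta$, not only $\beta>n-1$. When $\beta\le n-1$, case~3 can fire, edges are created between high-degree vertices, and your key structural claims collapse: the process is not purely subtractive, degrees are not monotone, $E^{(t+1)}\subseteq E^{(t)}$ can fail, and your formula $E^{(t+1)}=\{(u,v)\in E^{(t)}:u,v\in D^{(t)}\}$ omits the newly created edges. So the upper bound as written does not cover the stated theorem.

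The fix is short, and your progress measure actually survives. Even for general $\beta$, the chain $D^{(t)}=\{v:d_{G^{(t)}}(v)\ge\alpha\}$ is still decreasing: any $v$ with $d_{G^{(t)}}(v)<\alpha$ has $\min\{d_{G^{(t)}}(u),d_{G^{(t)}}(v)\}<\alpha$ for every $u$, so $v$ becomes (and stays) isolated, hence $v\notin D^{(t+1)}$. What breaks is only your freezing step. The paper handles this directly: once every non-isolated vertex has degree $\ge\alpha$, no existing edge can be deleted (all energies are $\ge\alpha$); vertices of degree $\ge\beta$ become a clique in one step while vertices of degree in $[\alpha,\beta)$ keep exactly their old neighbourhoods; the resulting degree partition is then unchanged, so $G^{(t+1)}=G^{(t+2)}$. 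Appending this one-step clique-formation phase to your peeling argument gives the full $O(n)$ bound for all $\beta$.
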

\begin{proof}
For all nodes $v\in V$, for any $t$, such that $d(v)<\alpha$ it holds that they become isolated (their degree is zero) in the next step. This is because $min\{d(u),d(v)\}\leq d(v)<\alpha$, for any $d(u)$. As soon as a node becomes isolated, it will be isolated forever since again $d(v)=0<\alpha$.

There can be $O(n)$ consecutive rounds where at least one node becomes isolated; after that we end up with a graph $G^{(t)}$ for which there are two sets of nodes: the set $V^{(t)}_0$ which contains all nodes with degree $0$ as well as the set $V^{(t)}_{\alpha}$ that contains all nodes with degree at least $\alpha$.

The question is how many rounds are needed for $V^{(t)}_{\alpha}$ to settle down. No edge with both endpoints in $V^{(t)}_{\alpha}$ will cease to exist at time $t+1$ since this would mean that some node in $V^{(t)}_{\alpha}$ has degree less than $\alpha$ which is a contradiction. Thus, the degrees do not decrease.

We notice that all nodes with degree at least $\beta$ will form a clique at time $t+1$. On the other hand, nodes of $V^{(t)}_{\alpha}$ with degree less that $\beta$ will not form any new edge, effectively having the same neighbors at time $t+1$. The exact same reasoning gives $G^{(t+1)} = G^{(t+2)}$ and thus we have convergence in at most 1 step, due to Lemma~\ref{lem:end_cond_general}.

The above discussion gives an upper bound of $O(n)$ time steps for convergence. Let us give a simple matching lower bound. If $G^{(0)}$ is a path, $\alpha =2$ and $\beta >2$, then the graph converges in $\lfloor \frac{n}{2} \rfloor$ time steps, since no edge will ever be created, and at every time step, the two ends of the remaining path become isolated.
\end{proof}

It is interesting to notice the similarity of our process, and the process of acquiring the $\alpha-core$ (or complementary the $(\alpha-1)-crust$) of a simple undirected graph (implicit in a lemma by \cite{SzekeresW}, however we use the more convenient statement by \cite{DBLP:conf/gd/BatageljMZ99}).

\begin{definition}
The $\alpha$-core $H$ of a graph $G$ is the unique maximal subgraph of $G$ such that for the degree of every node $u\in H$ it holds that $deg_H(u)\geq \alpha$. All nodes not in $H$ form the $\alpha-1$ crust of graph $G$.
\end{definition}

The $\alpha$-core is a notion that plays an important role in studying the clustering structure of social networks. Batagelj et al. \cite{DBLP:conf/gd/BatageljMZ99} proved that the following process efficiently computes the $\alpha$-core of a graph:
\begin{lemma}
Given a graph $G$ and a number $\alpha$, one can compute $G$'s $\alpha$-core by repeatedly deleting all nodes whose degree is less than $\alpha$.
\end{lemma}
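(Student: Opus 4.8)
The plan is to show that the subgraph $R$ obtained once the deletion process halts is exactly the $\alpha$-core $H$, by proving the two inclusions $R \subseteq H$ and $H \subseteq R$. First note that the process is well defined and terminates: each round either deletes at least one node or leaves the graph unchanged, and since $V$ is finite it must reach a stable $R$ after finitely many rounds. By the halting condition, no surviving node has degree below $\alpha$; that is, $\deg_R(u) \geq \alpha$ for every $u \in R$, so $R$ is a subgraph of $G$ satisfying the defining degree property of the $\alpha$-core.

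For the inclusion $R \subseteq H$ I would invoke the maximality (and uniqueness) of $H$. The key observation is that the degree property is closed under union: if $H_1$ and $H_2$ are two subgraphs in which every node has degree at least $\alpha$, then in $H_1 \cup H_2$ every node retains all of its neighbors from whichever of $H_1, H_2$ it belonged to, so its degree is still at least $\alpha$. Hence the union of all subgraphs with the degree property again has the property, and being the largest such subgraph it must coincide with $H$. In particular every subgraph satisfying the property is contained in $H$; applying this to $R$ gives $R \subseteq H$.

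The main work, and the step I expect to be the real obstacle, is the reverse inclusion $H \subseteq R$, i.e.\ that the process never deletes a vertex of the $\alpha$-core. I would argue by contradiction through the natural invariant ``all of $H$ is still present''. Suppose some node of $H$ is eventually removed, and consider the first round in which this happens. At the start of that round no node of $H$ has yet been deleted, so every $u \in H$ still has present all of its neighbors within $H$. Since $u$ belongs to the $\alpha$-core we have $\deg_H(u) \geq \alpha$, so $u$ currently has at least $\alpha$ surviving neighbors and therefore cannot be among the nodes deleted in this round, contradicting the choice of the round. Hence no vertex of $H$ is ever removed and $H \subseteq R$.

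Combining the two inclusions yields $R = H$, so the terminal graph of the ``repeatedly delete every node of degree less than $\alpha$'' process is precisely the $\alpha$-core, as claimed. The only delicate point is making the invariant argument robust to the fact that a single round may delete several vertices simultaneously; phrasing it in terms of the \emph{first round} that touches $H$, rather than the first individual deletion, handles this cleanly and keeps the proof independent of whether deletions are performed one-by-one or in bulk.
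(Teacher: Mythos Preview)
Your proof is correct and follows the standard argument for this well-known fact: termination by finiteness, $R\subseteq H$ by maximality of the $\alpha$-core (via closure of the degree property under unions), and $H\subseteq R$ by the ``first bad round'' invariant showing no core vertex can ever be deleted. There is nothing to repair.

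Note, however, that the paper does not actually prove this lemma. It is quoted as a known result of Batagelj et al.\ (with the underlying notion implicit already in Szekeres--Wilf), and the paper simply cites it before using it to relate its own dynamic process to $\alpha$-core computation. So there is no ``paper's own proof'' to compare against; your argument supplies a self-contained proof where the paper relies on the literature.
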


What makes our process and the process of detecting the $\alpha$-core different is the fact that new edges can emerge in our process. However, we can disallow this by setting $\beta\geq n$.

\begin{lemma}
When $\mathcal{E}(u,v)=\min\{d_{G^{(t)}}(u),d_{G^{(t)}}(v)\}$ and $(\forall t: C^{(t)}=K_n)$ the dynamic process for any value of $\alpha$ and $\beta\geq n$ is essentially the same process with the one for detecting the $\alpha$-core. Furthermore, all isolated nodes in our process form the $(\alpha-1)$-crust of $G^{(0)}$, while the remaining graph forms the $\alpha$-core.
\end{lemma}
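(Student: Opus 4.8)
The plan is to use the hypothesis $\beta \ge n$ to collapse the three-case rule of Algorithm~$\mathcal{A}$ into a pure edge-deletion process, and then to match that process round-by-round with the node-deletion process of Batagelj et al., so that convergence of ours corresponds to termination of theirs and the fixed point is read off directly.

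First I would show that $\beta \ge n$ forbids edge creation. In a simple graph on $n$ vertices every degree is at most $n-1$, so for every pair $(u,v)$ we have $\mathcal{E}(u,v)=\min\{d_{G^{(t)}}(u),d_{G^{(t)}}(v)\}\le n-1<\beta$. Hence Case~3 never fires, no non-existent edge is ever turned existent, and the only edges affected are those lost through Case~1. Consequently an existing edge $(u,v)$ is deleted at a step precisely when $\min\{d_{G^{(t)}}(u),d_{G^{(t)}}(v)\}<\alpha$, that is, when at least one endpoint has degree below $\alpha$.

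Next I would recast this as node isolation. If $d_{G^{(t)}}(v)<\alpha$, then every incident edge $(v,u)$ satisfies $\min\{d_{G^{(t)}}(v),d_{G^{(t)}}(u)\}\le d_{G^{(t)}}(v)<\alpha$, so all edges at $v$ vanish at once and $v$ becomes isolated in $G^{(t+1)}$ (and stays so forever), whereas an edge both of whose endpoints have degree $\ge\alpha$ survives. Reinterpreting ``delete node $v$'' in the synchronous core process (remove all current degree-$<\alpha$ nodes simultaneously each round) as ``remove all edges at $v$ and keep $v$ as an isolated vertex'', the edges removed in one such round are exactly the edges incident to some degree-$<\alpha$ node, which is identical to the set removed in one step of our process. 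I would make this precise by induction on $t$: assuming the edge set of $G^{(t)}$ equals that of the core-process graph $G_c^{(t)}$, the induced degrees coincide, hence both processes remove the same edges, hence the edge sets agree at time $t+1$. This yields the ``essentially the same process'' claim, the only difference being that we retain isolated vertices on $V$ rather than discarding them. By the convergence already established for this energy, the process reaches a fixed point in $O(n)$ steps, equivalently exactly when the core process halts; at that point no non-isolated vertex has degree $<\alpha$, so the subgraph induced by the non-isolated vertices has minimum degree $\ge\alpha$, and the Batagelj et al.\ lemma identifies it as the $\alpha$-core, its complement being by definition the $(\alpha-1)$-crust.

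The point to handle carefully is the bookkeeping between ``isolated'' and ``deleted'' vertices: a node that keeps degree $\ge\alpha$ while all its neighbours drop out becomes isolated one round earlier in the edge-removal view than it is deleted in the node-removal view. This causes no discrepancy in edge sets, since such a node carries no edges either way, and it is precisely why I phrase the correspondence in terms of edge sets and non-isolated vertices rather than raw vertex sets. For the same reason I would restrict attention to $\alpha\ge1$, so that ``isolated'' and ``outside the $\alpha$-core'' coincide; the case $\alpha=0$ is trivial, as then no edge is ever removed and the $0$-core is the whole graph.
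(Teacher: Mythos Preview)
Your proposal is correct and follows essentially the same approach as the paper: use $\beta\ge n$ to rule out Case~3, observe that an edge is dropped iff an endpoint has degree below $\alpha$, and identify this with the iterative $\alpha$-core deletion process. Your write-up is in fact more careful than the paper's (explicit induction on $t$, the bookkeeping remark about ``isolated'' versus ``deleted'', and the $\alpha=0$ corner case), but the underlying argument is the same.
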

\begin{proof}
First of all, even if a node connects with any other node, its degree will be $n-1$. Thus, it holds that $min\{d(u),d(v)\}\leq n-1 < \beta$. This ensures that no edge will ever be formed by the dynamic process.

As far as existing edges are concerned, both processes delete edges where at least one endpoint $v$ has degree less than $\alpha$, since $min\{d(u),d(v)\}\leq d(v)<\alpha$, for any $d(u)$. By the same reasoning, these nodes (which, by definition, belong to the $(\alpha-1)$-crust) remain isolated forever. Furthermore, edges with both endpoints having degree at least $\alpha$ will be preserved as the minimum of their degrees will still be at least $\alpha$.
\end{proof}

\section{Convergence} \label{sec:convergence}

In this section we study the convergence of the algorithm $\mathcal{A}$. First, in \ref{ssec:degree} we study the case of $\alpha=\beta$ and prove convergence as well as provide bounds on the convergence speed. In \ref{ssec:Local_Rules} we prove convergence for a more general energy function and for the case where $\alpha<\beta$ but do not make any statements on the speed of convergence. In \ref{ssec:common_neighborhood} we prove that when the energy function takes into account nodes at distance $2$ then we cannot always guarantee convergence and we use this result to disprove a conjecture of Zhang et al. \cite{DBLP:conf/kdd/ZhangWWZ09} while also discussing how convergence depends on the choice of the interaction graph $C^{(t)}$.

\subsection{Symmetric Non-Decreasing Function on the Degrees} \label{ssec:degree}

We study the case where $\alpha=\beta$, as we are not only able to prove convergence of the process, but we also prove an upper bound on the number of steps needed for convergence.

We define the energy of an edge $(u,v)$ to be $\mathcal{E}(u,v)=f(d_{G^{(t)}}(u),d_{G^{(t)}}(v))$, where $f$ is a $\textit{proper}$ (symmetric and non-decreasing in both variables) function. The rule is always applied on all edges $(\forall t: C^{(t)}=K_n)$.

For the graph $G^{(t)}$, let $R^{(t)}(u,v)$ be an equivalence relation defined on the set of nodes $V$ for time $t$, such that $(u,v)\in R^{(t)}$ if and only if $d_{G^{(t)}}(u)=d_{G^{(t)}}(v)$. The equivalence class $R^{(t)}_i$ corresponds to all nodes with degree $d(R^{(t)}_i)$, where $i$ is the rank of the degree in decreasing order. This means that the equivalence class $R^{(t)}_1$ contains all nodes with maximum degree in $G^{(t)}$. Apparently, the maximum number of equivalence classes is $n=|V|$, since the degree can be in the range $[0,n-1]$. Let $|G^{(t)}|$ be the number of equivalence classes in graph $G^{(t)}$. 

Before moving to the proof, we give certain properties of the dynamic process that hold for all $t\geq 1$, that is they hold after at least one round of the process (initialization). These properties will be used in the proof for convergence. First, we show that nodes have an implicit hierarchy with respect to degrees. 

\begin{prop}\label{prop:Monotonicity}
If $d_{G^{(t)}}(u) \geq d_{G^{(t)}}(w)$, then $d_{G^{(t+1)}}(u) \geq d_{G^{(t+1)}}(w)$, for all $t\geq 1$.
\end{prop}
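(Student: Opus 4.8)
The plan is to exploit the fact that, because $\alpha=\beta$ and $C^{(t)}=K_n$, Algorithm $\mathcal{A}$ collapses into a pure threshold rule: case~2 never triggers, so for every pair of distinct nodes $u,x$ the edge $(u,x)$ belongs to $G^{(t+1)}$ if and only if $f(d_{G^{(t)}}(u),d_{G^{(t)}}(x))\ge\alpha$. This immediately rewrites the new degree as a count,
\[
d_{G^{(t+1)}}(u)=\bigl|\{x\in V\setminus\{u\}: f(d_{G^{(t)}}(u),d_{G^{(t)}}(x))\ge\alpha\}\bigr|,
\]
which turns the statement into a comparison of two such cardinalities.

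Next I would invoke monotonicity of $f$. Assuming $d_{G^{(t)}}(u)\ge d_{G^{(t)}}(w)$, for every third node $x\notin\{u,w\}$ the non-decreasing property in the first argument gives $f(d_{G^{(t)}}(u),d_{G^{(t)}}(x))\ge f(d_{G^{(t)}}(w),d_{G^{(t)}}(x))$. Hence whenever such an $x$ clears the threshold together with $w$, it also clears it together with $u$; that is, every neighbor of $w$ distinct from $u$ is a neighbor of $u$ distinct from $w$. This yields the set inclusion
\[
N_{G^{(t+1)}}(w)\setminus\{u\}\ \subseteq\ N_{G^{(t+1)}}(u)\setminus\{w\},
\]
so the left-hand side has cardinality no larger than the right-hand side.

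Finally I would account for the mutual pair $\{u,w\}$, which is the only place where the bookkeeping needs care. Since edges are undirected, the pair $\{u,w\}$ is present or absent in $G^{(t+1)}$ as a single event, contributing exactly the same indicator $\mathbf{1}[\{u,w\}\in E^{(t+1)}]$ to both $d_{G^{(t+1)}}(u)$ and $d_{G^{(t+1)}}(w)$. Writing each degree as the size of its neighborhood-minus-the-other-vertex plus this common indicator and applying the inclusion above transfers the inequality directly to the degrees, giving $d_{G^{(t+1)}}(u)\ge d_{G^{(t+1)}}(w)$.

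The main obstacle, and really the only subtlety, is precisely this treatment of the two distinguished vertices: one must exclude self-loops and must not miscount the edge $(u,w)$, which is the single vertex lying in the target neighborhood yet falling outside the range of the monotonicity comparison. Everything else follows mechanically from the threshold reformulation and the fact that $f$ is non-decreasing (symmetry of $f$ is needed only to guarantee the edge rule is well defined on undirected pairs). I would also remark that the argument actually holds for every $t\ge 0$; the restriction to $t\ge 1$ in the statement merely aligns it with the companion properties that genuinely require the graph to already be in threshold form.
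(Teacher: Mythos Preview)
Your proof is correct and follows essentially the same line as the paper: use monotonicity of $f$ to show that any node adjacent to $w$ at time $t+1$ is also adjacent to $u$, and conclude the degree inequality from the resulting inclusion of neighborhoods. Your explicit bookkeeping for the pair $\{u,w\}$ is in fact more careful than the paper's own one-line argument, which asserts $N_{G^{(t+1)}}(w)\subseteq N_{G^{(t+1)}}(u)$ without singling out the case $v=u$.
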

\begin{proof}
For any neighbor $v$ of $w$ in $G^{(t+1)}$ it holds that $\mathcal{E}^{(t)}(v,w) \geq \beta$. Then it also holds that $\mathcal{E}^{(t)}(v,u) \geq \beta$, since $f$ is non-decreasing, which means $v$ is also a neighbor of $u$ in $G^{(t+1)}$. 
\end{proof}

\noindent Nodes that have the same degree at time $t$, share the same neighbors at time $t+1$. 

\begin{prop}\label{prop:EqualDegree}
If $d_{G^{(t)}}(u) = d_{G^{(t)}}(w)$, then $N_{G^{(t+1)}}(u) = N_{G^{(t+1)}}(w)$.
\end{prop}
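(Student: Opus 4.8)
The plan is to show that, once $\alpha=\beta$, the edge set at time $t+1$ is a deterministic function of the time-$t$ degree sequence alone, and then to exploit the fact that $u$ and $w$ are indistinguishable to this function. First I would record the reduction forced by $\alpha=\beta$: the middle case of $\mathcal{A}$ (the one that copies $e^{(t)}$ to $e^{(t+1)}$) fires only when $\alpha\le \mathcal{E}^{(t)}(e)<\beta$, which is an empty range. Hence for every pair $(x,y)$ the edge $(x,y)$ is present at time $t+1$ if and only if $\mathcal{E}^{(t)}(x,y)=f\big(d_{G^{(t)}}(x),d_{G^{(t)}}(y)\big)\ge \beta$. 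Because $C^{(t)}=K_n$, the rule is in fact evaluated on every pair, so this biconditional characterizes $E^{(t+1)}$ completely in terms of the time-$t$ degrees.

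The core of the argument is then a single observation, which I would carry out by a membership sweep rather than by comparing degrees. Fix an arbitrary vertex $v$. By the characterization above, $v\in N_{G^{(t+1)}}(u)$ iff $f\big(d_{G^{(t)}}(u),d_{G^{(t)}}(v)\big)\ge\beta$, and $v\in N_{G^{(t+1)}}(w)$ iff $f\big(d_{G^{(t)}}(w),d_{G^{(t)}}(v)\big)\ge\beta$. The hypothesis $d_{G^{(t)}}(u)=d_{G^{(t)}}(w)$ makes the two reals $f\big(d_{G^{(t)}}(u),d_{G^{(t)}}(v)\big)$ and $f\big(d_{G^{(t)}}(w),d_{G^{(t)}}(v)\big)$ literally equal, so the two membership conditions are one and the same inequality. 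Quantifying over all $v$ yields $N_{G^{(t+1)}}(u)=N_{G^{(t+1)}}(w)$, which is exactly the claimed set equality. Note that this uses only the equality of degrees together with the fact that $\mathcal{E}^{(t)}$ reads its endpoints solely through their degrees; neither the non-decrease of $f$ nor Property~\ref{prop:Monotonicity} is needed at this point, although non-decrease is what later converts these shared neighborhoods into the threshold structure that drives convergence.

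The one spot I expect to demand explicit attention is the pair of candidate vertices $v\in\{u,w\}$, since there the sweep concerns the single undirected edge $\{u,w\}$ rather than two separate edges incident to $u$ and to $w$. Its fate at time $t+1$ is decided by the one predicate $f\big(d_{G^{(t)}}(u),d_{G^{(t)}}(u)\big)=f\big(d_{G^{(t)}}(w),d_{G^{(t)}}(w)\big)\ge\beta$, which again depends only on the common degree of $u$ and $w$, so it fits the same degree-only principle as every other pair; I would handle this boundary pair explicitly so that the sweep over $v$ is genuinely exhaustive. Everything else is the direct substitution $d_{G^{(t)}}(u)=d_{G^{(t)}}(w)$ into the symmetric, degree-only energy, so the only ingredients are the symmetry of $f$ (which makes $\mathcal{E}^{(t)}$ well defined on unordered pairs) and the collapse of the three-case rule to the single threshold $\mathcal{E}^{(t)}\ge\beta$.
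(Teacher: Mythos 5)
Your proof is correct and takes essentially the same route as the paper's: both use the fact that $\alpha=\beta$ collapses the three-case rule to the single threshold $f\big(d_{G^{(t)}}(x),d_{G^{(t)}}(y)\big)\ge\beta$ evaluated on every pair (since $C^{(t)}=K_n$), and then observe that $d_{G^{(t)}}(u)=d_{G^{(t)}}(w)$ makes the membership condition for any third vertex $v$ literally the same inequality — the paper compresses this into one line referring back to the reasoning of Property~\ref{prop:Monotonicity}. Your explicit attention to the boundary pair $v\in\{u,w\}$ is something the paper silently skips, and it is worth noting that when $f(d,d)\ge\beta$ the edge $\{u,w\}$ is created, so the two neighborhoods then differ exactly by $\{u,w\}$; hence the statement (in both your proof and the paper's) is really $N_{G^{(t+1)}}(u)\setminus\{w\}=N_{G^{(t+1)}}(w)\setminus\{u\}$, i.e.\ equal degrees at $t+1$, which is all that the subsequent properties on equivalence classes actually use.
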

\begin{proof}
As in the proof of Property~\ref{prop:Monotonicity}, due to the equality of the degrees, it also holds that any neighbor $v$ of $u$ is a neighbor of $w$ and respectively any neighbor $v$ of $w$ is a neighbor of $u$.
\end{proof}

\noindent In the following, we discuss properties related to equivalence classes.

\begin{prop} \label{prop:ECNonIncreasing}
The number of equivalence classes in $G^{(t+1)}$ is less than or equal to the number of equivalence classes in $G^{(t)}$.
\end{prop}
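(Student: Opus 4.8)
The plan is to show that the partition of $V$ induced by $R^{(t+1)}$ is a \emph{coarsening} of the partition induced by $R^{(t)}$. Once this is established, the conclusion is immediate: a coarser partition can only merge classes together, never split them, so the number of classes cannot grow. Thus the whole argument reduces to proving the relational containment $R^{(t)} \subseteq R^{(t+1)}$, i.e. that nodes equivalent at time $t$ stay equivalent at time $t+1$.

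The key step is to verify this containment using the Properties already at hand. Suppose $d_{G^{(t)}}(u) = d_{G^{(t)}}(w)$, so that $(u,w) \in R^{(t)}$. I would apply Property~\ref{prop:Monotonicity} twice: from the (trivially valid) inequality $d_{G^{(t)}}(u) \geq d_{G^{(t)}}(w)$ it gives $d_{G^{(t+1)}}(u) \geq d_{G^{(t+1)}}(w)$, and from $d_{G^{(t)}}(w) \geq d_{G^{(t)}}(u)$ it gives the reverse inequality. Hence $d_{G^{(t+1)}}(u) = d_{G^{(t+1)}}(w)$, which means $(u,w) \in R^{(t+1)}$. An equivalent route is to invoke Property~\ref{prop:EqualDegree}: equal degrees at time $t$ force $N_{G^{(t+1)}}(u) = N_{G^{(t+1)}}(w)$, and equal neighbourhoods have equal cardinality, so the degrees at time $t+1$ coincide.

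From $R^{(t)} \subseteq R^{(t+1)}$ it follows that each equivalence class of $R^{(t)}$ lies entirely inside a single class of $R^{(t+1)}$. Consequently the map sending each $R^{(t)}$-class to the unique $R^{(t+1)}$-class containing it is well defined, and it is surjective because every node lies in some class of each partition. A surjection from the set of $R^{(t)}$-classes onto the set of $R^{(t+1)}$-classes yields $|G^{(t+1)}| \leq |G^{(t)}|$, which is exactly the claim.

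There is essentially no computational obstacle here, as the statement is a direct corollary of the preceding Properties; the two points that merely require care are logical rather than technical. First, one must keep the direction of refinement straight: it is the time-$(t+1)$ partition that is the coarser one, not the time-$t$ partition. Second, if the argument is routed through Property~\ref{prop:EqualDegree} instead of Property~\ref{prop:Monotonicity}, one should briefly note that equal neighbourhoods still give equal degrees even when $u$ and $w$ happen to be adjacent at time $t+1$, since in that case the two neighbourhoods differ only by interchanging $u$ and $w$ and so have the same size.
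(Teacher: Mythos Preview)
Your proposal is correct and follows essentially the same approach as the paper: the paper's proof is the one-line observation, via Property~\ref{prop:EqualDegree}, that nodes sharing an equivalence class at time $t$ remain in the same class at time $t+1$, i.e.\ exactly your containment $R^{(t)}\subseteq R^{(t+1)}$. Your write-up is more explicit about the coarsening/surjection step and about the adjacency subtlety in Property~\ref{prop:EqualDegree}, but the underlying argument is identical.
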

\begin{proof}
By Property~\ref{prop:EqualDegree}, nodes that belong to the same equivalence class at time $t>0$ will always belong to the same equivalence class for all $t'>t$.
\end{proof}

\begin{prop}  \label{prop:GraphsSameEC}
If $G^{(t+1)}$ has the same number of equivalence classes as $G^{(t)}$, then $\forall i$, $|R^{(t)}_i|=|R^{(t+1)}_i|$, where $|R^{(t)}_i|$ is the number of nodes in the equivalence class $R^{(t)}_i$.
\end{prop}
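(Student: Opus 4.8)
The plan is to combine Property~\ref{prop:EqualDegree} and Property~\ref{prop:Monotonicity} through a counting argument on the two partitions of the fixed node set $V$ into equivalence classes. Write $k = |G^{(t)}| = |G^{(t+1)}|$ for the common number of classes (this is the hypothesis), and recall that for each time the classes form a partition of $V$.

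First I would observe that Property~\ref{prop:EqualDegree} forces the time-$t$ partition to refine the time-$(t+1)$ partition: if two nodes share a degree at time $t$ they share their neighborhood, and hence their degree, at time $t+1$, so every class $R^{(t)}_i$ lies entirely inside a single time-$(t+1)$ class. This defines a map $\phi$ sending each time-$t$ class to the unique time-$(t+1)$ class that contains it. Since every node of $V$ belongs both to some time-$t$ class and to some time-$(t+1)$ class, $\phi$ is surjective onto the $k$ classes of $G^{(t+1)}$.

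The heart of the argument is then a pigeonhole step: $\phi$ is a surjection between two finite sets of equal cardinality $k$, hence a bijection. Thus no two distinct time-$t$ classes are merged, and the sets $\phi(R^{(t)}_1),\dots,\phi(R^{(t)}_k)$ are exactly the $k$ classes of $G^{(t+1)}$. Comparing total sizes, $\sum_i |R^{(t)}_i| = |V| = \sum_i |\phi(R^{(t)}_i)|$ while $R^{(t)}_i \subseteq \phi(R^{(t)}_i)$ for every $i$; equality of the sums forces $R^{(t)}_i = \phi(R^{(t)}_i)$ as sets of nodes, so the two partitions of $V$ coincide setwise.

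It remains to check that the decreasing-degree ranking assigns matching classes the same index, upgrading the setwise coincidence to the index-for-index equality $R^{(t)}_i = R^{(t+1)}_i$. For this I would invoke Property~\ref{prop:Monotonicity}: the weak degree order between nodes is preserved from $t$ to $t+1$. The only delicate point, and the step I expect to be the main obstacle, is exactly here: Monotonicity gives only a non-strict inequality, so one must rule out a reshuffling of ranks. This is resolved by the distinctness of class degrees built into the definition of an equivalence class---since at each time the $k$ classes carry $k$ distinct degrees, the preserved weak order becomes a preserved strict order, so the class ranked $i$ at time $t$ is again ranked $i$ at time $t+1$. Hence $R^{(t)}_i = R^{(t+1)}_i$, and in particular $|R^{(t)}_i| = |R^{(t+1)}_i|$.
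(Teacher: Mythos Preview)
Your proof is correct and follows essentially the same approach as the paper: both arguments hinge on Property~\ref{prop:EqualDegree} to see that the time-$t$ partition refines the time-$(t+1)$ partition, and then use a count (your pigeonhole/bijection step, the paper's contradiction ``some class must split'') to conclude the two partitions coincide. Your version is in fact more careful than the paper's: you explicitly invoke Property~\ref{prop:Monotonicity} to argue that the decreasing-degree ranking is preserved, which is needed to pass from setwise coincidence of the partitions to the indexed equality $|R^{(t)}_i|=|R^{(t+1)}_i|$; the paper's short proof leaves this rank-preservation step implicit.
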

\begin{proof}
Suppose that the above doesn't hold. Then, there is some $i$ for which $|R^{(t)}_i| \neq |R^{(t+1)}_i|$. This means that there must be two nodes in some equivalence class $R^{(t)}_j$ that landed to different classes in $G^{(t+1)}$. However, Property~\ref{prop:EqualDegree} implies that this is impossible. 
\end{proof}

The following lemma describes how equivalence classes behave with respect to edge distribution.

\begin{lemma} \label{lem:MonotonicityEC}
If an arbitrary node $u$ in $R^{(t)}_i$ is connected with some node $w$ in $R^{(t)}_j$, then $u$ is connected with every node $x$ in every equivalence class $R^{(t)}_k$, such that $k\leq j$ and $t>0$.
\end{lemma}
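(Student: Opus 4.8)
The plan is to exploit the fact that with $\alpha=\beta$ and $C^{(t)}=K_n$, the graph at any time $t\ge 1$ is completely determined by the degree sequence of the previous graph: an edge is created/retained exactly when its energy reaches the threshold, so for all $a,b$ one has $(a,b)\in E^{(t)}$ iff $f(d_{G^{(t-1)}}(a),d_{G^{(t-1)}}(b))\ge\beta$. Consequently $N_{G^{(t)}}(u)$ is governed entirely by the $G^{(t-1)}$-degrees, and since $f$ is non-decreasing this neighborhood is \emph{upward closed} in those degrees. Recalling that $k\le j$ means $d(R^{(t)}_k)\ge d(R^{(t)}_j)$, so $d_{G^{(t)}}(x)\ge d_{G^{(t)}}(w)$, the lemma is exactly the statement that this upward closedness also holds with respect to the \emph{time-$t$} degrees. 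So my goal reduces to: given $(u,w)\in E^{(t)}$ and $d_{G^{(t)}}(x)\ge d_{G^{(t)}}(w)$, show $(u,x)\in E^{(t)}$. I would split this according to whether the inequality is strict (the case $k<j$) or an equality (the case $k=j$).

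\textbf{The strict case.} Suppose $d_{G^{(t)}}(x)>d_{G^{(t)}}(w)$. I would first transfer this ordering back to time $t-1$ using Property~\ref{prop:Monotonicity}: if we had $d_{G^{(t-1)}}(w)\ge d_{G^{(t-1)}}(x)$, then Property~\ref{prop:Monotonicity} applied to the transition from $t-1$ to $t$ would give $d_{G^{(t)}}(w)\ge d_{G^{(t)}}(x)$, a contradiction; hence $d_{G^{(t-1)}}(x)>d_{G^{(t-1)}}(w)$. Then, since $(u,w)\in E^{(t)}$ means $f(d_{G^{(t-1)}}(u),d_{G^{(t-1)}}(w))\ge\beta$, the monotonicity of $f$ in its second argument yields $f(d_{G^{(t-1)}}(u),d_{G^{(t-1)}}(x))\ge\beta$, i.e.\ $(u,x)\in E^{(t)}$. (One small point I would flag: Property~\ref{prop:Monotonicity} is stated for $t\ge1$, but its proof only invokes the update rule, so it applies verbatim to the transition $0\to1$ as well, which covers the base case $t=1$ of this lemma.)

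\textbf{The tie case, and the main obstacle.} The delicate case is $k=j$, where $x$ and $w$ share a degree class at time $t$ (I may assume $x\ne w$ and $u\ne x$, as the statement is otherwise trivial). Here the previous argument can break down, because two nodes can lie in the same class at time $t$ while having had \emph{different} degrees at time $t-1$; this is precisely why Property~\ref{prop:EqualDegree}, which presumes equal degrees at time $t-1$, does not apply directly. I expect this to be the hard part. The plan is a short counting argument showing that $x$ and $w$ nonetheless have identical neighborhoods in $G^{(t)}$, apart from the possible edge between them. Assume without loss of generality $d_{G^{(t-1)}}(x)\le d_{G^{(t-1)}}(w)$. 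For every $z\notin\{x,w\}$, monotonicity of $f$ in its first argument gives $z\in N_{G^{(t)}}(x)\Rightarrow z\in N_{G^{(t)}}(w)$, so $N_{G^{(t)}}(x)\setminus\{w\}\subseteq N_{G^{(t)}}(w)\setminus\{x\}$. Since the edge $xw$ is present for $x$ iff it is present for $w$, the equality $d_{G^{(t)}}(x)=d_{G^{(t)}}(w)$ forces $|N_{G^{(t)}}(x)\setminus\{w\}|=|N_{G^{(t)}}(w)\setminus\{x\}|$, turning the inclusion into an equality. Finally, $u\in N_{G^{(t)}}(w)$ with $u\ne x$ (and $u\ne w$) gives $u\in N_{G^{(t)}}(w)\setminus\{x\}=N_{G^{(t)}}(x)\setminus\{w\}$, hence $(u,x)\in E^{(t)}$, completing the argument.
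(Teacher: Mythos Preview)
Your argument is correct. The paper's own proof is a one-line appeal to Property~\ref{prop:Monotonicity}: from $k\le j$ one gets $d_{G^{(t)}}(x)\ge d_{G^{(t)}}(w)$, and the authors then assert ``so they are also neighbors of $u$''. Read literally, this is exactly your strict-case argument (transfer the order to time $t-1$ and use that $N_{G^{(t)}}(u)$ is upward closed in $t{-}1$ degrees), stated tersely; the paper does not explicitly discuss the tie case $k=j$ at all.

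Your treatment of the tie case is the genuinely new content in your write-up. You correctly observe that two nodes sharing a class at time $t$ need not have had equal degrees at time $t-1$, so neither Property~\ref{prop:Monotonicity} nor Property~\ref{prop:EqualDegree} applies directly, and you close the gap with a clean counting argument showing $N_{G^{(t)}}(x)\setminus\{w\}=N_{G^{(t)}}(w)\setminus\{x\}$. This is a legitimate refinement: the paper's proof glosses over precisely this point, and your argument makes it rigorous. Your side remark that Property~\ref{prop:Monotonicity} (as proved) also covers the transition $0\to 1$, needed for the base case $t=1$, is likewise correct.
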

\begin{proof}
Due to Property~\ref{prop:Monotonicity}, for all nodes $x\in R^{(t)}_k$ it holds that $d_{G^{(t)}}(x) \geq d_{G^{(t)}}(w)$ and so they are also neighbors of $u$.
\end{proof}

We prove by induction that this process always converges in at most $2|G^{(0)}|$ steps. To begin with, it is obvious that the clique $\mathcal{K}_n$ as well as the null graph $\overline{\mathcal{K}_n}$ both converge in at most one step, for any value of $\beta$. The following renormalization lemma describes how the number of equivalence classes is reduced and is crucial to the induction proof. 

\begin{lemma}	\label{lem:Renormalization}
If $d\left(R^{(t)}_1\right) = n-1$ for every $t\geq c$, for some $c\in \mathbb{N}\cup \{0\}$, and the subgraph $G^{(c)} \setminus R^{(c)}_1$ converges for any value of $\beta$ and proper function $f$, then $G^{(c)}$ converges as well. Similarly, if $d\left(R^{(t)}_{|G^{(t)}|}\right) = 0$ for every $t\geq c$, for some $c\in \mathbb{N}\cup \{0\}$, and the subgraph $G^{(c)} \setminus R^{(c)}_{|G^{(c)}|}$ converges for any value of $\beta$ and proper function $f$, then $G^{(c)}$ converges as well. The time it takes for $G^{(c)}$ to converge is the same as the time it takes for the induced subgraph to converge, for both cases.
\end{lemma}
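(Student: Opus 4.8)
The plan is to reduce the dynamics of $G^{(c)}$ to those of the induced subgraph by absorbing the removed class into a shifted proper function, so that the two processes run in lockstep. I treat the first case (maximal class with $d(R^{(t)}_1)=n-1$) in detail; the second is perfectly dual. Recall that since $\alpha=\beta$, an edge is present in $G^{(t+1)}$ exactly when its energy at time $t$ is at least $\beta$. Write $k=|R^{(c)}_1|$ and let $H^{(t)}$ denote the subgraph of $G^{(t)}$ induced on $V\setminus R^{(c)}_1$, so that $H^{(c)}=G^{(c)}\setminus R^{(c)}_1$.

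First I would pin down that the nodes of $R^{(c)}_1$ keep full degree forever. A node $u\in R^{(c)}_1$ has maximum degree at time $c$, so by Property~\ref{prop:Monotonicity} it has maximum degree at every later step; combined with the hypothesis $d(R^{(t)}_1)=n-1$ for all $t\geq c$, this gives $d_{G^{(t)}}(u)=n-1$ for all $t\geq c$, i.e. each such $u$ stays adjacent to every other node. Consequently every $w\in V\setminus R^{(c)}_1$ is adjacent to all $k$ nodes of $R^{(c)}_1$ at every time $t\ge c$, so its degree splits into a constant offset plus its degree inside $H$:
\[
 d_{G^{(t)}}(w)=d_{H^{(t)}}(w)+k,\qquad t\ge c .
\]

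Next I would transfer the update rule. An edge $(w,w')$ with $w,w'\in V\setminus R^{(c)}_1$ is present in $G^{(t+1)}$ iff $f\bigl(d_{G^{(t)}}(w),d_{G^{(t)}}(w')\bigr)\ge\beta$, which by the offset identity equals $f\bigl(d_{H^{(t)}}(w)+k,\,d_{H^{(t)}}(w')+k\bigr)\ge\beta$. Defining $f'(x,y):=f(x+k,y+k)$, which is again symmetric and non-decreasing in both arguments, hence proper, this is exactly the rule with proper function $f'$ and threshold $\beta$ applied to $H$. Thus the sequence $H^{(c)},H^{(c+1)},\dots$ is precisely the evolution of the initial graph $H^{(c)}$ under $(f',\beta)$. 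By hypothesis this evolution converges; and since all edges incident to or within $R^{(c)}_1$ are permanently present, the only edges of $G$ that can ever change are those inside $H$, so $G^{(t)}=G^{(t+1)}$ holds exactly when $H^{(t)}=H^{(t+1)}$. Therefore $G^{(c)}$ converges, and at the same step measured from $c$ as the induced subgraph does, which yields the time statement.

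The second case is the dual: if $d(R^{(t)}_{|G^{(t)}|})=0$ for all $t\ge c$, then, again by Property~\ref{prop:Monotonicity} applied to the minimum, the bottom class consists of permanently isolated nodes, no node of $H$ is ever adjacent to them, and the offset is $k=0$; here $f'=f$ and the restricted dynamics are literally the $(f,\beta)$-dynamics on $H$, so the same lockstep argument applies verbatim. The only real obstacle is the bookkeeping in the first step: one must verify that it is the fixed set $R^{(c)}_1$, and not merely some top class whose membership might drift, that stays at degree $n-1$, which is precisely where Property~\ref{prop:Monotonicity} is needed. Once the constant offset is in hand, the substitution $f'(x,y)=f(x+k,y+k)$ together with the ``for any proper $f$ and any $\beta$'' quantification in the hypothesis makes the reduction immediate.
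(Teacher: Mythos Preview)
Your proposal is correct and follows essentially the same approach as the paper: both arguments use Property~\ref{prop:Monotonicity} to fix the set $R^{(c)}_1$ at full degree, extract the constant degree offset $k=|R^{(c)}_1|$, and reduce the dynamics on $V\setminus R^{(c)}_1$ to the same process with the shifted proper function $f'(x,y)=f(x+k,y+k)$ (the paper calls it $g$), concluding identical convergence times; the degree-$0$ case is handled in both with $f'=f$. Your write-up is in fact slightly more explicit about why the edges incident to and within $R^{(c)}_1$ are permanently fixed and about the ``iff'' that yields the exact time statement.
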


\begin{proof}
The main idea is that we consider two different sets of nodes: $R^{(c)}_1$ and $V\setminus R^{(c)}_1$. Due to our hypothesis, at all future time-steps the edges between these two groups, and the edges with both endpoints in $R^{(c)}_1$ are fixed. Concerning the edges with both endpoints in $V\setminus R^{(c)}_1$, we can almost study this subgraph independently. That's because the effect of $R^{(c)}_1$ on $V\setminus R^{(c)}_1$ is completely predictable: it always increases the degree of all nodes by the exact same amount. The same reasoning applies for $R^{(c)}_{|G^{(c)}|}$.

More formally, by Property~\ref{prop:Monotonicity}, for all $t\geq c$ it holds that $R^{(t)}_1 \subseteq R^{(t+1)}_1$. This means that the nodes in $R^{(c)}_1$ are always connected to every node after time $c$. As a result, for all $u\in V\setminus R^{(c)}_1$ it holds that their degree in the induced subgraph $G^{(t)}\setminus R^{(c)}_1$ is $d_{G^{(t)}}(u)-|R^{(c)}_1|$. 
Thus, the decision for the existence of an edge $(u,v)$, where $u,v\in G^{(t)}\setminus R^{(c)}_1$ is the following:
\[\mathcal{E}^{(t)}(u,v)=f(d_{G^{(t)}\setminus R^{(c)}_1}(u)+|R^{(c)}_1|,d_{G^{(t)}\setminus R^{(c)}_1}(v)+|R^{(c)}_1|)\geq \beta\]
which can be written as: 
\[\mathcal{E}^{(t)}(u,v)=g(d_{G^{(t)}\setminus R^{(c)}_1}(u),d_{G^{(t)}\setminus R^{(c)}_1}(v))\geq \beta\]
where
\[g(x,y)=f(x+|R^{(c)}_1|,y+|R^{(c)}_1|)\]

Clearly, $g$ is a proper function assuming that $f$ is a proper function. Thus, the choice of whether the edge exists between $u$ and $v$ is equivalent between $G^{(t)}$ and $G^{(t)}\setminus R^{(c)}_1$ by  appropriately changing $f$ to $g$. But due to our hypothesis $G^{(c)}\setminus R^{(c)}_1$ converges, and thus $G^{(c)}$ also converges in the same number of steps. Note that we need not compute $g$ since this is only an analytical construction; the dynamic process continues as defined.
The proof of the second part of the lemma is similar in idea but much simpler since function $f$ does not change due to the fact that the removed nodes have degree $0$.
\end{proof}

The following theorem establishes that the dynamic process converges in linear time. 

\begin{theorem} \label{thm:beta_degree}
When $\alpha=\beta$, $\mathcal{E}(u,v)=f(d_{G^{(t)}}(u),d_{G^{(t)}}(v))$, and $(\forall t: C^{(t)}=K_n)$, the dynamic process on an undirected simple graph $G$ converges in at most $2|G^{(0)}|$ steps.
\end{theorem}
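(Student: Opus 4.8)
Since $\alpha=\beta$, the update rule is memoryless: the middle case never applies, so $G^{(t+1)}$ is determined entirely by the degree sequence of $G^{(t)}$, with $u$ and $v$ adjacent at time $t+1$ exactly when $f(d_{G^{(t)}}(u),d_{G^{(t)}}(v))\geq\beta$. The plan is to combine the structural facts already established with an induction on the number of equivalence classes $|G^{(0)}|$, using Lemma~\ref{lem:Renormalization} to peel off one extreme class at a time. The base case $|G^{(0)}|=1$ is immediate: all degrees are equal, every pair has the same energy $f(d,d)$, so $G^{(1)}$ is either $\mathcal{K}_n$ or $\overline{\mathcal{K}_n}$, both of which are fixed points, giving convergence within $2\cdot 1$ steps.

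The first observation I would record is that, for every $t\geq 1$, the graph $G^{(t)}$ has nested neighborhoods: by Property~\ref{prop:Monotonicity} and Lemma~\ref{lem:MonotonicityEC}, if $d_{G^{(t)}}(u)\geq d_{G^{(t)}}(w)$ then $u$ is adjacent to every node $w$ is adjacent to. Hence $G^{(t)}$ is a threshold graph, and every threshold graph contains either a vertex of degree $n-1$ (so the whole top class $R^{(t)}_1$ is universal) or a vertex of degree $0$ (so the whole bottom class $R^{(t)}_{|G^{(t)}|}$ is isolated). The second, and key, ingredient is a persistence dichotomy governed by the single quantity $f(n-1,0)$: if $f(n-1,0)\geq\beta$ then $f(n-1,d)\geq\beta$ for every $d$ by monotonicity, so any node reaching degree $n-1$ reconnects to all others and stays universal forever; symmetrically, if $f(n-1,0)<\beta$ then $f(0,d)\leq f(0,n-1)<\beta$ for every $d\leq n-1$, so any isolated node stays isolated forever.

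Feeding the dichotomy into the threshold structure, once $G^{(t)}$ (for $t\geq 1$) shows a universal top class in the regime $f(n-1,0)\geq\beta$, or an isolated bottom class in the regime $f(n-1,0)<\beta$, that extreme class is \emph{persistent}, so the hypothesis of Lemma~\ref{lem:Renormalization} is satisfied. Peeling it replaces $f$ by the still-proper function $g(x,y)=f(x+|R^{(t)}_1|,y+|R^{(t)}_1|)$ in the universal case (and leaves $f$ unchanged in the isolated case), and strictly decreases the number of classes. Applying the induction hypothesis to this smaller instance and invoking the ``same number of steps'' guarantee of Lemma~\ref{lem:Renormalization}, the whole graph converges in the same number of steps as the subgraph, namely within $2(|G^{(0)}|-1)$ further steps; adding the one or two steps spent reaching a peelable configuration yields the stated bound $2|G^{(0)}|$.

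The step I expect to be the main obstacle is reconciling the two dichotomies in the ``crossed'' configurations, where the visible extreme class is not a priori persistent: a universal top class when $f(n-1,0)<\beta$, and an isolated bottom class when $f(n-1,0)\geq\beta$. Here I would argue that within at most one additional step the process is forced into a genuinely persistent configuration; for instance, a non-persistent universal top class forces the energy between the top and the bottom class below $\beta$, which isolates the bottom class at the next step and, the quantity $f(n-1,0)$ being then necessarily below $\beta$, makes that isolation persistent. The truly delicate part is controlling \emph{how many} steps elapse before a persistent extreme class appears, so as not to exceed the budget of two steps per peeled class; in particular the residual case of an isolated-but-not-yet-persistent bottom class with no universal vertex, where the isolated nodes are inert while the remaining subgraph keeps evolving (and may later force the isolated nodes to reconnect into universal vertices), is where the accounting must be done most carefully, presumably by a monotonicity argument on the extreme degrees or on $|G^{(t)}|$ via Properties~\ref{prop:ECNonIncreasing} and~\ref{prop:GraphsSameEC}.
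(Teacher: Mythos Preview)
Your plan is essentially the paper's proof: induct on the number of equivalence classes and use Lemma~\ref{lem:Renormalization} to peel off a persistent extreme class at each stage. Your observation that $G^{(t)}$ is a threshold graph for $t\geq 1$ (hence always has a universal or an isolated vertex) is a clean way to phrase what the paper extracts from Properties~\ref{prop:Monotonicity}--\ref{prop:GraphsSameEC} and Lemma~\ref{lem:MonotonicityEC}, and your global dichotomy on $f(n-1,0)$ is a repackaging of the paper's case split.

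Where your sketch stops is exactly where the paper does the real work. The paper does \emph{not} rely on the single test $f(n-1,0)\gtrless\beta$; instead it fixes the current extreme degrees and splits into four cases according to where $\beta$ sits among $f(d(R_1),d(R_1))$, $f(d(R_1),d(R_i))$, $f(d(R_i),d(R_i))$. Cases~(1) and~(4) collapse immediately; in Case~(2) one step produces an isolated bottom class and, assuming $|G^{(t+1)}|=|G^{(t)}|$, forces $d(R_1^{(t+1)})=n-|R_i|-1$, so the \emph{relevant} threshold is $f(n-|R_i|-1,0)$ rather than $f(n-1,0)$. If this is below $\beta$ the isolated class is persistent; if not, the top class becomes universal at the next step and, because the minimum degree is then pinned at $|R_1|$, the test $f(n-1,|R_1|)$ decides persistence from then on. Case~(3) is symmetric. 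The upshot is that Cases~(2) and~(3) can interchange at most once before a persistent extreme class appears, which is precisely the ``at most two steps per peeled class'' accounting you were worried about. Your ``residual case'' (isolated bottom, $f(n-1,0)\geq\beta$) in fact resolves in one step: either $f(0,d_{\max})<\beta$, and then $d_{\max}$ cannot grow so the isolated vertices are genuinely persistent, or $f(0,d_{\max})\geq\beta$, and then the top class is universal at the next step and persists. So the obstacle you flagged is real but not deep; completing your argument amounts to reproducing the paper's Case~(2)/(3) analysis with the correct local thresholds in place of the global $f(n-1,0)$.
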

\begin{proof}
We use induction on the number of equivalence classes. For the base case, the graph $G^{(0)}$ has only one equivalence class $R^{(0)}_1$ (the graph is regular). There are two cases: either $f(d(R^{(0)}_1),d(R^{(0)}_1)) \geq \beta$ and all edges are created $(G^{(1)} = \mathcal{K}_n)$, or $f(d(R^{(0)}_1),d(R^{(0)}_1))<\beta$ and no edge is created $(G^{(1)} = \overline{\mathcal{K}_n})$. Either way, $G^{(1)}$ converges in at most $1$ step, and thus $G^{(0)}$ converges in at most $2$ steps.

Suppose the theorem holds for $i-1$ equivalence classes and let $G^{(t)}$ be a graph with $|G^{(t)}| = i > 1$. If the number of equivalence classes decreases within the first two steps, then the process converges in at most $2i$ steps, by the inductive hypothesis. Thus, we only look at the case where the number of equivalence classes remains the same. Our main idea is to take advantage of the following: if the node with maximum degree connects with the node with minimum degree, then it also connects with every other node in the graph, and its degree will be $n-1$. Else, the minimum degree node will become isolated (degree $0$).
We discern four different cases in total, concerning the relation of $d(R^{(t)}_1)$, $d(R^{(t)}_i)$ and $\beta$.
  \begin{enumerate}
  \item $f(d(R^{(t)}_1),d(R^{(t)}_1)) < \beta$ \label{cs:zero}
  \item $f(d(R^{(t)}_1),d(R^{(t)}_i)) < \beta \leq f(d(R^{(t)}_1),d(R^{(t)}_1))$ \label{cs:decrease}
  \item $f(d(R^{(t)}_i),d(R^{(t)}_i)) < \beta \leq f(d(R^{(t)}_1),d(R^{(t)}_i))$ \label{cs:increase}
  \item $\beta \leq f(d(R^{(t)}_i),d(R^{(t)}_i))$ \label{cs:clique}
  \end{enumerate}

The proof of convergence is based on the fact that Cases~(\ref{cs:decrease}) and (\ref{cs:increase}) can only interchange once. This is based on the fact that the degree of an equivalence class will, at some time, be either $n-1$ or $0$ and thus by using Lemma~\ref{lem:Renormalization} we reduce the number of equivalence classes and finally the inductive hypothesis proves the theorem. To begin with, Cases~(\ref{cs:zero}) and (\ref{cs:clique}) would result in $\overline{\mathcal{K}_n}$ and $\mathcal{K}_n$ respectively, and thus $G^{(t)}$ would converge in at most $2$ steps.

Case~(\ref{cs:decrease}) results in $G^{(t+1)}$ such that $d(R^{(t+1)}_{|G^{(t+1)}|}) = 0$. If $|G^{(t+1)}|<|G^{(t)}|=i$ then $G^{(t)}$ converges in at most $2|G^{(t+1)}|+1< 2|G^{(t)}|$ steps. Else it holds that:
\[f(d(R^{(t)}_1),d(R^{(t)}_{i-1}))\geq \beta\]
because otherwise $R^{(t)}_i$ and $R^{(t)}_{i-1}$ would be joined in a single equivalence class $G^{(t+1)}$, effectively reducing the number of equivalence classes. Thus, $d(R^{(t+1)}_1) = n-|R^{(t)}_i|-1$, due to Lemma~\ref{lem:MonotonicityEC}, Property~\ref{prop:GraphsSameEC} and the fact that $d(R^{(t+1)}_i) = 0$. In the case where
\[f(n-|R^{(t)}_i|-1,0)< \beta\]
we always get Cases~(\ref{cs:zero}) or (\ref{cs:decrease}) because, inductively, the minimum degree will always be $0$, while the maximum degree will be at most $n-|R^{(t)}_i|-1$. In this case, the theorem is proved due to Lemma~(\ref{lem:Renormalization}) and the inductive hypothesis.

On the other hand, if
\[f(n-|R^{(t)}_i|-1,0)\geq \beta\]
then we always get Cases~(\ref{cs:increase}) or (\ref{cs:clique}) since, the maximum degree will always be $n-1$ as we prove below. In this case, the theorem is also proved due to Lemma~(\ref{lem:Renormalization}) and the inductive hypothesis.

The same reasoning works for Case~(\ref{cs:increase}), which results in $d(R^{(t+1)}_1) = n-1$. Like before, we assume $|G^{(t+1)}|=|G^{(t)}|$ (otherwise the equivalence classes are reduced) and so
\[f(d(R^{(t)}_2),d(R^{(t)}_{i}))< \beta\]
Thus, $d(R^{(t+1)}_i) = |R^{(t)}_1|$, due to Lemma~\ref{lem:MonotonicityEC} and Property~\ref{prop:GraphsSameEC}. In the case where
\[f(n-1,|R^{(t+1)}_i|)\geq \beta\]
then from this point on we always get Cases~(\ref{cs:increase}) or (\ref{cs:clique}) since, inductively, the maximum degree will always be $n-1$ and the minimum degree will always be at least $|R^{(t+1)}_i|$.
On the other hand, if
\[f(n-1,|R^{(t+1)}_i|)< \beta\]
then we always get Cases~(\ref{cs:zero}) or (\ref{cs:decrease}) because, inductively, the minimum degree will always be $0$. 

In all possible cases, after at most $2$ rounds it suffices to examine graphs with reduced number of equivalence classes. This proves the upper bound for the convergence.
\end{proof}

\subsection{Extending the Energy on the Degrees} \label{ssec:Local_Rules}

In this section we extend the update rule given in \ref{ssec:degree}. More specifically, we change the definition of energy, from $\mathcal{E}(u,v)=f(d_{G^{(t)}}(u),d_{G^{(t)}}(v))$ to $\mathcal{E}(u,v)=f(g_{G^{(t)}}(u),g_{G^{(t)}}(v))$, for a family of functions $g_{G}:\mathbb{R}^k\rightarrow \mathbb{R}, k\in\mathbb{N}$.

We call a function $g_G(u)$ \textit{degree-like} if it only depends on the neighborhood $N_G(u)$ of node $u$. This dependence is formally translated as follows: assuming that the neighborhood of node $u$ at time $t$ is $N_{G^{(t)}}(u)$, the neighborhood of node $v$ at time $t'$ is $N_{G^{(t')}}(v)$, and $N_{G^{(t)}}(u) \supseteq N_{G^{(t')}}(v)$, then $g_{G^{(t)}}(u) \geq g_{G^{(t')}}(v)$.  Notice that, generally, the values $t$ and $t'$ may differ. The reason we extend the notion of degree is so that $g$ can represent more interesting rules. For example, we are no longer obliged to handle all nodes in the same manner; nodes can be assigned an importance factor (e.g. a known centrality measure such as their betweenness centrality in $G^{(0)}$), and let $g(u)$ be the sum of these factors of nodes in the neighborhood of $u$.

Additionally, at any time and for any edge we can arbitrarily decide whether the rule will be applied. This means that $C^{(t)}$ can change for different values of $t$, with no restrictions posed. For example, allowing only the preservation of edges from time $t_0$ to time $t_0+1$ would be achieved by setting $C^{(t_0)}=E^{(t_0)}$, and applying the rules only on pairs of nodes whose distance is at most $2$ would be $C^{(t_0)}= \{(u,v)~s.t.~((u,v)\in E^{(t_0)})~or~(\exists~w~s.t.~((u,w),(w,v)) \in E^{(t_0)}\times E^{(t_0)})\}$. We also assume that the function $f$ is proper (symmetric and non-decreasing in both variables). It is easy to see that the update rule in \ref{ssec:degree} is a special case of the current update rule, where the function $g$ is the degree of the node, and $\forall t: C^{(t)}=K_n$.

Notice that the introduction of $C^{(t)}$ allows us to define local update rules. For example, $C^{(t)}$ could be defined in a way that allows an edge to be formed if and only if the previous distance between the two nodes is bounded by some constant $K$.

To show that any such process converges, we define the following:

\begin{definition}
A pair $(t,D)$ is said to be $|D|-Done$ if $t$ is a natural number, $D \subseteq V$ and it holds that the neighborhood of all nodes $u \in D$ doesn't change after time $t$. That is, $N_{G^{(t')}}(u) = N_{G^{(t)}}(u)$, for $t'\geq t$.
\end{definition}

Our convergence proof repeatedly detects $|D|-Done$ pairs with increasing $|D|$. When $D=V$, all neighborhoods do not change, and thus the process converges.

\begin{lemma}\label{lem:Increasing_D_Done}
If there exists a $|D|-Done$ pair $(t,D)$ at time step $t$, then $\exists t'>t$ such that at time step $t'$ there exists a $(|D|+1)-Done$ pair $(t',D')$.
\end{lemma}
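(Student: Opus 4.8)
The plan is to reduce the statement to producing a single new node of $V\setminus D$ whose neighborhood eventually stops changing, and to locate that node as the one of (eventually) extremal $g$-value, exploiting both defining properties of the energy: that $f$ is proper and that $g$ is degree-like.

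First I would record a structural consequence of the hypothesis. Since $(t,D)$ is $|D|$-Done, every $u\in D$ has $N_{G^{(s)}}(u)=N_{G^{(t)}}(u)$ for all $s\ge t$; because adjacency is symmetric, this means every edge with at least one endpoint in $D$ is frozen after time $t$, and hence the value of $g$ at a node of $D$ is also constant for $s\ge t$. Consequently the only edges that can still flip after $t$ lie entirely inside $V\setminus D$. Moreover, if some $z\in V\setminus D$ has $N_{G^{(s)}}(z)$ constant for all $s\ge t'$ with $t'>t$, then $(t',D\cup\{z\})$ is $(|D|+1)$-Done, since the nodes of $D$ remain frozen as well. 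So it suffices to exhibit one node of $V\setminus D$ whose neighborhood ultimately freezes.

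Next I would introduce the extremal node. Because $V$ is fixed and finite there are only finitely many graphs on $V$, so $g$ takes finitely many values and the quantity $g^{*}=\max\{g_{G^{(s)}}(u): u\in V\setminus D,\ s\ge t\}$ is well defined and attained. I would pick a witnessing pair $(u^{*},s^{*})$ attaining $g^{*}$, choosing $s^{*}$ minimal and, among those, the neighborhood $N_{G^{(s^{*})}}(u^{*})$ inclusion-maximal, and then argue that $u^{*}$ is frozen from $s^{*}$ onward; taking $t'=s^{*}+1>t$ and $D'=D\cup\{u^{*}\}$ would finish the proof. The driving observations are: by the degree-like property a strict enlargement of $N(u^{*})$ cannot decrease $g(u^{*})$, so such an enlargement would again attain $g^{*}$ with a strictly larger neighborhood, contradicting maximality; and by symmetry and monotonicity of the proper $f$, a neighbor $y$ that $u^{*}$ drops at time $s$ satisfies $f(g^{*},g_{G^{(s)}}(y))<\alpha$ while a neighbor $x$ it acquires satisfies $f(g^{*},g_{G^{(s)}}(x))\ge\beta$, so $\alpha\le\beta$ forces $g_{G^{(s)}}(y)<g_{G^{(s)}}(x)$. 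I would use this last inequality to trace a dropped or acquired edge back to the step at which the thresholds created it and contradict either the minimality of $s^{*}$ or the global maximality of $g^{*}$.

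The main obstacle is exactly the simultaneous case: ruling out that $u^{*}$ gains and loses neighbors in the same step while keeping $g(u^{*})=g^{*}$, a move that is invisible to the coarse comparisons of $g$-values and of neighborhood sizes and is not constrained by the (arbitrary) interaction graph $C^{(t)}$. To close this I expect to need a finer ``peeling-from-both-ends'' analysis in the spirit of the renormalization Lemma~\ref{lem:Renormalization} and Lemma~\ref{lem:MonotonicityEC}: treat the frozen edges into $D$ as a constant additive shift of every $g$-value in $V\setminus D$ (so that $f$ is replaced by a shifted proper function), and show that an edge incident to the $g^{*}$-extremal node, once the thresholds force its creation, can never afterwards have energy below $\alpha$ unless some node first exceeds $g^{*}$, which is impossible by the choice of $g^{*}$. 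Establishing this monotone ``no-return'' behaviour of the extremal node's incident edges, uniformly over all admissible choices of $C^{(t)}$, is the crux; the reduction to $V\setminus D$, the handling of pure gains and pure losses, and the outer induction on $|D|$ up to $D=V$ are then bookkeeping.
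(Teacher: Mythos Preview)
Your setup---freezing edges touching $D$, passing to $V\setminus D$, and selecting a node $u^{*}$ where $g$ attains its supremum over all future times and all nodes outside $D$---matches the paper's approach. But the proof diverges exactly where you flag the obstacle, and the fix you sketch is not the one that works.

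The paper does \emph{not} attempt to show that $u^{*}$ itself is frozen. Instead it argues a dichotomy: either no neighbor of $u^{*}$ is ever dropped after the chosen time (then, by the degree tiebreak, $u^{*}$ cannot gain a neighbor either, so $u^{*}$ is frozen), or there is a first time $t_2$ at which $u^{*}$ loses some neighbor $w$. In the second branch the paper abandons $u^{*}$ and freezes $w$ instead. The point is that the drop gives $f(g^{*},g_{G^{(t_2-1)}}(w))<\alpha$, and since $g^{*}$ dominates $g_{G^{(t_2-1)}}(v)$ for every $v\in V\setminus D$, monotonicity of $f$ forces $f(g_{G^{(t_2-1)}}(v),g_{G^{(t_2-1)}}(w))<\alpha$ as well; thus $w$ loses \emph{every} edge into $V\setminus D$ at time $t_2$. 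Its neighborhood shrinks, so by the degree-like property $g(w)$ does not increase, and the same inequality with $\beta\ge\alpha$ shows inductively that $w$ never acquires an edge back into $V\setminus D$. Hence $(t_2,D\cup\{w\})$ is $(|D|+1)$-Done.

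Your ``simultaneous gain and loss'' case is a genuine obstruction to freezing $u^{*}$: nothing in the hypotheses prevents $u^{*}$ from swapping to an incomparable neighborhood with the same $g$-value and the same degree, so the renormalization/no-return argument you outline has no handle. The switch to $w$ sidesteps this entirely. Two smaller points: your tiebreak (minimize $s^{*}$ first, then take the neighborhood inclusion-maximal) does not rule out a strictly larger neighborhood appearing at a later time, so the contradiction you claim from ``strict enlargement'' does not follow; and the paper uses maximal \emph{degree} as its tiebreak precisely to exclude pure gains before the first loss, which is what makes $g_{G^{(t_2-1)}}(u^{*})=g^{*}$ hold at the moment the drop occurs.
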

\begin{proof}
Let $t_1\geq t$ be a time-step where some node $u \not\in D$ maximizes the function $g$ over all future time-steps and nodes not in $D$. More formally, we define $t_1\geq t$ as the time-step where there is some node $u \not\in D$ such that $g_{G^{(t_1)}}(u)\geq g_{G^{(t_1')}}(v)$, for all $t_1'\geq t_1$ and $v \in V\setminus D$. If there are many such choices, we arbitrarily pick one where the degree of $u$ is the highest. Let us note that, later in time (say at $t_1'>t_1$), it is entirely possible that $u$'s neighborhood shrinks and thus its $g$ value drops $(g_{G^{(t_1')}}(u) < g_{G^{(t_1)}}(u))$. 

It is guaranteed that $t_1$ exists, as there are finitely many graphs with $|V|$ nodes, and finitely many nodes. Thus, there are finitely many values of $g_{G}(u)$ to appear after time $t$.

Our core idea is that either $u$'s neighborhood stays the same in all subsequent time-steps (and thus $D$ is extended by $u$), or some edge is lost along the way. But if the other endpoint $w$ of the edge can't preserve an edge with $u$, which maximizes $g$, then it doesn't preserve any other edge. Inductively, it will never form any new edge, and thus $D$ can be extended by $w$.

More formally, if neighbors of $u$ in $G^{(t_1)}$ remain neighbors of $u$ in all subsequent time-steps, then, in future time-steps, its neighborhood can only grow from $N_{G^{(t_1)}}(u)$, or stay the same. But if its neighborhood grows, due to the properties of function $g$, its value will not drop and the degree of $u$ will increase. However, the way we picked $u$ doesn't allow this. We conclude that the neighborhood of $u$ doesn't change after time $t_1$, and thus we can extend $D$ by $\{u\}$, that is $(t_1,D \cup \{u\})$ is $(|D|+1)-Done$.

Else, let $t_2>t_1$ be the first time-step that a neighbor $w$ of $u$ in $G^{(t_2-1)}$ is not a neighbor of $u$ in $G^{(t_2)}$. It follows directly from the fact that $u$'s neighborhood stays the same until $t_2-1$ that $g_{G^{(t_1)}}(u)=g_{G^{(t_2-1)}}(u)$. 
Then $w$ has no neighbor $v\in V\setminus D$ in $G^{(t_2)}$, as it holds that $\alpha> f(g_{G^{(t_2-1)}}(u),g_{G^{(t_2-1)}}(w)) = f(g_{G^{(t_1)}}(u),g_{G^{(t_2-1)}}(w)) \geq f(g_{G^{(t_2-1)}}(v),g_{G^{(t_2-1)}}(w))$. The latter inequality follows from the way we picked $t_1$ and $u$. Of course, due to the definition of $D$, no new edge is formed between $w$ and a node in $D$. Thus, the neighborhood of $w$ shrinks, and due to $g$'s properties $g_{G^{(t_2-1)}}(w) \geq g_{G^{(t_2)}}(w)$.

We argue that the neighborhood of $w$ at all subsequent time steps will stay the same, that is $N_{G^{(t_2)}}(w)=N_{G^{(t_2')}}(w)$, $t_2'\geq t_2$. We prove this inductively. It trivially holds for $t_2'=t_2$. Supposing it holds for some $t_2'$, we prove that it also holds for $t_2'+1$. If it doesn't, then $w$ forms an edge with some node $v \in V \setminus D$, due to the definition of $D$. But we know that $\beta\geq \alpha > f(g_{G^{(t_2-1)}}(u),g_{G^{(t_2-1)}}(w)) = f(g_{G^{(t_1)}}(u),g_{G^{(t_2-1)}}(w)) \geq f(g_{G^{(t_2')}}(v),g_{G^{(t_2')}}(w))$ due to $f$ being non-decreasing. We conclude that the neighborhood of $w$ doesn't change after time $t_2$, and thus we can extend $D$ by $\{w\}$, that is $(t_2,D \cup \{w\})$ is $(|D|+1)-Done$.
\end{proof}

\begin{theorem} \label{thm:conv_only}
When $\mathcal{E}(u,v)=f(g_{G^{(t)}}(u),g_{G^{(t)}}(v))$, the dynamic process on an undirected simple graph $G$ converges for any $\alpha, \beta$, proper function $f$, degree-like function $g$ and sequence of interaction graphs $C^{(t)}$.
\end{theorem}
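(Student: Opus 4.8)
The plan is to obtain convergence as an immediate consequence of Lemma~\ref{lem:Increasing_D_Done}, by iterating it until the ``Done'' set exhausts all of $V$. The guiding observation is that once the neighborhood of every node is frozen, the graph itself is frozen, and this is precisely the stabilization condition defined earlier: $G^{(t)}=G^{(t+1)}$.

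First I would fix the base case. The pair $(0,\emptyset)$ is trivially $0$-Done, since the defining condition of a $|D|$-Done pair quantifies over all $u\in\emptyset$ and is therefore satisfied vacuously. This supplies a starting point with $|D|=0$.

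Next I would apply Lemma~\ref{lem:Increasing_D_Done} repeatedly. Starting from the $0$-Done pair, the lemma yields a $1$-Done pair $(t_1,D_1)$ at some time $t_1>0$; feeding this back into the lemma yields a $2$-Done pair at a later time, and so on. Each application is legitimate because the lemma accepts an arbitrary existing $|D|$-Done pair as input, and the preceding step furnishes exactly such a pair. Since each invocation increases the size of the Done set by exactly one, after precisely $n=|V|$ applications I obtain a pair $(t^\ast,V)$ that is $n$-Done, with $t^\ast$ finite (it is the last of finitely many finite times returned by the lemma).

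Finally I would translate $n$-Done into convergence. By the definition of a Done pair, $N_{G^{(t')}}(u)=N_{G^{(t^\ast)}}(u)$ for every node $u\in V$ and every $t'\geq t^\ast$. Taking $t'=t^\ast+1$, all neighborhoods---and hence all edges---agree between $G^{(t^\ast)}$ and $G^{(t^\ast+1)}$, so $G^{(t^\ast)}=G^{(t^\ast+1)}$, which by the stabilization condition means the process converges. I do not expect a genuine obstacle at this level, since the technical heart of the argument is carried by Lemma~\ref{lem:Increasing_D_Done}; the only point that requires care is that the ``Done'' property must persist at all later times, so that nodes added to $D$ in earlier rounds remain frozen when later nodes are appended. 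This persistence is already baked into the conclusion of the lemma, which returns a set $D'\supseteq D$ that is Done at the new, strictly larger time step, so the iteration is consistent throughout.
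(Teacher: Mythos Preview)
Your proposal is correct and follows exactly the paper's own proof: start from the vacuous $0$-Done pair $(0,\emptyset)$, iterate Lemma~\ref{lem:Increasing_D_Done} a total of $|V|$ times to reach a $|V|$-Done pair $(t^\ast,V)$, and observe that frozen neighborhoods for all nodes mean $G^{(t')}=G^{(t^\ast)}$ for every $t'\geq t^\ast$. Your additional remark about persistence of earlier Done nodes is sound and slightly more explicit than the paper, but the argument is otherwise identical.
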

\begin{proof}
It trivially holds that $(0,\emptyset)$ is $0-Done$. By applying Lemma~\ref{lem:Increasing_D_Done} once, we increase the size of $D$ by $1$. Thus, by applying it $|V|$ times, we end up with a $|V|-Done$ pair $(t,V)$. Since all neighborhoods stay the same for all future steps, $G^{(t')}=G^{(t)}$ for all $t'\geq t$.
\end{proof}

\subsection{Moving Beyond Degree} \label{ssec:common_neighborhood}

We define the energy of an edge $(u,v)$ to be $\mathcal{E}(u,v)=|N_{G^{(t)}}(u)\cap N_{G^{(t)}}(v)|$. The rule is only applied on pairs of nodes whose distance is at most $2$ (they are either directly connected or they share a common neighbor). We prove that the process may not converge. To prove this, we provide a certain family of graphs, such that when $G^{(0)}$ is any member of this family and the parameters are $\alpha=\beta=2$, the process doesn't converge.
Furthermore, this family has the property that for any positive number $c$, there exists a member of it such that the cycle size of the process with $\alpha=\beta=2$ is at least $c$. Finally, we provide examples where consecutive graphs do not share any edges at all; thus, even if we stop when consecutive graphs are "close enough", and not necessarily the same, the process still doesn't stabilize. We even give an example where we infinitely swap between a graph and its complement, which is the farthest we could get from "close enough".

\begin{definition} \label{def:GS}
Let $[S]$ denote the set $\{0,1,...,S-1\}$. For each $S$, we define the graph $G_S=(V_S,E_S)$ to be the undirected graph with $S^2$ nodes, where each node is identified by a pair of integers (that is $V_S=[S]\times[S]$) and for each node $(x,y)$ its neighbors are the nodes $(x,y\pm1 \mod S)$ and $(x\pm1 \mod S,y)$.
\end{definition}

One can think of $G_S$ as a $2$-dimensional $S\times S$ grid, such that each point is connected with its four immediate neighbors (up/down/left/
right). These connections are considered modulo $S$, so for a leftmost point at height $y$, its left neighbor is actually the rightmost point at
height $y$, and similarly for the other $3$ directions.

Notice that applying the dynamic process to $G^{(0)}=G_3$, we infinitely swap between $G_3$ and its complement.

Except for the case where $S=1$, which trivially converges, we prove that the process with $G^{(0)}=G_S$ and parameters $\alpha=\beta=2$ doesn't converge, for all other odd values of $S$. To prove this, we need the following lemma.

\begin{lemma}\label{lem:Distinct_Mod}
Let $i,j$ be natural numbers (including zero), and $S>1$ be an odd number. Then it holds that $i \not\equiv i+2^j \mod S$.
\end{lemma}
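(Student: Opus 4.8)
The statement is a purely arithmetic fact, and the plan is to reduce it immediately to a divisibility claim. Observe that $i \equiv i + 2^j \pmod{S}$ holds if and only if $S$ divides the difference $(i+2^j) - i = 2^j$. Since the additive shift by $i$ is irrelevant, the whole lemma is therefore equivalent to the single claim that $S \nmid 2^j$ for every odd $S > 1$ and every natural number $j \geq 0$. So the first step I would take is to discard $i$ entirely and argue by contradiction: suppose $S \mid 2^j$.

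The key step is then a coprimality (equivalently, parity/prime-factorization) argument. Because $S$ is odd, we have $\gcd(S,2) = 1$, and hence $\gcd(S, 2^j) = 1$ as well, since the only prime dividing $2^j$ is $2$. On the other hand, if $S \mid 2^j$ then $\gcd(S, 2^j) = S$. Combining the two gives $S = 1$, contradicting the hypothesis $S > 1$. An equivalent and perhaps more transparent phrasing is to note that $S > 1$ odd must have some odd prime factor $p$, whereas $2^j$ has no odd prime factor at all; thus $p \nmid 2^j$, so $S \nmid 2^j$. Either formulation closes the argument. The boundary case $j = 0$ is covered automatically, since there $2^j = 1$ and $S \nmid 1$ for $S > 1$.

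I do not expect any genuine obstacle here: the lemma is elementary and its proof is a one-line consequence of the fact that odd numbers greater than $1$ share no factor with any power of two. The only thing to be careful about is stating the reduction cleanly (that the statement depends on $i$ only through the trivial cancellation) and invoking the hypothesis that $S$ is \emph{odd and strictly greater than $1$}, both of which are essential: for even $S$ the conclusion can fail, and for $S = 1$ everything is congruent. The real interest of the lemma is downstream, where it guarantees that the repeated doubling arising in the analysis of the grid graphs $G_S$ from Definition~\ref{def:GS} never returns a coordinate to itself modulo $S$, which is exactly what will drive the non-convergence argument for odd $S > 1$.
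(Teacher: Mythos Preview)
Your proposal is correct and follows essentially the same approach as the paper: reduce the congruence to the divisibility statement $S \mid 2^j$, then derive a contradiction from the prime factorizations (the paper phrases it as $2^j = kS$ forcing $kS$ to contain an odd prime while $2^j$ does not). Your gcd formulation is a harmless variant of the same idea, and your remark that both hypotheses $S>1$ and $S$ odd are genuinely needed is a nice addition.
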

\begin{proof}
Suppose that $i \equiv i+2^j \mod S$. Then $i+2^j=i+kS \implies 2^j=kS$ for some integer $k$. But this is impossible, since $2^j$ only contains the number $2$ in its (unique) prime factorization, while $kS$ contains at least one odd prime, due to $S$ being odd.
\end{proof}

Using the above, we are ready to describe all $G^{(t)}$ in the process.

\begin{figure}
\begin{center}
\includegraphics[scale=0.5]{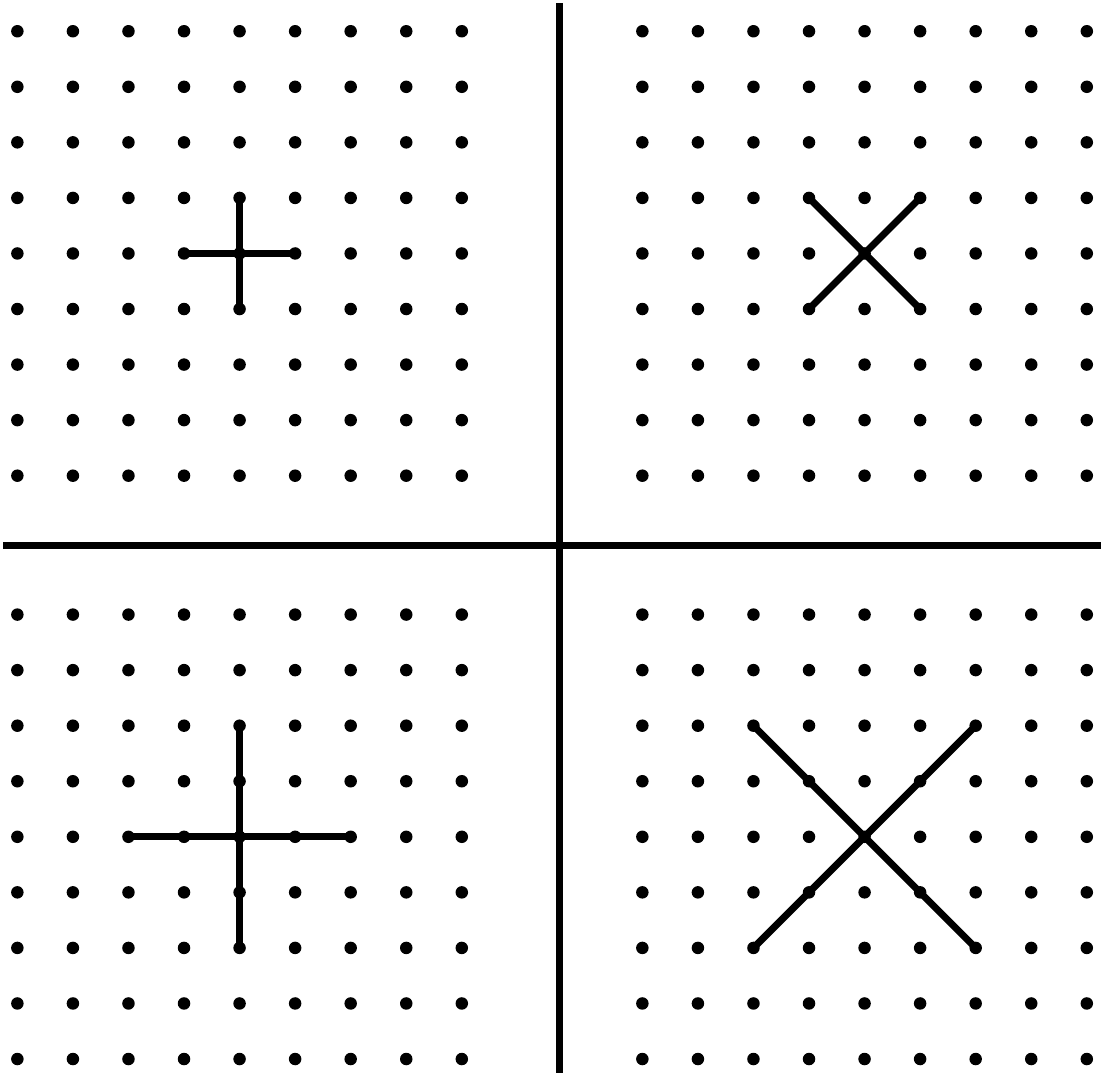}
\end{center}
\caption{We only show the $4$ neighbors of the middle node, as others follow the same pattern. The top-left figure shows the neighbors at $t=0$, the top-right at $t=1$, the bottom-left at $t=2$ and the bottom right at $t=3$.}
\label{fig:counter_example_divergence}
\end{figure}

\begin{lemma}\label{lem:Counterexample_Process}
Let $S\geq 3$ be an odd integer, and the process have parameters $\alpha=\beta=2$ and $G^{(0)}=G_S$. Then any node $(x,y)$ in $G^{(t)}$ has the following 4 neighbors:
\begin{itemize}
	\item $(x \pm 2^l \mod S,y)$ and $(x,y \pm 2^l \mod S)$, if $t=2l$ is even.
	\item $(x \pm 2^l \mod S, y \pm 2^l \mod S)$, if $t=2l+1$ is odd.
\end{itemize}
\end{lemma}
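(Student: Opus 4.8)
The plan is to prove the statement by induction on $t$, maintaining the stronger invariant that every $G^{(t)}$ is a translation-invariant Cayley graph on $\mathbb{Z}_S\times\mathbb{Z}_S$ whose neighbor set is a single offset set $N^{(t)}$: the axis set $\{(\pm 2^l,0),(0,\pm 2^l)\}$ when $t=2l$, and the diagonal set $\{(\pm 2^l,\pm 2^l)\}$ when $t=2l+1$. The base case $t=0$ is exactly Definition~\ref{def:GS}. The engine of the induction is a clean formula for the energy: since $G^{(t)}$ is translation invariant, for a pair $u,v=u+\delta$ the number of common neighbors $\mathcal{E}(u,v)=|N_{G^{(t)}}(u)\cap N_{G^{(t)}}(v)|$ equals the multiplicity of $\delta$ in the difference multiset $N^{(t)}-N^{(t)}=\{n_1-n_2 : n_1,n_2\in N^{(t)}\}$, because each common neighbor $w$ corresponds bijectively to the pair $(w-u,\,w-v)\in(N^{(t)})^2$, which satisfies $(w-u)-(w-v)=\delta$. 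Since $\alpha=\beta=2$, an edge exists in $G^{(t+1)}$ exactly at those offsets $\delta$ that lie at distance at most $2$ in $G^{(t)}$ and whose multiplicity in $N^{(t)}-N^{(t)}$ is at least $2$.

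I would then carry out the two inductive steps by computing this difference multiset. For the even step (writing $a=2^l$, so $N^{(t)}$ is the axis set), the nonzero differences are $(\pm 2a,0),(0,\pm 2a)$ with multiplicity $1$ and $(\pm a,\pm a)$ with multiplicity $2$, while the distance-$1$ offsets $(\pm a,0),(0,\pm a)$ have multiplicity $0$; hence only the four diagonal offsets clear the threshold and $N^{(t+1)}=\{(\pm a,\pm a)\}$, as required. The odd step ($N^{(t)}$ the diagonal set with $a=2^l$) is symmetric: the nonzero differences are $(\pm 2a,0),(0,\pm 2a)$ with multiplicity $2$ and $(\pm 2a,\pm 2a)$ with multiplicity $1$, so the survivors are the axis offsets at distance $2a=2^{l+1}$ and $N^{(t+1)}=\{(\pm 2^{l+1},0),(0,\pm 2^{l+1})\}$. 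In each step one checks that every offset with nonzero multiplicity is at distance at most $2$ (so the rule is genuinely applied to it), and that offsets at distance greater than $2$ remain non-edges because the interaction graph leaves them untouched.

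The main obstacle, and the only place the hypotheses "$S$ odd" and "$S\ge 3$" are used, is that these multiplicities are computed in $\mathbb{Z}^2$ but must survive reduction modulo $S$: I must rule out that two distinct $\mathbb{Z}^2$-offsets collapse to the same residue in a way that either splits one of the four target (multiplicity-$2$) offsets or lifts a below-threshold offset above the threshold. Concretely I would invoke Lemma~\ref{lem:Distinct_Mod} in the form $2^j\not\equiv 0\pmod S$ to obtain $a,2a,4a\not\equiv 0\pmod S$; this shows the four target offsets are pairwise distinct and distinct from $(0,0)$, that the multiplicity-$1$ offsets do not merge into a new multiplicity-$2$ offset, and that no target offset coincides with a multiplicity-$\le 1$ offset. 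The delicate point is $S=3$, where $2a\equiv -a$ forces genuine collapses (for instance an axis-double offset merges with a distance-$1$ offset); here I would observe that every such collapse is between offsets of multiplicities $0$ and $1$, so the merged multiplicity is still $1<2$ and no spurious edge is created, leaving the conclusion intact and recovering the earlier remark that $G_3$ simply swaps with its complement.
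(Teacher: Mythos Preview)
Your proposal is correct and follows essentially the same approach as the paper: induction on $t$, with the inductive step carried out by listing the neighbors-of-neighbors of a node, counting how often each appears, and keeping exactly those that appear at least twice, while using Lemma~\ref{lem:Distinct_Mod} to check that the relevant residues modulo $S$ stay distinct. The paper does this concretely by enumerating the $16$ second-neighbors, whereas you package the same computation as the difference multiset $N^{(t)}-N^{(t)}$ of a Cayley graph on $\mathbb{Z}_S\times\mathbb{Z}_S$; this is a cleaner abstraction but computationally identical. Your treatment of the modular collapses (in particular the $S=3$ case, where $2a\equiv -a$ forces a multiplicity-$1$ offset onto a distance-$1$ offset) is slightly more explicit than the paper's, which simply asserts that ``all aforementioned nodes are distinct'' via the same lemma.
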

\begin{proof}
A visual representation of $G^{(t)}$ is given in Figure~\ref{fig:counter_example_divergence}. Intuitively, the same way we described $G_S$ as a grid where each point is connected with its $4$ immediate neighbors, $G^{(t)}$ can be thought of as a grid where each point is again connected with $4$ other points. If $t=2l$ is even, then we find these neighbors by picking one of the $4$ directions (up, down, left, right) and walking $2^l$ steps. If $t=2l+1$ is odd, then we pick one of the $4$ diagonal directions (up-left, up-right, down-right, down-left) and walk $2^l$ steps. Again, this process is done modulo $S$.

We use induction to prove our lemma.  The lemma holds for $t=0$ due to the definition of $G^{(0)}$. Suppose it holds for $t$, we show that it also holds for $t+1$. If $t$ is even, $t=2l$, then due to our inductive hypothesis, the neighbors of $(x,y)$ are $(x \pm 2^l \mod S,y)$ and $(x,y \pm 2^l \mod S)$. These $4$ nodes are all distinct with each other and distinct from $(x,y)$. We show this for just one pair, namely $(x+2^l \mod S,y)$ and $(x-2^l \mod S,y)$, as all others follow the same reasoning. Suppose they coincided; then $x+2^l \equiv x-2^l \mod S \implies x+2^{l+1} \equiv x \mod S$, which is not allowed by Lemma~\ref{lem:Distinct_Mod}.

To find nodes sharing common neighbors with $(x,y)$, it suffices to check at neighbors of $(x,y)$'s neighbors. Each of these $4$ nodes only has $4$ neighbors. There are $16$ such nodes, but since $(x,y)$ obviously appears $4$ times, only $12$ nodes are of interest. We see that the nodes $(x \pm 2^l \mod S, y \pm 2^l \mod S)$ appear in the neighborhood of $\beta=2$ of $(x,y)$'s neighbors (and thus form an edge with $(x,y)$ at time $t+1$, since they share $\beta$ common neighbors with it). The nodes $(x \pm 2^{l+1} \mod S, y)$ and $(x, y \pm 2^{l+1} \mod S)$ appear in the neighborhood of only $1$ of $(x,y)$'s neighbors, and thus do not form an edge with $(x,y)$ at time $t+1$. To complete the proof, we use the technique of the previous paragraph to show that all aforementioned nodes are distinct.

The case where $t=2l+1$ is completely analogous.
\end{proof}

Now that we have a description of all $G^{(t)}$, it is easy to see that the process doesn't converge.

\begin{lemma} \label{lem:Cycle_Size}
Let $S\geq 3$ be an odd integer, and the process have parameters $\alpha=\beta=2$ and $G^{(0)}=G_S$. Then the process doesn't converge, and the cycle size is $2k$, where $k>0$ is the smallest integer such that $2^k \equiv \pm 1 \mod S$.
\end{lemma}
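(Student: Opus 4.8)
The plan is to leverage the explicit description of every $G^{(t)}$ furnished by Lemma~\ref{lem:Counterexample_Process} and to translate graph equality into a congruence on powers of two. First I would note that the update is a deterministic function of the current graph: both the energy $|N_{G^{(t)}}(u)\cap N_{G^{(t)}}(v)|$ and the interaction set of pairs at distance at most $2$ depend only on $G^{(t)}$. Hence the sequence $(G^{(t)})_{t\ge 0}$ is purely periodic, and its cycle size equals its period, i.e.\ the smallest $T>0$ with $G^{(T)}=G^{(0)}$; it therefore suffices to pin down this $T$.

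Next I would encode each graph by its set of neighbor offsets. By Lemma~\ref{lem:Counterexample_Process} the graph $G^{(t)}$ is vertex-transitive with offset set $\{(\pm 2^l,0),(0,\pm 2^l)\}$ when $t=2l$ is even, and $\{(\pm 2^l,\pm 2^l)\}$ when $t=2l+1$ is odd. Two such graphs are identical exactly when their offset sets coincide modulo $S$; the distinctness of the four offsets, guaranteed by Lemma~\ref{lem:Distinct_Mod}, makes this encoding faithful. From here I would establish two facts: (i) an even-time graph can never equal an odd-time graph, because every even-time offset has a zero coordinate whereas every odd-time offset has both coordinates nonzero (as $2^l\not\equiv 0\bmod S$ for odd $S$); and (ii) $G^{(2l)}=G^{(2l')}$ iff the offset sets $\{2^l,-2^l\}$ and $\{2^{l'},-2^{l'}\}$ agree modulo $S$, i.e.\ $2^l\equiv\pm 2^{l'}\bmod S$, with the completely analogous statement for odd times.

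Combining these, since $G^{(0)}$ is the even-time graph with $l=0$ (offset $1$), any $T$ with $G^{(T)}=G^{(0)}$ must be even by fact (i); writing $T=2m$, fact (ii) forces $2^m\equiv\pm 1\bmod S$. The smallest positive such $m$ is exactly $k$ by definition, and $k$ exists because $2$ is a unit modulo the odd number $S$ and hence has finite multiplicative order. This yields cycle size $T=2k$. Non-convergence is then immediate: consecutive graphs $G^{(t)}$ and $G^{(t+1)}$ always have opposite parity and therefore differ by fact (i), so no $t$ satisfies $G^{(t)}=G^{(t+1)}$.

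I expect the main obstacle to be the bookkeeping in fact (ii): arguing carefully that two offset sets coincide precisely when $2^l\equiv\pm 2^{l'}\bmod S$, ensuring that horizontal offsets can only be matched against horizontal offsets (and likewise vertical against vertical, diagonal against diagonal), and that no degenerate coincidence collapses the four neighbors into fewer — each of which reduces to invoking Lemma~\ref{lem:Distinct_Mod} at the appropriate place.
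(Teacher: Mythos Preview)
Your proposal is correct and follows essentially the same route as the paper: both arguments invoke Lemma~\ref{lem:Counterexample_Process} to reduce graph equality to a congruence $2^l\equiv\pm 2^{l'}\bmod S$, use Lemma~\ref{lem:Distinct_Mod} to rule out degenerate coincidences, and separate even from odd time steps. Your packaging via offset sets is a tidy abstraction of the paper's ad hoc edge checks (the paper tracks the specific edge $(0,0)$--$(0,1)$ rather than the full offset set), and your existence argument for $k$ via the multiplicative order of $2$ is cleaner than the paper's pigeonhole computation; the one loose phrase is that determinism alone gives only eventual periodicity, but your later verification that $G^{(2k)}=G^{(0)}$ retroactively supplies pure periodicity, so no gap remains.
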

\begin{proof}
First of all we prove that such a number $k$ exists. Due to the pigeonhole principle, there is some pair of integers $i,j$, where $0\leq i<j\leq S$ such that $2^i \equiv 2^j \mod S$, which means that $2^i \equiv 2^i\times 2^{j-i} \mod S$. It follows that $2^i(1-2^{j-i}) = z_1S$ for some integer $z_1$, and since $2^i$ and $S$ do not share any common prime factors (due to $S$ being odd), then $z_1=2^iz_2$ for some integer $z_2$. Thus $1-2^{j-i}=z_2N \implies 2^{j-i} \equiv 1 \mod S$, which proves our point.

We notice that $G^{(0)}$ is different from every $G^{(t)}$ for odd $t=2l+1$. To show this, we note that $(0,0)$ and $(0,1)$ are neighbors at $G^{(0)}$, but all neighbors of $(0,0)$ at $G^{(t)}$ are distinct from $(0,1)$. That's because if any of $(\pm 2^l \mod S, \pm 2^l \mod S)$ coincides with $(0,1)$, then it holds that $2^l \equiv 0 \mod S$, which is not allowed by Lemma~\ref{lem:Distinct_Mod}.

It is also straightforward to verify that if $2^t \equiv \pm 1 \mod S$, then $G^{(0)} = G^{(2t)}$, using Lemma~\ref{lem:Counterexample_Process}. On the other hand, if $2^t \not\equiv \pm 1 \mod S$, then $G^{(0)} \neq G^{(2t)}$, since the edge connecting $(0,0)$ and $(0,1)$ in $G^{(0)}$ doesn't correspond to any edge in $G^{(2t)}$. The latter follows from the fact that $(0,0)$, in $G^{(2t)}$ is connected to $(\pm 2^t \mod S, 0)$, which both differ from $(0,1)$, and to $(0,\pm 2^t \mod S)$, which also differ because we assumed $2^t \not\equiv \pm 1 \mod S$.
\end{proof}

The above discussion naturally leads us to our main theorem concerning the convergence of this process.

\begin{theorem}\label{thm:Common_Neibs_Cycle}
Let $c$ be any natural number. When $\mathcal{E}(u,v)=|N_{G^{(t)}}(u)\cap N_{G^{(t)}}(v)|$ and the rule is only applied on pairs of nodes whose distance is at most $2$, it is feasible to find a value $S(c)$ such that the process with parameters $\alpha=\beta=2$ and $G^{(0)} = G_{S(c)}$ has a cycle size of at least $c$.
\end{theorem}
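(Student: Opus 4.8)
The plan is to reduce the statement to a purely number-theoretic question about powers of $2$ and then exhibit an explicit family of moduli for which the relevant exponent is forced to be large. By Lemma~\ref{lem:Cycle_Size}, whenever $S\geq 3$ is odd the process started at $G^{(0)}=G_S$ with $\alpha=\beta=2$ has cycle size exactly $2k$, where $k$ is the least positive integer with $2^k\equiv\pm 1\pmod{S}$. Hence it suffices to find, for every $c$, an odd $S(c)\geq 3$ for which this least exponent $k$ is at least $c/2$; the cycle size $2k$ then automatically exceeds $c$. In other words, the whole task is to make the order-like quantity $k$ as large as we like by choosing $S$ appropriately.

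The key idea is to choose $S(c)=2^M-1$ for a suitably large $M$. This number is odd and, once $M\geq 2$, at least $3$, so Lemma~\ref{lem:Cycle_Size} applies. The virtue of this choice is that $2^M\equiv 1\pmod{S(c)}$ holds trivially, so the least exponent $k$ is at most $M$; the real content is to show that it equals $M$, i.e. that none of $2^1,\dots,2^{M-1}$ is congruent to $\pm 1\pmod{S(c)}$. I would establish this by a size comparison: for $1\leq j\leq M-1$ both $2^j-1$ and $2^j+1$ lie strictly between $0$ and $S(c)=2^M-1$, so neither can be a nonzero multiple of $S(c)$, whence $2^j\not\equiv\pm 1$. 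This pins $k$ down to exactly $M$.

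Finally, choosing $M=\max\{3,c\}$ (or any $M$ with $M\geq 3$ and $2M\geq c$) and setting $S(c)=2^M-1$ yields $k=M$, and therefore a cycle of size $2M\geq c$, as required. The construction is completely explicit, which matches the phrasing that it is \emph{feasible to find} such a value $S(c)$.

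The step I expect to be the main obstacle is the $-1$ case of the size comparison, namely ensuring $2^j\not\equiv -1\pmod{S(c)}$ for all $j<M$. This is precisely where small moduli misbehave: for $M=2$ we have $S=3$ and $2^1\equiv -1\pmod{3}$, which collapses the cycle to size $2$ (the $G_3$ example noted right after Definition~\ref{def:GS}). The inequality $2^{M-1}+1<2^M-1$ needed to rule this out holds exactly when $M\geq 3$, so the threshold $M\geq 3$ is not cosmetic but essential, and I would take care to impose it throughout.
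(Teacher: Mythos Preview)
Your proposal is correct and follows essentially the same approach as the paper: both reduce via Lemma~\ref{lem:Cycle_Size} to exhibiting an odd modulus for which the least $k$ with $2^k\equiv\pm 1$ is large, and both verify this by a direct size comparison on $2^j\pm 1$. The only difference is cosmetic---the paper chooses $S(c)=2^{\lceil c/2\rceil}+1$ (so that $2^{\lceil c/2\rceil}\equiv -1$) whereas you choose $S(c)=2^M-1$ (so that $2^M\equiv 1$); the ensuing arguments are identical, and your careful handling of the threshold $M\geq 3$ mirrors exactly the constraint the paper implicitly uses.
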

\begin{proof}
Picking $S(c)=2^{\lceil \frac{c}{2} \rceil}+1$, we have that the process has a cycle size of $2k$, where $k>0$ is the smallest integer such that $2^k \equiv \pm 1 \mod S(c)$, due to Lemma~\ref{lem:Cycle_Size}. For $k=\lceil \frac{c}{2} \rceil$ we have that $2^k \equiv -1 \mod S(c)$. All values $t$, where $0<t<k$ have $1<2^{t}<S(c)-1$, and thus $2^{t} \not\equiv \pm 1 \mod S(c)$. The cycle size is therefore $2\lceil \frac{c}{2} \rceil \geq c$.
\end{proof}

\subsubsection{Dependence of Convergence on \texorpdfstring{$C^{(t)}$}{C(t)}} \label{sssec:dependence}

In Lemma~\ref{lem:end_cond_general}, we assumed that $C^{(t)}=K_n$ to prove that the stabilization condition is $G^{(t'-1)}=G^{(t')}$. At this point we show that there are cases where if it holds that $\forall t: C^{(t)}=K_n$ then convergence is guaranteed while other choices of $C^{(t)}$ for the same initial graph lead to infinite loops. To demonstrate this, we use the machinery developed to prove Theorem~\ref{thm:Common_Neibs_Cycle}. Consider the two following instances of the problem. Both instances have $\mathcal{E}^{(t)}(u,v)=|N_{G^{(t)}}(u) \cap N_{G^{(t)}}(v)|$ if $d_{G^{(t)}}(u)=d_{G^{(t)}}(v)=4$, and $\mathcal{E}^{(t)}(u,v)=0$ otherwise, $G^{(0)}=\mathcal{K}_9$ (clique with $9$ nodes) and $\alpha=\beta=2$. On the first instance, it holds that $\forall t: C^{(t)}=K_n$. On the second one, it holds that $C^{(0)}$ contains the edges of $G_3$ (which is defined in Definition~\ref{def:GS}), and for $t>0: C^{(t)}=K_n$.

Since $d_{G^{(0)}}(u)=d_{G^{(0)}}(v)=8\neq 4$, the energy of all pairs of nodes is zero. Thus, on the first instance we get $G^{(1)}=\overline{\mathcal{K}_9}$ (the null graph with $9$ nodes). But since all nodes are isolated on $G^{(1)}$, we get that $d_{G^{(1)}}(u)=d_{G^{(1)}}(v)=0\neq 4$ for all pairs of nodes $(u,v)$, which implies convergence, due to the fact that $G^{(1)}=G^{(2)}=\overline{\mathcal{K}_9}$. On the second instance, the same reasoning for $G^{(0)}$ gives $G^{(1)}=G_3$, due to $C^{(0)}$, which preserves edges. But then it is trivial to use Lemma~\ref{lem:Cycle_Size} to prove that this process doesn't converge.

\subsubsection{Disproving a Convergence Conjecture} \label{sssec:disproveconj}

Zhang et al. defined, in \cite{DBLP:conf/kdd/ZhangWWZ09}, the energy of an edge to be $\mathcal{E}(u,v)=| N_{G^{(t)}}(u) \cap N_{G^{(t)}}(v) | + | E^{(t)}(u,v) | + | E(G[N_{G^{(t)}}(u) \cap N_{G^{(t)}}(v)]) |$. This is an extension of our model described above, where we also add the number of edges between common neighbors of $u$ and $v$, denoted by $\left| E(G[N_{G^{(t)}}(u) \cap N_{G^{(t)}}(v)]) \right|$, and the binary term $\left| E^{(t)}(u,v) \right|$, which is the number of edges between $u$ and $v$. The rule is only applied on pairs of nodes whose distance is at most $2$.

Zhang et al. proposed the above process as an enhancement of the contrast between communities so that a community detection algorithm can discover them more easily. They conjectured that this process always converges. However, we disprove their conjecture by providing a counterexample. As in \ref{ssec:common_neighborhood}, we provide examples where consecutive graphs do not share any edges at all; thus, the above holds even if we stop when consecutive graphs are "close enough", and not necessarily the same.

Our proof is heavily based on the counterexamples given above. We first prove that these  counterexamples do not contain any triangles for certain values of $S$.

\begin{lemma}\label{lem:No_Triangles}
Let $S\geq 5$ be any odd integer not divisible by $3$, $\mathcal{E}(u,v)=|N_{G^{(t)}}(u)\cap N_{G^{(t)}}(v)|$, $\alpha=\beta=2$ and $G^{(0)}=G_S$. Then $G^{(t)}$ doesn't contain any triangle, for any $t$.
\end{lemma}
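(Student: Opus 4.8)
The plan is to reduce triangle-freeness to a short modular check, using the explicit description of $G^{(t)}$ supplied by Lemma~\ref{lem:Counterexample_Process}. A triangle is present exactly when some node has two of its neighbors joined by an edge, so it suffices to show that no two neighbors of a node $(x,y)$ are adjacent in $G^{(t)}$. Since the neighbor offsets listed in Lemma~\ref{lem:Counterexample_Process} are the same for every node (they depend only on $t$, modulo $S$), the whole configuration is invariant under the translation $(x,y)\mapsto(x+a,y+b)$, and examining one node's neighborhood is representative of all. Writing $r=2^l$, the four neighbors of $(x,y)$ are the axis offsets $(\pm r,0),(0,\pm r)$ when $t=2l$ is even, and the diagonal offsets $(\pm r,\pm r)$ when $t=2l+1$ is odd.

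First I would tabulate the six pairwise differences of these four neighbors and ask which could itself be a legal edge-offset of $G^{(t)}$ (again read off from Lemma~\ref{lem:Counterexample_Process}). In the even case the ``mixed'' differences have the shape $(\pm r,\pm r)$ with both coordinates nonzero, hence are never axis offsets; the only candidates are the ``antipodal'' differences $(\pm 2r,0)$ and $(0,\pm 2r)$. Such a vector equals an axis offset $(\pm r,0)$ or $(0,\pm r)$ only if $2r\equiv r \mod S$ or $2r\equiv -r \mod S$, i.e. $r\equiv 0 \mod S$ or $3r\equiv 0 \mod S$. The odd case is entirely analogous: the mixed differences $(\pm 2r,0),(0,\pm 2r)$ have a zero coordinate and so are never diagonal offsets, while the antipodal differences $(\pm 2r,\pm 2r)$ are diagonal offsets only if again $r\equiv 0 \mod S$ or $3r\equiv 0 \mod S$.

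It then remains to rule out both congruences. The first is immediate: $r=2^l\not\equiv 0 \mod S$ is exactly the content of Lemma~\ref{lem:Distinct_Mod} (which uses that $S$ is odd). The second is where the hypothesis $3\nmid S$ enters: since $S$ is odd and not divisible by $3$, we have $\gcd(S,6)=1$, hence $\gcd(S,3\cdot 2^l)=1$. If $3r=3\cdot 2^l\equiv 0 \mod S$ held, then $S\mid 3r$, whence $\gcd(S,3r)=S$; together with $\gcd(S,3r)=1$ this forces $S=1$, contradicting $S\geq 5$. Thus neither congruence can hold, no two neighbors of $(x,y)$ are adjacent, and $G^{(t)}$ is triangle-free for every $t$.

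The argument is essentially bookkeeping once Lemma~\ref{lem:Counterexample_Process} is in hand, so I do not expect a genuine obstacle; the two points demanding care are (i) confirming that the mixed pairwise differences can never be edge-offsets, which is precisely what distinguishes the even axis generators from the odd diagonal generators, and (ii) isolating the single nontrivial number-theoretic fact $3\cdot 2^l\not\equiv 0 \mod S$ that the divisibility hypothesis $3\nmid S$ is there to supply. Equivalently, one could phrase the whole check as: the three edge-vectors around a putative triangle sum to zero, which after cancelling the common factor $r=2^l$ demands that a signed sum of three $\pm 1$'s be $\equiv 0 \mod S$ in each coordinate; as such a sum lies in $\{-3,\dots,3\}$, the conditions $S\geq 5$ and $\gcd(S,6)=1$ make this impossible.
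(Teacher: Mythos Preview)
Your main argument is correct and is essentially the paper's proof, just organized around pairwise differences of the four neighbors rather than fixing two vertices by symmetry; both reduce to the same two modular obstructions $2^l\equiv 0\pmod S$ and $3\cdot 2^l\equiv 0\pmod S$, dispatched respectively by Lemma~\ref{lem:Distinct_Mod} and by $\gcd(S,6)=1$.

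One small caveat on your closing ``equivalently'' reformulation: in the even case the edge-offsets are $(\pm 1,0)$ and $(0,\pm 1)$ after factoring out $r$, so each coordinate sum is a sum of three values from $\{-1,0,1\}$, not from $\{\pm 1\}$, and such a sum can genuinely equal $0$ (e.g.\ $1+(-1)+0$). The conditions $S\ge 5$ and $\gcd(S,6)=1$ alone do not exclude this; you need the extra observation that both coordinate sums cannot vanish simultaneously with admissible offsets (a short parity/pigeonhole check). Your primary difference-table argument already handles this correctly, so the lemma stands.
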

\begin{proof}
For simplicity, suppose that $t$ is even, $t=2l$, as the other case is analogous. Suppose that a triangle exists; then, due to the symmetry of $G^{(t)}$ we can assume that $2$ of the $3$ nodes are the neighboring nodes $(i,j)$ and $(i,j+2^l \mod S)$ (Lemma~\ref{lem:Counterexample_Process}). Since the third node is a neighbor of $(i,j+2^l \mod S)$, then it is either one of $(i \pm 2^l \mod S,j + 2^l \mod S)$ or $(i,j+2^{l+1} \mod S)$. But if $(i \pm 2^l \mod S,j + 2^l \mod S)$ were neighbors with $(i,j)$, then either $i \equiv i+2^l \mod S$ or $j=j+2^l \mod S$, which doesn't hold, due to Lemma~\ref{lem:Distinct_Mod}.

Thus, the third node must be $(i,j+2^{l+1} \mod S)$. For $(i,j)$ to be neighbors with $(i,j+2^{l+1} \mod S)$, it holds that $2^{l+1} \equiv -2^l \mod S \implies 3\times 2^l = z_1S$, where $z_1$ is integer. However, $S$ is odd and not divisible by $3$. Thus $z_1=3\times 2^l\times z_2$ for some integer $z_2$, which means $z_2S=1 \implies S=1$. But this is a contradiction, as $S\geq 5$.
\end{proof}

We are now ready to prove that for certain values of $S$, the counterexamples of \ref{ssec:common_neighborhood} produce the same sequence of graphs for both processes, and thus the current process doesn't always converge.

\begin{lemma}\label{lem:Common_Counterexamples}
Let $S\geq 5$ be any odd integer not divisible by $3$. Then, when $\alpha=\beta=2$ and $G^{(0)}=G_S$, both the process with $\mathcal{E}(u,v)=|N_{G^{(t)}}(u)\cap N_{G^{(t)}}(v)|$ and the process with $\mathcal{E}(u,v)=| N_{G^{(t)}}(u) \cap N_{G^{(t)}}(v) | + | E^{(t)}(u,v) | + | E(G[N_{G^{(t)}}(u) \cap N_{G^{(t)}}(v)]) |$ produce the same sequence of graphs.
\end{lemma}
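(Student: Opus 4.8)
The plan is to show that the two energy functions yield the same decision (create, preserve, or delete) on every relevant edge at every time step, which by induction forces both processes to produce identical graphs. The key observation is that the two energies differ only by the two extra terms $|E^{(t)}(u,v)|$ and $|E(G[N_{G^{(t)}}(u)\cap N_{G^{(t)}}(v)])|$. First I would argue that the second of these terms vanishes identically: by Lemma~\ref{lem:No_Triangles}, for $S\geq 5$ odd and not divisible by $3$ the graph $G^{(t)}$ produced by the common-neighbors process contains no triangle at any time $t$. Since an edge among common neighbors of $u$ and $v$ would, together with $u$ (or $v$), close a triangle, the term $|E(G[N_{G^{(t)}}(u)\cap N_{G^{(t)}}(v)])|$ must be zero for all pairs at distance at most $2$.

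Next I would handle the remaining discrepancy, the term $|E^{(t)}(u,v)|\in\{0,1\}$. The argument proceeds by induction on $t$, maintaining the invariant that both processes have produced exactly the same graph $G^{(t)}=G^{(t)}_S$ described by Lemma~\ref{lem:Counterexample_Process}. Assuming this, I would verify that adding the extra $\{0,1\}$ term never changes which of the three cases of algorithm $\mathcal{A}$ applies, given $\alpha=\beta=2$. Concretely, from Lemma~\ref{lem:Counterexample_Process} one knows exactly how many common neighbors each relevant pair has: non-adjacent pairs that acquire an edge do so because they share exactly $\beta=2$ common neighbors, while pairs that lose their edge share fewer than $2$. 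The only subtlety is a pair $(u,v)$ that is currently adjacent (so $|E^{(t)}(u,v)|=1$); for such a pair the common-neighbors energy alone determines deletion, and I would check that in $G^{(t)}_S$ every currently adjacent pair has strictly fewer than $2$ common neighbors, so that even after adding $1$ for the edge itself the total stays below $\beta=2$, forcing deletion in both processes exactly as in the pure common-neighbors process. This uses the distinctness facts of Lemma~\ref{lem:Distinct_Mod} to rule out that two grid-neighbors share a common neighbor.

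I expect the main obstacle to be the careful bookkeeping of common-neighbor counts in $G^{(t)}_S$ under the augmented energy, specifically ensuring that the edge-indicator term cannot push a borderline pair across the threshold in a way that diverges from the pure process. The critical case is exactly the adjacent pairs, where the energies differ by the full $+1$; I would need to confirm that adjacency in $G^{(t)}_S$ always coincides with having $0$ or $1$ common neighbors (never exactly $1$ in a way that $1+1=2\geq\beta$ would wrongly preserve an edge that the pure process deletes). The no-triangle lemma is what guarantees adjacent pairs share no common neighbor at all, so their augmented energy is $0+1+0=1<2$, matching the pure-process deletion. Once these case checks are complete, the induction closes and both processes generate the identical graph sequence, completing the proof.
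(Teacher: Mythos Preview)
Your proposal is correct and follows essentially the same argument as the paper: induct on $t$, use Lemma~\ref{lem:No_Triangles} to kill the $|E(G[N_{G^{(t)}}(u)\cap N_{G^{(t)}}(v)])|$ term, and then observe that adjacent pairs in $G^{(t)}$ have zero common neighbors (else a triangle), so the extra $+1$ from $|E^{(t)}(u,v)|$ leaves the augmented energy at $1<\alpha=2$ and the edge is still deleted. The only cosmetic difference is that the paper phrases the induction directly on the equality $G'^{(t)}=G^{(t)}$ of the two processes rather than invoking the explicit structure of Lemma~\ref{lem:Counterexample_Process}, but the substance is identical.
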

\begin{proof}
Let $G'^{(t)}$ be the graphs of the process with $\mathcal{E}(u,v)=|N_{G^{(t)}}(u)\cap N_{G^{(t)}}(v)|$ and $G^{(t)}$ be the graphs of the other process. It trivially holds that $G'^{(0)}=G^{(0)}$. Suppose that it holds that $G'^{(t)}=G^{(t)}$. We prove that $G'^{(t+1)}=G^{(t+1)}$. First of all, the number of edges between common neighbors at time $t$ is the same, since both graphs are equal. That is $\left| E(G[N_{G^{(t)}}(u) \cap N_{G^{(t)}}(v)]) \right| = \left| E(G[N_{G'^{(t)}}(u) \cap N_{G'^{(t)}}(v)]) \right|$, which, due to Lemma~\ref{lem:No_Triangles}, is equal to $0$. If there is no edge between $u$ and $v$ in $G^{(t)}$, the energy of $(u,v)$ is the same in both processes, while, if an edge exists, the energy in $G^{(t)}$ is equal to the energy in $G'^{(t)}$ plus $1$, due to the term $\left| E^{(t)}(u,v) \right|$. But, even though it is incremented by $1$, it still holds that $\mathcal{E}^{(t)}(u,v) < \alpha = 2$. That is because $(u,v)$ do not have any common neighbor, for, if they had, these three nodes would form a triangle at time $t$ and this would contradict Lemma~\ref{lem:No_Triangles}. Thus, $G'^{(t+1)}=G^{(t+1)}$.
\end{proof}

We are now ready to prove our main theorem for this process.

\begin{theorem}
Let $c>4$ be any natural number. When $\mathcal{E}(u,v)=| N_{G^{(t)}}(u) \cap N_{G^{(t)}}(v) | + | E^{(t)}(u,v) | + | E(G[N_{G^{(t)}}(u) \cap N_{G^{(t)}}(v)]) |$ and the rule is only applied on pairs of nodes whose distance is at most $2$, it is feasible to find a value $S(c)$ such that the process with parameters $\alpha=\beta=2$ and $G^{(0)} = G_{S(c)}$ has a cycle size of at least $c$.
\end{theorem}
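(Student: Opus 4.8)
The plan is to reduce this statement entirely to the common-neighbor counterexample already analyzed, and then choose $S(c)$ carefully so that the reduction is valid. By Lemma~\ref{lem:Common_Counterexamples}, whenever $S\geq 5$ is odd and not divisible by $3$, the process with the Zhang et al.\ energy and the process with $\mathcal{E}(u,v)=|N_{G^{(t)}}(u)\cap N_{G^{(t)}}(v)|$ produce the \emph{same} sequence of graphs when started from $G_S$ with $\alpha=\beta=2$. Hence the two processes have identical cycle sizes, and it suffices to exhibit an $S$ meeting these three arithmetic constraints whose common-neighbor cycle size, as computed by Lemma~\ref{lem:Cycle_Size}, is at least $c$.

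First I would recall from Lemma~\ref{lem:Cycle_Size} that choosing $S=2^m+1$ makes $2^m\equiv -1\pmod S$ while $1<2^t<S-1$ for every $0<t<m$; thus the smallest $k$ with $2^k\equiv\pm 1\pmod S$ is exactly $m$, and the cycle size equals $2m$. So picking any $m\geq\lceil c/2\rceil$ already guarantees a cycle of length at least $c$. The remaining work is to make $S=2^m+1$ simultaneously odd, at least $5$, and \emph{not} divisible by $3$.

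The main (and essentially only) obstacle is the divisibility-by-$3$ condition, which is exactly what distinguishes this theorem from Theorem~\ref{thm:Common_Neibs_Cycle}: the naive choice $S=2^{\lceil c/2\rceil}+1$ used there can fail here. Since $2\equiv -1\pmod 3$, we have $2^m+1\equiv(-1)^m+1\pmod 3$, so $2^m+1$ is divisible by $3$ precisely when $m$ is odd. I would therefore take $m$ to be the smallest \emph{even} integer with $m\geq\lceil c/2\rceil$, and set $S(c)=2^m+1$. Oddness of $S$ is immediate; $m$ even gives $2^m+1\equiv 2\pmod 3$, so $3\nmid S$; and the hypothesis $c>4$ forces $m\geq 4$, hence $S\geq 17\geq 5$. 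Finally $2m\geq 2\lceil c/2\rceil\geq c$, so by Lemma~\ref{lem:Cycle_Size} the common-neighbor process---and therefore, by Lemma~\ref{lem:Common_Counterexamples}, the Zhang et al.\ process---has cycle size $2m\geq c$, which completes the argument.
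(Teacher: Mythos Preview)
Your proof is correct and follows essentially the same route as the paper: reduce to the common-neighbor process via Lemma~\ref{lem:Common_Counterexamples}, then pick $S$ of the form $2^m\pm 1$ so that Lemma~\ref{lem:Cycle_Size} gives cycle size $2m\geq c$ while keeping $S$ odd, at least $5$, and coprime to $3$. The only cosmetic difference is how the $3\mid S$ obstruction is dodged---the paper switches to $S=2^{\lceil c/2\rceil}-1$ when $2^{\lceil c/2\rceil}+1$ is a multiple of $3$, whereas you bump $m$ up to the next even integer and keep the $+1$ form; both are equally valid.
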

\begin{proof}
Due to Lemma~\ref{lem:Common_Counterexamples}, it suffices to prove this for the process with energy definition $\mathcal{E}(u,v)=| N_{G^{(t)}}(u) \cap N_{G^{(t)}}(v) |$, as long as $S$ is odd and not divisible by $3$. If $2^{\lceil \frac{c}{2} \rceil }+1$ is not divisible by $3$, then our theorem holds for $S(c)=2^{\lceil \frac{c}{2} \rceil }+1$, as in Theorem~\ref{thm:Common_Neibs_Cycle}. Else, $2^{\lceil \frac{c}{2} \rceil }+1=3z_1$ for some integer $z_1$. We set $S(c)=2^{\lceil \frac{c}{2} \rceil }-1=3z_1-2$, which is not divisible by $3$. We know that the process has a cycle size of $2k$, where $k>0$ is the smallest integer such that $2^k \equiv \pm 1 \mod S(c)$, due to Lemma~\ref{lem:Cycle_Size}. For $k=\lceil \frac{c}{2} \rceil$ we have that $2^k \equiv 1 \mod S(c)$. All values $t$, where $0<t\leq k-1$ have $1<2^{t}\leq \frac{2^k}{2}=\frac{S(c)+1}{2}<S(c)-1$, and thus it holds that $2^{t} \not\equiv \pm 1 \mod S(c)$. The cycle size is $2\lceil \frac{c}{2} \rceil \geq c$.
\end{proof}

\section{Computing Shortest paths} \label{sec:shortest}

In this section, we follow a rather different approach than the one in Section~\ref{sec:min}. The definition of energy is based on the sequential algorithms for the corresponding shortest paths problem. In this we want to present the expressive power of the restricted network system. More specifically, we solve the following two problems:
\begin{itemize}
	\item Single-Source All-Destination Shortest Paths (SSAD-SP): We are given an initial connected graph $G^{(0)}$, and a specified node $source$ (all other nodes are considered anonymous). We describe a network system converging to a graph that contains all edges belonging to some shortest path whose one endpoint is $source$, and doesn't contain any other edge.
	\item Single-Source Single-Destination Shortest Paths (SSSD-SP): We are given an initial connected graph $G^{(0)}$, and two specified nodes $source$ and $target$ (all other nodes are considered anonymous). We describe a network system converging to a graph that contains all edges belonging to some shortest path between $source$ and $target$ , and doesn't contain any other edge (except for an edge between $source$ and $target$ which signals the convergence of the process).
\end{itemize}	
	
Notice that, due to the nature of the problems, it is necessary to distinguish between the $source$, the $target$, and the rest of the nodes. For this reason, we have $\forall{t}$ that $K^{(t)}(u)=1$ if $u=source$, $K^{(t)}(u)=2$ if $u=target$, and $K^{(t)}(u)=3$ otherwise. However, for clarity, we just write statements of the form "If $u=source$" instead of "If $K^{(t)}(u)=1$".

Our approach for the first problem is similar to a $BFS$ (Breadth-First Search). For the second problem, the previous approach is not enough. This is because, when a $BFS$ from $source$ reaches $target$, not all edges left are part of shortest paths between the two nodes. To overcome this difficulty, we notice that when a $BFS$ from $source$ reaches $target$, the edges visited by the $BFS$ which are also incident to $target$ (distance $1$) are indeed part of the desired output. We mark them (in a way to be specified later), and start a new $BFS$. When this one finishes, we are able to mark edges which are a little farther away from $target$ (distance $2$). We repeat until we mark all edges needed.

We note that the network systems we describe in the following subsections are based on local interactions, in the sense that two nodes get connected, only if at the previous time-step they were within some constant distance (say $15$). Also, the energy of an edge takes into account only information from nodes within some constant distance from one of its endpoints. Finally, $C^{(t)}$ excludes only edges around some constant distance (in $G^{(0)}$) from $source$ and $target$.
In what follows, by $d^{(t)}(u,v)$ we denote the shortest distance between nodes $u$ and $v$ at time $t$.

\subsection{Single-Source All-Destination Shortest Paths (SSAD-SP)}
The bird's-eye-view of our approach is the following: A standard BFS would visit nodes having distance from the $source$ $1$, then $2$, then $3$, and so on. The only edges to be removed, are the ones that connect nodes having the same distance from the $source$. We try to recreate this approach; due to the lack of memory in our model, by visiting a node we mean connecting it directly to the source. After everything is finished, we disconnect the source from everything else, except from its initial neighbors.

At each time-step $t>0$, we hold the invariant that $G^{(t)}$ is the same with $G^{(0)}$ for edges far away from $source$ (having an endpoint $u$ where $d^{(0)}(source,u)\geq t+2$). The rest of the edges are exactly the ones we want in our final graph, with the aforementioned exception: $source$ is connected with all nodes $u$ where $d^{(0)}(source,u)\leq t+1$.

Thus, after enough time where there is no node $u$ at distance $d^{(0)}(source,u)\geq t+2$, we almost achieve the desired graph, with the difference being that $source$ is connected with every node. At this point, we drop all edges with one endpoint being $source$. The set $C^{(t)}$ keeps only those edges that were present at $t=0$. Finally, we take extra care in order for the process to converge, and not start all over after this point, connecting $source$ with nodes that shouldn't be connected to it.

More specifically, we can pick any $\alpha$ and $\beta$, as long as $0 < \alpha < \beta$. $C^{(t)}$ contains all edges except those incident to $source$ at time $0$, for all $t$. For compactness, in the definition of energy, we skip symmetric cases; that is, swapping $u$ and $v$ gives the same energy value, even if not stated explicitly.

\[   
\mathcal{E}^{(t)}(u,v) = 
     \begin{cases}
     	0							& if~ t=0, d^{(t)}(source,u) = d^{(t)}(source,v) = 1 \\
		\beta						& if~ t=0, u=source, d^{(t)}(source,v)=2 \\
		0							& if~ d^{(t)}(source,u) = d^{(t)}(source,v) = 2 \\
		0							& if~ u=source, d^{(t)}(source,v)=1, \not\exists w~where~d^{(t)}(source,w)=2 \\
		\beta						& if~ t>0, u=source, d^{(t)}(source,v)=2, u~is~part~of~some~triangle \\
		\alpha						& otherwise \\
     \end{cases}
\]

We refer to these six branches of the energy definition as cases (1), (2)... (6).

\begin{theorem}
There exists a network system that solves the SSAD-SP problem.
\end{theorem}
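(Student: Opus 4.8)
The plan is to establish correctness through a single invariant that precisely describes every intermediate graph, prove it by induction on $t$, and then argue termination together with the fact that the limiting graph is a genuine fixed point of the rule. Concretely, I would prove: for every $t\ge 1$, the graph $G^{(t)}$ agrees with $G^{(0)}$ on all \emph{far} edges (those having an endpoint $u$ with $d^{(0)}(source,u)\ge t+2$), while among the remaining nodes $G^{(t)}$ contains exactly the desired output edges --- every $G^{(0)}$-edge lying on a shortest path from $source$ --- together with the bookkeeping edges joining $source$ to every node $u$ with $d^{(0)}(source,u)\le t+1$. The crucial preliminary observation is that under this invariant the \emph{current} distances from $source$ are trivialized: a node is at current distance $1$ iff its original distance is at most $t+1$, and at current distance $2$ iff its original distance is exactly $t+2$ (it still reaches $source$ through a preserved far edge to a distance-$(t+1)$ node). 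This is what lets the energy cases, which are phrased in terms of $d^{(t)}(source,\cdot)$, implement a breadth-first frontier.

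The inductive step is where the six energy cases get checked one by one. The frontier-expansion half uses case (2) at $t=0$ and case (5) for $t>0$: both assign energy $\beta$ to a pair $(source,v)$ with $v$ at current distance $2$, hence create the edge, extending the ``$\le t+1$'' layer to ``$\le t+2$''. I would note that for $t\ge 1$ the side condition in case (5) that $source$ lie on a triangle is automatically satisfied, since by then $source$ is joined to a distance-$1$ node and an adjacent distance-$2$ node of the next layer; so case (5) fires for \emph{all} current-distance-$2$ nodes simultaneously. The edge-deletion half is handled by case (3): two nodes of the same original distance $k$ are both at current distance $2$ exactly one step before they are absorbed into $source$'s neighborhood (namely at time $t=k-2$), and at that moment case (3) assigns them energy $0$ and deletes their connecting edge; case (1) is the analogous bootstrap that removes same-layer edges among $source$'s original neighbors at $t=0$, which case (3) cannot reach. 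Everything else falls into the default case (6) with energy $\alpha\in[\alpha,\beta)$ and is preserved. I would verify that no desired consecutive-layer edge is ever caught by a deletion case, since such an edge never has both endpoints at the same current distance.

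For termination, once $t+1$ reaches the eccentricity $r$ of $source$ there are no nodes left at current distance $2$; case (4) then assigns energy $0$ to every $(source,v)$ edge and strips $source$ of all its bookkeeping connections. Here the role of the interaction graph $C^{(t)}$ is essential: since it excludes precisely the edges incident to $source$ at time $0$, those original edges are never touched by the rule and survive, so after this step $source$ retains exactly its original neighbors and the graph equals the target subgraph. The final task is to show this graph is fixed: I would argue that no two original neighbors of $source$ are adjacent (their same-layer edge was deleted), so $source$ lies on no triangle and case (5) cannot re-fire; distance-$2$ nodes now exist again, so case (4) is dormant; and every remaining edge, present or absent, evaluates through cases (3) or (6) to an energy below $\beta$, so nothing changes. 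Hence $G^{(r)}=G^{(r+1)}$ and convergence follows, the stability of the fixed point being consistent with Lemma~\ref{lem:end_cond_general} since $C^{(t)}$ is constant.

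The main obstacle I anticipate is not the frontier expansion but the \textbf{no-restart} guarantee at termination: the whole reason cases (2) and (5) are split, and the reason case (5) carries the triangle side condition, is to ensure that after $source$ is disconnected the process does not simply relaunch the BFS and oscillate. Making this airtight requires the observation that the deletion of same-layer edges (via cases (1) and (3)) removes exactly the adjacencies that would otherwise place $source$ on a triangle, so the triangle test functions as a self-certifying stopping condition. The remaining effort is the careful but routine verification that the current-distance reading of each case is correct throughout the induction, and that the far part of the graph is genuinely untouched until the frontier reaches it.
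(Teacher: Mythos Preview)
Your proposal is correct and follows essentially the same approach as the paper: both establish the same layered invariant on $G^{(t)}$ (the paper's Claim), prove it by induction on $t$ using the case analysis of $\mathcal{E}^{(t)}$, invoke case~(4) together with $C^{(t)}$ for the stripping/termination step, and rely on the triangle side-condition in case~(5) for the no-restart guarantee at the fixed point. Your explicit articulation of why current distances collapse to $\{1,2\}$ under the invariant, and your emphasis on the no-restart obstacle, make the mechanism behind the energy definition clearer than the paper's own presentation, but the argument is the same.
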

\begin{proof}
We argue that the above definitions of $\alpha, \beta, C^{(t)}, \mathcal{E}^{(t)}(u,v)$ define a network system that solves the SSAD-SP problem. Let us call an edge $(u,v)$ $k-good$ if it is part of the desired output, and $d^{(0)}(source,u)\leq k$, $d^{(0)}(source,v)\leq k$. Similarly, $k-bad$ edges are not part of the desired output, and $d^{(0)}(source,u)\leq k$, $d^{(0)}(source,v)\leq k$.

We first claim the following:
\begin{claim}
Let $t'$ be the first time-step that $source$ is connected with all nodes. Then, for $G^{(t)}$, where $0<t\leq t'$, it holds that:
	\begin{itemize}
		\item all $(t+1)-good$ edges $(u,v)$ are present
		\item $source$ is connected with all nodes $u$ such that $d^{(0)}(source,u)\leq t+1$, and no other node
		\item no $(t+1)-bad$ edge is present, except for those whose one endpoint is $source$
		\item all other edges are the same as in $G^{(0)}$
	\end{itemize}	
\end{claim}

Supposing that our claim holds, we prove the main statement, and then proceed to prove our claim. First of all, either $t'=0$, or by our claim, $source$ acquires new neighbors at every time step. Thus, $t'$ always exists. If $t'=0$, all edges not incident to $source$ drop (case (1)), and all edges incident to $source$ are preserved by $C^{(t)}$. At subsequent time steps, case (6) and $C^{(t)}$ preserve everything, and thus we reach convergence. Obviously this graph is the desired one, as $source$ was connected with everything at $t=0$.

Else, $t'>0$. Due to our claim, $G^{(t')}$ is the desired graph, except for the edges where one endpoint is $source$. At $t'+1$, all these $(t'+1)-bad$ edges are dropped, due to case (4), and no other change takes place (case (6) and $C^{(t)}$). Thus, we reached the desired graph. To prove that the process converges at this point, notice that at time step $t'+2$, $source$ is not contained in any triangle, as that would violate that $1-bad$ edges were not present at time $t'$. Thus case (5) is never applied. Case (3) is not applied on existing edges, as there were no $2-bad$ edges at time $t'$, and case (4) is not applied as the condition for $w$ is not met. Thus only case (6) is applied, and convergence is reached.

Now, to prove our claim, we notice that it trivially holds if $t'=0$, as there is no $t$ for which $0<t\leq 0$. Thus, initially $source$ is not connected to every node. At $t=1$ all edges connecting nodes with distance from $source$ equal to $1$ are dropped (case (1)), and same for distance $2$ (case (3)). All nodes with distance $2$ are connected to $source$ (case (2)), and all other edges are preserved (case (6)). We conclude that our base-case holds.

Suppose our claim holds for time $t$. All nodes $u$ where $d^{(t)}(source,u)=2$ have $d^{(0)}(source,u)=t+2$, as they have a neighbor $v$ connected with $source$ (by the induction hypothesis this means that $d^{(0)}(source,v)\leq t+1$) and they are not connected to $source$ (by the induction hypothesis $d^{(0)}(source,u)\geq t+2$). Thus, case (3) implies that no $(t+2)-bad$ edge is present in $G^{(t+1)}$, except for those whose one endpoint is $source$. Case (5) implies that $source$ is connected with all nodes $u$ such that $d^{(0)}(source,u)\leq t+2$, and no other node. Case (6) implies that all $(t+2)-good$ edges $(u,v)$ are present in $G^{(t+1)}$, and all other edges are the same as in $G^{(0)}$.
\end{proof}

\subsection{Single-Source Single-Destination Shortest Paths (SSSD-SP)}
A standard algorithm for this problem is that of Listing~\ref{lst:sssd-sp-linear}. In our case, we can't recreate this approach, because it is not clear to us how to achieve the loop which scans nodes in decreasing distance from the $source$.

\begin{lstlisting} [breaklines=true, caption={SSSD-SP Linear Time Solution},label=lst:sssd-sp-linear,float,abovecaptionskip=-\medskipamount]
Run a BFS from source, until you reach target,
	deleting edges between nodes u,v if d(source,u)=d(source,v)
Mark target
for (i=d(source,target)-1; i>=0; i--)
	Let S(i) be the set of nodes u with d(source,u)=i
	Mark nodes of S(i) which have a marked neighbor
For all edges, keep only those whose both endpoints are marked
\end{lstlisting}

Instead, we modify the algorithm of Listing~\ref{lst:sssd-sp-linear} to the slower algorithm of Listing~\ref{lst:sssd-sp-quadratic}. The algorithm repeatedly runs $BFS$ from $source$. Each $BFS$ stops when it reaches a marked node and marks all visited nodes that can reach marked nodes.

\begin{lstlisting} [breaklines=true, caption={SSSD-SP Quadratic Time Solution},label=lst:sssd-sp-quadratic,float,abovecaptionskip=-\medskipamount]
Mark target
for (i=d(source,target)-1; i>=0; i--)
	Run a BFS from source, until you reach a marked node
	deleting edges between nodes u,v if d(source,u)=d(source,v)
	Mark visited nodes which have a marked neighbor
For all edges, keep only those whose both endpoints are marked
\end{lstlisting}

On a higher level view, we need $4$ different types of marks to implement the above algorithm. First of all, we mark nodes visited during a $BFS$. This is simulated by a connection with $source$, but not with $target$. We also mark nodes so that we later delete edges that do not have both endpoints marked. This is simulated by a connection with $target$, but not $source$. Additionally, when all desired edges have been found, no new $BFS$ should start. This is simulated by an edge between $source$ and $target$. Finally, we must have a way to "inform" nodes that are far from the shortest path, that they should drop all their edges. To do this, we connect each such node both with $source$ and $target$. At the next step, we drop all the edges of this node, and connect its neighbors with $source$ and $target$. 

More specifically, we can pick any $\alpha$ and $\beta$, as long as $0 < \alpha < \beta$. If $d^{(0)}(source,target)\leq 6$, then $C^{(t)}$ consists of all the edges except the desired ones (which are all within a constant distance from $source$). Else, $C^{(2t)}$ excludes edges incident to $target$ at time $0$, and $C^{(2t+1)}$ excludes edges incident to $source$ at time $0$, for $t\geq 0$. For compactness, we skip symmetric cases; that is, we write some $Condition(u,v)$ but it is implied that it suffices that either $Condition(u,v)~OR~Condition(v,u)$ holds. 

Let us say that a node $u$ is $k-close$ if $d^{(0)}(source,u)\leq k$. An edge $(u,v)$ is $k-good$ if it is part of the desired output, and both its endpoints are $k-close$. Similarly, $k-bad$ edges are not part of the desired output, and both their endpoints are $k-close$. Finally, let $D^{(t)}=d^{(t)}(source,target)$.

To make things more clear, we define $4$ different functions, and combine them to create $\mathcal{E}^{(t)}(u,v)$.

First is the function which describes the behavior at $t=0$. It helps solving the problem in case the distance between $source$ and $target$ is at most $6$. It also removes edges between neighbors of $source$, and edges between neighbors of $target$. This is a preprocessing step, that gives us the ability to make simple assumptions on the more general case (like the fact that $source$ and $target$ should not be connected, as a connection between them is used as a signal to stop any new $BFS$).

\[   
\mathcal{E}_{0}^{(t)}(u,v) = 
     \begin{cases}
     	\beta						& if~ u\in \{source,target\}, D^{(t)}\leq 6, \\
     								& ~~~ d^{(t)}(source,v)=D^{(t)}+2 \\
		\beta						& else~if~ u=source, v=target, D^{(t)} \leq 6 \\
     	0							& else~if~ d^{(t)}(source,u)\leq D^{(t)}+1, D^{(t)}\leq 6 \\
     	0							& else~if~ d^{(t)}(source,u)=d^{(t)}(source,v)=1 \\
     	0							& else~if~ d^{(t)}(target,u)=d^{(t)}(target,v)=1 \\
		\alpha						& otherwise \\
     \end{cases}
\]

We also have a function which simulates the $BFS$.

\[   
\mathcal{E}_{BFS}^{(t)}(u,v) = 
     \begin{cases}
		0							& if~ (t~even~OR~D^{(t)}>4), \\
									& ~~~ d^{(t)}(source,u) = d^{(t)}(source,v) = x~where~x\in \{2,3\} \\
		0							& else~if~ t~odd, u=source, D^{(t)} \in \{3,4\} \\
		\alpha						& else~if~ t~odd, D^{(t)} \in \{3,4\}, u=target, d^{(t)}(u,v)=1, \\
									& ~~~ (d^{(t)}(source,v)<D^{(t)}~OR~target~part~of~triangle) \\
		0							& else~if~ t~odd, D^{(t)} \in \{3,4\}, d^{(t)}(source,u)\in \{D^{(t)},D^{(t)}+1\} \\
		\beta						& else~if~ (t~even~OR~D^{(t)}>4), u=source, d^{(t)}(source,v)=2 \\
		\beta						& else~if~ t~odd, u=target, D^{(t)} \in \{3,4\}, \\
									& ~~~ d^{(t)}(source,v)=D^{(t)}-2, d^{(t)}(target,v)=2 \\
		\beta						& else~if~t~odd, u \in \{source,target\}, D^{(t)} \in \{3,4\}, \\
									& ~~~ d^{(t)}(source,v)=D^{(t)}+2 \\
     \end{cases}
\]

Additionally, we need a function for propagating the information for far-away nodes to drop their edges.

\[   
\mathcal{E}_{Drop}^{(t)}(u,v) = 
     \begin{cases}
		0							& if~ d^{(t)}(source,u) = d^{(t)}(target,u) = 1 \\
		\beta						& else~if~ u\in \{source,target\}, v\not\in \{source,target\}, \\
									& ~~~ \exists w \in N_{G^{(t)}}(v) d^{(t)}(source,w)=d^{(t)}(target,w)=1 \\
     \end{cases}
\]

Finally, we have a function that completes the picture, by making some final modifications, and signaling the stop of future $BFS$.

\[   
\mathcal{E}_{Stop}^{(t)}(u,v) = 
     \begin{cases}
		\beta						& if~ u=source, v=target\\
     	\alpha						& else~if~ d^{(t)}((target,u)=d^{(t)}(target,v)=1~OR~ \\
     								& ~~~ (u,v)~on~some~shortest~path \\
     	0							& otherwise
     \end{cases}
\]

Now we are ready to build our main energy function.

\[   
\mathcal{E}^{(t)}(u,v) = 
     \begin{cases}
     	\alpha							& if~ u~is~not~15-close \\
		\mathcal{E}_{0}^{(t)}(u,v)		& else~if~ t=0, (D^{(t)}\leq 6~OR~source~part~of~triangle \\
										& ~~~ OR~(d^{(t)}(target,u)=1,target~part~of~triangle)) \\
		\mathcal{E}_{Drop}^{(t)}(u,v)	& else~if~ \mathcal{E}_{Drop}^{(t)}(u,v)~specified \\
		\mathcal{E}_{Stop}^{(t)}(u,v)	& else~if~ \mathcal{E}_{Stop}^{(t)}(u,v)~specified, D^{(t)}= 5, \\
										& ~~~ source~not~part~of~triangle \\
		\mathcal{E}_{BFS}^{(t)}(u,v)	& else~if~ \mathcal{E}_{BFS}^{(t)}(u,v)~specified, D^{(t)}>2, \\
									& ~~~ (source~part~of~triangle~OR~D^{(t)}>5) \\
		\alpha							& otherwise
     \end{cases}
\]	

Before proving our main theorem, let us first present some important facts about our process. First of all, not any possible edge can be formed.

\begin{lemma} \label{lem:only_st_edges}
The only edges created, at any given time, are edges where one endpoint is $source$ or $target$.
\end{lemma}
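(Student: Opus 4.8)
The statement is really a statement about the energy function together with the update rule $\mathcal{A}$. Recall from the description of $\mathcal{A}$ that an edge $(u,v)$ can be \emph{created} (turned from non-existent into existent) at step $t+1$ only in Case 3 of the rule, i.e.\ only when $\mathcal{E}^{(t)}(u,v)\geq\beta$; if $\mathcal{E}^{(t)}(u,v)<\beta$ then a non-existent edge stays non-existent (Cases 1 and 2). Hence it suffices to prove the purely syntactic claim that for every $t$ and every pair $(u,v)$ with $u\notin\{source,target\}$ and $v\notin\{source,target\}$ we have $\mathcal{E}^{(t)}(u,v)<\beta$. The whole argument is therefore a case inspection of the definition of $\mathcal{E}^{(t)}$, invoking no dynamics beyond this single observation.

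First I would peel off the top-level definition of $\mathcal{E}^{(t)}(u,v)$. Two of its branches return the literal value $\alpha$ (the first branch, ``$u$ is not $15$-close'', and the final ``otherwise'' branch); since $\alpha<\beta$ by hypothesis, these can never trigger edge creation. The remaining branches simply delegate to one of the four sub-functions $\mathcal{E}_0$, $\mathcal{E}_{Drop}$, $\mathcal{E}_{Stop}$, $\mathcal{E}_{BFS}$. Thus the claim reduces to showing that none of these four sub-functions ever outputs a value $\geq\beta$ unless one of its two arguments is $source$ or $target$. Note that the guards governing \emph{when} each sub-function is selected (conditions on $D^{(t)}$, on $source$ being part of a triangle, etc.) are irrelevant here: I only need that whenever a sub-function is consulted and returns $\beta$, an endpoint is already distinguished.

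The core step is then the enumeration of the $\beta$-producing cases of each sub-function, using the paper's convention that symmetric cases are omitted (so ``$u=source$'' is read as ``$u=source$ or $v=source$''). Going through them: in $\mathcal{E}_0$ both $\beta$-cases carry the precondition $u\in\{source,target\}$ (the second one additionally fixes $v=target$); in $\mathcal{E}_{BFS}$ the three $\beta$-cases are each guarded by $u=source$, $u=target$, or $u\in\{source,target\}$; in $\mathcal{E}_{Drop}$ the single $\beta$-case requires $u\in\{source,target\}$; and in $\mathcal{E}_{Stop}$ the unique $\beta$-case requires $u=source,\ v=target$. In every instance at least one endpoint is $source$ or $target$. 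Consequently, for a pair $(u,v)$ with neither endpoint distinguished, every branch of $\mathcal{E}^{(t)}$ evaluates to a value in $\{0,\alpha\}$, which is strictly below $\beta$, so such an edge is never created, establishing the lemma.

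\textbf{Main obstacle.} There is no conceptual difficulty; the only real risk is a bookkeeping one. I must make sure the enumeration of sub-function cases is \emph{exhaustive}, so that no $\beta$-valued branch is overlooked, and I must apply the symmetry convention consistently so that a ``$u=source$'' guard is correctly understood to also cover the mirror case $v=source$ (otherwise one might wrongly fear a $\beta$ value with the distinguished node sitting in the second slot). Getting these two points right is the entire content of the proof.
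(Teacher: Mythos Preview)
Your proposal is correct and follows exactly the same approach as the paper: the paper's proof is the single sentence ``Follows directly from the energy definitions, whenever the energy is $\beta$,'' and your write-up is simply a fully spelled-out version of that inspection. There is no substantive difference in strategy, only in level of detail.
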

\begin{proof}
Follows directly from the energy definitions, whenever the energy is $\beta$.
\end{proof}

The following Lemma states that if the shortest paths are too small, it is trivial to detect them.

\begin{lemma} \label{lem:Short_Dist_t0}
If it holds that $D^{(0)}\leq 6$, then for all $t>0$, the edges of $(D^{(0)}+1)-close$ nodes, except for $source$ and $target$, are exactly the desired ones.
\end{lemma}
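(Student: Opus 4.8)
The plan is to exploit the two features that make the short-distance case rigid: the constant bound $D^{(0)}\le 6$, which confines the entire argument to the fixed family of distance layers $0,1,\dots,D^{(0)}+2$ around $source$, and Lemma~\ref{lem:only_st_edges}, which guarantees that no edge between two nodes other than $source$ and $target$ is ever created. The latter is decisive: writing $W=\{u:u\text{ is }(D^{(0)}+1)\text{-close}\}\setminus\{source,target\}$, the set of edges with both endpoints in $W$ can only shrink over time, starting from its value in $G^{(0)}$. Consequently it suffices to prove two things: (a) after the single step governed by $\mathcal{E}_{0}^{(t)}$ at $t=0$, this set already equals the set of desired (purely intermediate) shortest-path edges, and (b) for every $t>0$ none of the surviving edges is ever deleted, so the set is frozen from $t=1$ on.

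For step (a) I would first observe that when $D^{(0)}\le 6$ the second branch of the master energy function selects $\mathcal{E}_{0}^{(t)}$ for every $15$-close pair, and every relevant endpoint is $15$-close since it lies within distance $D^{(0)}+2\le 8$ of $source$. I would then classify each candidate edge $(u,v)$ with $u,v\in W$ by its layers $d^{(0)}(source,u)$ and $d^{(0)}(source,v)$ and read off its $\mathcal{E}_{0}$-value, verifying that the zero-energy branches delete exactly the edges that are \emph{not} on any shortest $source$--$target$ path (same-layer edges, and edges violating $d^{(0)}(source,\cdot)+d^{(0)}(target,\cdot)=D^{(0)}$), while the fallback value $\alpha$ preserves the edges that do lie on a shortest path. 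The degenerate subcase $D^{(0)}\le 2$ is a useful sanity check: then no edge inside $W$ can lie on a shortest $source$--$target$ path at all (every such path routes through $source$ or $target$), so the desired intermediate set is empty and indeed every edge of $W$ is deleted, consistent with the statement.

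For step (b) I would use that once $\mathcal{E}_{0}$ has fired, the branch of $\mathcal{E}_0$ that joins $source$ and $target$ has made $D^{(t)}$ collapse to a small value, so the master function can only route an edge with both endpoints in $W$ through $\mathcal{E}_{Drop}^{(t)}$, through $\mathcal{E}_{Stop}^{(t)}$, or to the fallback $\alpha$. None of these lowers the energy of a genuine intermediate shortest-path edge below $\alpha$: the only deletion branch of $\mathcal{E}_{Drop}$ fires on edges incident to a node adjacent to both $source$ and $target$, which a shortest-path node at a layer $1\le i\le D^{(0)}-1$ is not once $D^{(0)}\ge 3$; $\mathcal{E}_{Stop}$ assigns $\alpha$ to every edge lying on a shortest path; and the fallback equals $\alpha$. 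Combined with Lemma~\ref{lem:only_st_edges}, which forbids recreation of any such edge, this freezes the edge set inside $W$ at its $G^{(1)}$ value, which by step (a) is exactly the desired set, for all $t>0$.

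The main obstacle is the bookkeeping of distances. Every branch predicate is phrased in terms of $d^{(t)}(source,\cdot)$ and $d^{(t)}(target,\cdot)$ in the \emph{current} graph, and these shift abruptly once $\mathcal{E}_{0}$ rewires $source$ and $target$ to whole layers of nodes; I would have to re-evaluate each predicate in the rewired graph at every step, while respecting the ``skip symmetric cases'' convention and the strict branch-priority order of the master function. Verifying step (a) — that the zero-cases of $\mathcal{E}_0$ single out precisely the non-shortest-path edges uniformly over $D^{(0)}\in\{1,\dots,6\}$, including at the boundary layers $D^{(0)}+1$ and $D^{(0)}+2$ where the $\beta$-connections to $source$ and $target$ are introduced — is where essentially all of the case work lives, and is the step I expect to be the most delicate.
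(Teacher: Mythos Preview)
There is a genuine gap: you have overlooked the mechanism that actually preserves the desired edges. Recall the definition of the interaction graph in this subsection: when $D^{(0)}\le 6$, the set $C^{(t)}$ consists of all pairs \emph{except} the desired ones. Hence the desired shortest-path edges are never touched by the update rule at any time step; their preservation is trivial and has nothing to do with the energy value they would receive. The paper's proof uses exactly this (``the desired ones, which are preserved infinitely long by $C^{(t)}$'') and then only needs Lemma~\ref{lem:only_st_edges} to rule out the creation of bad edges with an endpoint in $\{source,target\}$.

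Your step (a) fails because of this. At $t=0$ with $D^{(0)}\le 6$, take any edge $(u,v)$ with $u,v\in W$. Neither endpoint is $source$ or $target$, so cases (1) and (2) of $\mathcal{E}_0$ are skipped, and case (3) fires since $d^{(0)}(source,u)\le D^{(0)}+1$ and $D^{(0)}\le 6$. Thus $\mathcal{E}_0^{(0)}(u,v)=0$ for \emph{every} such edge, shortest-path or not; the ``fallback $\alpha$'' you rely on is never reached inside $W$. Without $C^{(t)}$ excluding them, all edges in $W$ would be deleted at $t=1$. Your step (b) inherits the same issue: there is no need to argue that $\mathcal{E}_{Drop}$ or $\mathcal{E}_{Stop}$ spare the surviving shortest-path edges, and indeed that argument is fragile (e.g.\ your $\mathcal{E}_{Stop}$ clause is moot since $D^{(t)}=1$ for $t>0$, and your $\mathcal{E}_{Drop}$ claim uses time-$0$ layers while the predicate reads time-$t$ distances). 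Once you invoke $C^{(t)}$, both steps collapse to one line each, and the only real work left is the paper's short contradiction showing that $\mathcal{E}_{Drop}$ cannot create a bad $(source,\cdot)$ or $(target,\cdot)$ edge inside the $(D^{(0)}+1)$-close region.
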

\begin{proof}
At time $t=0$, all $(D^{(0)}+1)-close$ nodes $u$ drop all their $(D^{(0)}+1)-bad$ edges (case (3) of $\mathcal{E}_{0}$), except for the desired ones, which are preserved infinitely long by $C^{(t)}$. Also, an edge between $source$ and $target$ is created (case (2) of $\mathcal{E}_{0}$) and no rule of $\mathcal{E}$ can ever drop an existing $(source,target)$ edge. It suffices to show that no $(D^{(0)}+1)-bad$ edge is created in subsequent steps, connecting a node with either $source$ or $target$, due to Lemma~\ref{lem:only_st_edges}.

This is true because the $(source,target)$ edge means $D^{(t)}=1$ for all $t>0$, and thus $\mathcal{E}_{BFS}$ and $\mathcal{E}_{Stop}$ are never used. For a bad edge to be created, it must be by $\mathcal{E}_{Drop}$; thus $source$ and $target$ get connected with a node $v\not \in \{source,target\}$ which is neighbor of a node $w$ connected both with $source$ and $target$. Since $v$ is assumed to be $(D^{(0)}+1)-close$ and it never creates new edges with nodes different than $source$ or $target$, by Lemma~\ref{lem:only_st_edges}, then $(v,w)$ is part of the shortest path at $t=0$. This means that $w$ is also $(D^{(0)}+1)-close$, or else it wouldn't be connected with $v$. The first time step where a $(D^{(0)}+1)-bad$ edge is created, $(source,w)$ and $(w,target)$ edges are also part of the shortest path. This means that $D^{(0)}= 2$, which implies $(v,w)$ is not part of the shortest path, contradiction. 
\end{proof}

The previous Lemma shows that it is possible to detect the shortest paths, if they are very small. However, we should still drop all edges that do not participate in any shortest path, even if they are very far away from $source$ and $target$. The following Lemma helps us in this direction.

\begin{lemma} \label{lem:Disconnect}
If $G^{(t_0)}$ is the same as $G^{(0)}$, except for edges between $(D^{(0)}+2)-close$ nodes, and all nodes $u$ with $d^{(0)}(source,u) = D^{(0)}+2$ are connected both with $source$ and $target$, but with no other $(D^{(0)}+1)-close$ nodes, then at some time $t'>t_0$ all nodes which are not $(D^{(0)}+1)-close$ become isolated. The rest of $G^{(t')}$ is the same as $G^{(t_0)}$.
\end{lemma}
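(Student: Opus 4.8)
The plan is to isolate the single mechanism responsible for the stated behaviour, namely the ``drop wave'' carried by $\mathcal{E}_{Drop}$, and to show that it sweeps strictly outward from the frontier at distance $D^{(0)}+2$, emptying the far region layer by layer while never reaching back into the $(D^{(0)}+1)$-close part. Call a node $u$ \emph{marked} at time $t$ if $d^{(t)}(source,u)=d^{(t)}(target,u)=1$, i.e.\ it is joined to both $source$ and $target$. The two branches of $\mathcal{E}_{Drop}$ together encode exactly one step of this wave: every edge incident to a marked node is given energy $0$ (first branch), so a marked node loses all of its edges at the next step and becomes isolated; and every node $v\notin\{source,target\}$ that has a marked neighbour is joined to both $source$ and $target$ (second branch), so $v$ becomes marked at the next step. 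By Lemma~\ref{lem:only_st_edges} the only edges ever created are incident to $source$ or $target$, so in the region governed by $\mathcal{E}_{Drop}$ this is the entire dynamics.

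I would then set up a layered invariant. Writing $L_j=\{u:d^{(0)}(source,u)=j\}$, the claim is that at time $t_0+k$ the layers $L_{D^{(0)}+2},\dots,L_{D^{(0)}+1+k}$ are isolated, the layer $L_{D^{(0)}+2+k}$ is precisely the set of marked nodes, every edge with an endpoint in some $L_j$ with $j\ge D^{(0)}+3+k$ agrees with $G^{(0)}$, and the subgraph induced on the $(D^{(0)}+1)$-close nodes coincides with that of $G^{(t_0)}$. The base case $k=0$ is exactly the hypothesis of the lemma.

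For the inductive step I would trace one application of $\mathcal{E}$ and check four points. (i) The marked layer $L_{D^{(0)}+2+k}$ drops every edge, since each such edge meets a marked node and so triggers the first branch of $\mathcal{E}_{Drop}$; once isolated a node has no neighbour to fire the second branch and is never reconnected, so it stays isolated. (ii) The next layer becomes marked: the neighbours of a node in $L_{D^{(0)}+2+k}$ lie in $L_{D^{(0)}+1+k}\cup L_{D^{(0)}+2+k}\cup L_{D^{(0)}+3+k}$, where the first is already isolated (for $k=0$ it is the frozen layer $L_{D^{(0)}+1}$, disconnected from the frontier by hypothesis) and hence carries no edge, the second consists of marked nodes that are themselves being dropped, and only $L_{D^{(0)}+3+k}$ supplies new nodes, which the second branch joins to $source$ and $target$. (iii) No edge is ever created into the frozen region; this is where the hypothesis is essential. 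The only chance for the wave to touch a $(D^{(0)}+1)$-close node is via the inward neighbours of the \emph{initial} frontier $L_{D^{(0)}+2}$, i.e.\ via $L_{D^{(0)}+1}$, and the hypothesis deletes exactly those edges; every later marked layer sits at distance $\ge D^{(0)}+2$ and has all its neighbours at distance $\ge D^{(0)}+2$, so the wave is monotone outward and never reaches a node with $d^{(0)}(source,\cdot)\le D^{(0)}+1$. (iv) Every edge not yet reached by the wave is preserved: for such an edge neither endpoint is marked, so $\mathcal{E}_{Drop}$ is unspecified, and one must confirm that the remaining guards are dormant during the drop phase so that the default value $\alpha$, together with $C^{(t)}$ shielding the original $source$/$target$ edges, keeps the edge unchanged.

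Finally, since $G$ is finite there is a largest nonempty layer $L_{j^{\ast}}$, so after $k=j^{\ast}-(D^{(0)}+1)$ steps the invariant states that every layer with index $\ge D^{(0)}+2$ is isolated and the $(D^{(0)}+1)$-close subgraph is intact; taking $t'=t_0+k>t_0$ yields the claim. I expect the main obstacle to be point (iv): the full energy $\mathcal{E}$ is a cascade of six guarded branches, and I must certify that throughout the whole drop phase only the locality default and $\mathcal{E}_{Drop}$ are ever active — in particular that the $D^{(t)}$-dependent guards of $\mathcal{E}_{BFS}$ and $\mathcal{E}_{Stop}$ and the various ``part of a triangle'' conditions never accidentally fire and corrupt either the frozen region or the clean outward march of the wave.
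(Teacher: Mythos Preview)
Your proposal is correct and follows essentially the same approach as the paper: both identify $\mathcal{E}_{Drop}$ as the sole active mechanism and argue that the ``drop wave'' sweeps outward layer by layer from $L_{D^{(0)}+2}$, with connectivity of $G^{(0)}$ guaranteeing every far layer is eventually reached. Your layered invariant is a more explicit formalization of what the paper sketches in two sentences.

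The one place where you are more tentative than necessary is your point (iv). The paper dispatches this in a single line: since the nodes of $L_{D^{(0)}+2}$ are joined to both $source$ and $target$ at time $t_0$, we have $D^{(t_0)}\le 2$, and your own invariant keeps a marked layer present at every subsequent step, so $D^{(t)}\le 2$ throughout the drop phase. This immediately kills the guards of $\mathcal{E}_{BFS}$ (which needs $D^{(t)}>2$) and $\mathcal{E}_{Stop}$ (which needs $D^{(t)}=5$), while $\mathcal{E}_0$ only fires at $t=0$. Hence only $\mathcal{E}_{Drop}$ and the default $\alpha$ are ever live, and your anticipated obstacle dissolves.
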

\begin{proof}
Notice that if at time $0$ an edge is connected to a non $(D^{(0)}+2)-close$ node, then it is not in the neighborhood of $source$ or $target$, nor in some of their shortest paths, and thus it is not contained in $C^{(t)}$, for any $t$.

Due to our assumptions, $D^{(t_0)}\leq 2$. This means that only $\mathcal{E}_{Drop}$ may alter edges. At time $t_0$, all nodes $u$ with $d^{(0)}(source,u)=D^{(0)}+2$ get isolated (case (1) of $\mathcal{E}_{Drop}$) and nodes $v$ with $d^{(0)}(source,v)=D^{(0)}+3$ get connected with both $source$ and $target$ (case (2) of $\mathcal{E}_{Drop}$). This process continues until all non $(D^{(0)}+1)-close$ nodes have become isolated. Let us note that all these nodes are reachable from the $(D^{(0)}+2)-close$ nodes, as $G^{(0)}$ is assumed to be connected.
\end{proof}

The following Lemma presents the ability of our process to simulate the $BFS$ of Listing~\ref{lst:sssd-sp-quadratic}.

\begin{lemma} \label{lem:BFS_sssd}
If $D^{(t_0)}>5$ and $source$ is not part of any triangle, then there is an odd time $t'>t_0$ such that the differences between $G^{(t_0)}$ and $G^{(t')}$ are that $D^{(t')}\in \{3,4\}$ and $source$ is connected to all nodes $u$ with $d^{(t)}(source,u)\leq D^{(t_0)}-D^{(t')}+1$. Furthermore, nodes $v,w$ with $d^{(t_0)}(source,v)=d^{(t_0)}(source,w) \leq D^{(t_0)}-D^{(t')}+2$ are not connected in $G^{(t')}$.
\end{lemma}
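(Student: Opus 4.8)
The plan is to follow a single advancing breadth-first frontier through the ``even'' branch of $\mathcal{E}_{BFS}$ and to finish with a short parity argument. Fix $d=D^{(t_0)}$ and measure every distance in $G^{(t_0)}$. I would prove by induction on $j\ge 0$ that, for as long as the process has not yet hit an odd time with $D^{(t)}\in\{3,4\}$, the graph $G^{(t_0+j)}$ satisfies: \emph{(i)} $D^{(t_0+j)}=d-j$; \emph{(ii)} $source$ is adjacent to exactly the nodes $u$ with $d^{(t_0)}(source,u)\le j+1$; \emph{(iii)} for $j\ge 1$ every edge between two nodes at equal $G^{(t_0)}$-distance at most $j+2$ from $source$ is absent, while every edge between consecutive $G^{(t_0)}$-layers and every edge far from the frontier is unchanged from $G^{(t_0)}$; and \emph{(iv)} $source$ lies in a triangle precisely when $j\ge 1$. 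Evaluating (ii) and (iii) at the stopping index $j=d-D^{(t')}$ gives exactly the two assertions of the lemma, since then $j+1=D^{(t_0)}-D^{(t')}+1$ and $j+2=D^{(t_0)}-D^{(t')}+2$; the equal-distance-$1$ part of (iii) holds because $source$ is triangle-free at $t_0$ and no edge between two of its original neighbours is ever created.

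For the inductive step I would check, edge type by edge type, that the master energy collapses to the even branch of $\mathcal{E}_{BFS}$. An edge between two current-distance-$2$ (resp. current-distance-$3$) nodes receives energy $0$ (case~1) and is dropped; a pair $(source,v)$ with $v$ at current distance $2$ receives energy $\beta$ (case~5) and is created; and every other relevant edge---in particular $source$ to a current-distance-$1$ node, and a current-distance-$1$ to current-distance-$2$ edge---is left unspecified by $\mathcal{E}_{BFS}$, so it falls through to the terminal ``$\alpha$'' branch and, because $0<\alpha<\beta$, is preserved. By (ii) the current-distance-$2$ layer is exactly the $G^{(t_0)}$-layer $j+2$, so this turns layer $j+2$ into new neighbours of $source$, shifts every farther layer one step inward (hence $D$ drops by one and $source$'s neighbourhood extends to $G^{(t_0)}$-distance $j+2$), and deletes the equal-distance edges inside layers $j+2$ and $j+3$; the freshly absorbed layer together with the preserved cross-layer edges creates a triangle at $source$, giving (iv).

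The real work is showing that the master energy genuinely selects $\mathcal{E}_{BFS}$ at every step and that no competing branch fires first, and this is the step I expect to be the main obstacle. Invariant (iv) is the lever: while $D^{(t)}>5$ the guard ``$D^{(t)}>5$'' already forces $\mathcal{E}_{BFS}$, and once $D^{(t)}$ enters $\{3,4,5\}$ the guard ``$source$ part of a triangle'' takes over, valid for all $j\ge 1$. At the same time $\mathcal{E}_0$ is inert for $t\ge 1$; $\mathcal{E}_{Drop}$ is unspecified because no node can be at distance $1$ from both $source$ and $target$ while $D^{(t)}>2$; and $\mathcal{E}_{Stop}$ is skipped precisely because its guard requires $source$ \emph{not} to be in a triangle, which (iv) forbids. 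I would also verify that $C^{(t)}$ never obstructs these moves: the edges created and deleted near the advancing frontier are disjoint from the original $source$/$target$ edges that the alternating $C^{(t)}$ protects, and that the frontier nodes stay $15$-close so the leading guard of the master energy does not freeze them---this is the point where the constant-distance locality of the construction is used.

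It remains to pin down $t'$ by parity. By (i) the value $D^{(t_0+j)}$ decreases by exactly one each step, and while it exceeds $4$ the case~1/case~5 guard ``$t$ even \emph{or} $D^{(t)}>4$'' fires regardless of parity. Let $\tau$ be the first index at which $D^{(\tau)}=4$ (so $\tau-t_0=d-4\ge 2$, placing us in the $j\ge1$ regime). If $\tau$ is odd, set $t'=\tau$ with $D^{(t')}=4$: the guard ``$t$ even or $D>4$'' now fails, the even branch no longer applies, and the process leaves the forward-BFS regime exactly at $t'$. If $\tau$ is even, one further even-branch step (legitimate since $\tau$ is even and $source$ is in a triangle) brings $D$ to $3$ at the odd time $\tau+1$, and we set $t'=\tau+1$ with $D^{(t')}=3$. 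In either case $t'$ is odd with $D^{(t')}\in\{3,4\}$, and $G^{(t')}$ satisfies (ii) and (iii)---which is exactly the conclusion of the lemma.
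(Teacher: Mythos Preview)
Your proposal is correct and follows essentially the same approach as the paper's own proof: an induction showing that at step $t_0+j$ the $source$ is adjacent to the first $j+1$ layers of $G^{(t_0)}$ while same-layer edges up to layer $j+2$ have been dropped, the observation that $source$ acquires a triangle after one step so $\mathcal{E}_{Stop}$ is blocked and $\mathcal{E}_{BFS}$ remains the active branch, and the identical parity argument at $D=4$. Your write-up is more thorough---you explicitly audit each branch of the master energy, the role of $C^{(t)}$, and the $15$-close guard, all of which the paper's proof leaves implicit---but the underlying skeleton is the same.
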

\begin{proof}
Due to our assumptions, the only differences between successive graphs occur due to $\mathcal{E}_{BFS}$. As long as $D^{(t)}\geq 5,t\geq t_0$, $source$ gets connected with neighbors of its neighbors (case (5) of $\mathcal{E}_{BFS}$), and edges between nodes with distance $2$ (respectively distance $3$) are dropped (case (1) of $\mathcal{E}_{BFS}$). Notice that $\mathcal{E}_{Stop}$ never occurs, as, even if $D^{(t)}\leq 5$ for some $t>t_0$, $source$ is part of a triangle, consisting of $source$, one of its initial neighbors, and any neighbor of such a neighbor.

The above discussion is formalized as following: At step $t_0+i,i>0$, $source$ is connected with nodes $u$ where $d^{(t_0)}(source,u)\leq i+1$, and there exists no edge $(v,w)$ where $d^{(t_0)}(source,v)=d^{(t_0)}(source,w)\leq i+2$. When $D^{(t)}=4$, either $t$ is odd, and our lemma follows, or it is even, and thus at the next time step it is odd, and our lemma follows.
\end{proof}

We are now ready to prove our main theorem.

\begin{theorem}
There exists a network system that solves the SSSD-SP problem.
\end{theorem}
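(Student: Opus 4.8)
The plan is to show that the composite energy $\mathcal{E}^{(t)}$ faithfully simulates the quadratic algorithm of Listing~\ref{lst:sssd-sp-quadratic}, reading the four component functions as the four marks described informally above: a connection to $source$ but not $target$ encodes ``visited by the current BFS''; a connection to $target$ but not $source$ encodes a permanent mark (a node lying on a shortest path); a double connection to both $source$ and $target$ flags a node slated to drop all its edges; and the $(source,target)$ edge itself is the global stop signal. First I would dispose of the short regime $D^{(0)}\le 6$ by invoking Lemma~\ref{lem:Short_Dist_t0}, which already certifies that every $(D^{(0)}+1)$-close node keeps exactly the desired edges and that the $(source,target)$ stop edge is laid down; combined with Lemma~\ref{lem:only_st_edges} (no edge avoiding $\{source,target\}$ is ever created) and the preservation guarantee built into $C^{(t)}$, this settles the easy case.

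For the main regime $D^{(0)}>6$ I would argue by induction over the rounds of the outer loop of Listing~\ref{lst:sssd-sp-quadratic}, tracking a marking frontier that sweeps from $target$ back toward $source$. The invariant is: after the $r$-th round, the shortest-path nodes in the $r$ layers nearest $target$ are connected to $target$ (marked), all off-path edges among the already-processed layers have been removed, and the rest of the graph still agrees with $G^{(0)}$. Each round has two phases. In the BFS phase I would apply Lemma~\ref{lem:BFS_sssd}: as long as $source$ participates in a triangle (forced artificially by $source$, an initial neighbor, and one of its neighbors) or $D^{(t)}>5$, the branches of $\mathcal{E}_{BFS}$ connect $source$ to neighbors-of-neighbors and delete same-distance edges, so after an odd number of steps we reach $D^{(t)}\in\{3,4\}$ with $source$ joined to the whole current BFS horizon. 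In the marking phase, the odd-$t$, $D^{(t)}\in\{3,4\}$ branch of $\mathcal{E}_{BFS}$ that connects $target$ to nodes at the prescribed distance (distance $2$ from $target$, i.e. adjacent to an already-marked node) fires; this is exactly the step ``Mark visited nodes which have a marked neighbor'', and it advances the frontier by one layer.

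Once the frontier reaches $source$, the distance stabilizes at $D^{(t)}=5$ with $source$ no longer in a triangle, so the guard of $\mathcal{E}_{Stop}$ is met: it installs the $(source,target)$ edge and, on the remaining pairs, retains value $\ge\alpha$ only for edges incident to $target$ at distance $1$ or lying on a shortest path, dropping everything else. At this point Lemma~\ref{lem:Disconnect} handles the region outside the shortest-path core: the double connection to $source$ and $target$ propagates outward through $\mathcal{E}_{Drop}$, isolating in finitely many steps every node that is not $(D^{(0)}+1)$-close while leaving the core intact. Convergence then follows because the $(source,target)$ edge pins $D^{(t)}=1$ forever, so neither $\mathcal{E}_{BFS}$ nor $\mathcal{E}_{Stop}$ can reactivate, $\mathcal{E}_{Drop}$ has nothing left to propagate, and $C^{(t)}$ together with the default-$\alpha$ branch freezes the surviving edges, yielding $G^{(t')}=G^{(t'+1)}$.

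I expect the main obstacle to be the precise bookkeeping of which branch of $\mathcal{E}^{(t)}$ fires at each step, since correctness hinges on a delicate interaction between the parity of $t$, the current value of $D^{(t)}$, the $d^{(0)}$-versus-$d^{(t)}$ distinction (BFS keeps rewiring distances to $source$), and the triangle guards that separate ``BFS in progress'' from ``advance the frontier'' from ``stop''. In particular the argument must rule out any premature firing of $\mathcal{E}_{Stop}$ during a BFS phase, which is exactly what the artificial triangle at $source$ guarantees, and must exclude spurious edge creation, where Lemma~\ref{lem:only_st_edges} and the alternation of $C^{(2t)}$/$C^{(2t+1)}$ excluding $target$- and $source$-incident edges become essential.
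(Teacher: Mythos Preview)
Your overall strategy matches the paper's: dispose of $D^{(0)}\le 6$ via Lemma~\ref{lem:Short_Dist_t0}, then in the long regime iterate BFS-then-mark using Lemma~\ref{lem:BFS_sssd} and the odd-$t$, $D^{(t)}\in\{3,4\}$ branch of $\mathcal{E}_{BFS}$, terminate via $\mathcal{E}_{Stop}$ once $D^{(t)}=5$ with $source$ triangle-free, and argue convergence from $D^{(t)}=1$.

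There is, however, one concrete misplacement that would break the argument as written. You invoke Lemma~\ref{lem:Disconnect} (the $\mathcal{E}_{Drop}$ propagation that isolates the far-away region) \emph{after} $\mathcal{E}_{Stop}$ has fired. In the actual construction the trigger is case~(7) of $\mathcal{E}_{BFS}$: at the odd step $t'$ with $D^{(t')}\in\{3,4\}$, both $source$ and $target$ get connected to every node at $d^{(t')}(source,\cdot)=D^{(t')}+2$, i.e.\ to the layer $d^{(t_0)}(source,\cdot)=D^{(t_0)}+2$. This happens at the end of the \emph{first} BFS, not at the end of the whole process; $\mathcal{E}_{Drop}$ then propagates outward and isolates all non $(D^{(0)}{-}1)$-close nodes \emph{before} the subsequent BFS rounds begin. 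The paper's proof explicitly relies on this ordering (``Repeating the same analysis is even easier \ldots\ $\mathcal{E}_{Drop}$ never occurs again, as there are no nodes left at a proper distance''). If you postpone Lemma~\ref{lem:Disconnect} until after $\mathcal{E}_{Stop}$, its hypotheses no longer hold (the graph is no longer $G^{(0)}$ outside a close neighbourhood, and the $(D^{(0)}{+}2)$-layer is not double-connected), and meanwhile your inductive invariant for rounds $r\ge 2$ would have to account for live far-away structure interacting with the $15$-close guard and the $\mathcal{E}_{Drop}$ branch, which it does not.

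A second, smaller omission: the transition between consecutive BFS rounds is carried by the parity-alternation of $C^{(t)}$ together with case~(2) of $\mathcal{E}_{BFS}$ (odd $t$, $D^{(t)}\in\{3,4\}$: $source$ drops all its BFS edges, while $C^{(2t+1)}$ preserves the time-$0$ neighbours). You gesture at $C^{(2t)}/C^{(2t+1)}$ in the last paragraph but never use it; without this reset the next invocation of Lemma~\ref{lem:BFS_sssd} is not available because its hypothesis ``$source$ not part of any triangle'' fails.
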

\begin{proof}
We argue that the above definitions of $\alpha, \beta, C^{(t)}, \mathcal{E}^{(t)}(u,v)$ define a network system that solves the SSSD-SP problem.

First, we study the case where $D^{(0)}\leq 6$. For all $t>0$, the edges of nodes $u$ with $d^{(0)}(source,u)\leq D^{(0)}+1$ are exactly the desired ones, by Lemma~\ref{lem:Short_Dist_t0}. Also, all nodes $u$ with $d^{(0)}(source,u) = D^{(0)}+2$ become disconnected from nodes closer to $source$ (case (3) of $\mathcal{E}_{0}^{(t)}(u,v)$), and get connected with both $source$ and $target$ (case (1) of $\mathcal{E}_{0}^{(t)}(u,v)$). It follows by Lemma~\ref{lem:Disconnect} that all nodes $u$ with $d^{(0)}(source,u)>D^{(0)}+1$ become isolated at some later time, after which they do not form any edge, due to case (1) of $\mathcal{E}^{(t)}(u,v)$. Thus, we converge to the desired graph.

Else, at time $t_0\in \{0,1\}$ it holds that $D^{(t_0)}\geq 6$ and $source$ (respectively $target$) is not part of any triangle (cases (4) and (5) of $\mathcal{E}_{0}^{(t)}(u,v)$). Thus, by Lemma~\ref{lem:BFS_sssd}, we can find an odd time $t'>t_0$ where the differences between $G^{(t_0)}$ and $G^{(t')}$ are that $D^{(t')}\in \{3,4\}$ and $source$ is connected to all nodes $u$ with $d^{(t)}(source,u)\leq D^{(t_0)}-D^{(t')}+1$. Furthermore, nodes $v,w$ with $d^{(t_0)}(source,v)=d^{(t_0)}(source,w) \leq D^{(t_0)}-D^{(t')}+2$ are not connected in $G^{(t')}$. Of course these edges are not part of any shortest path between $source$ and $target$.

At the following time step, as we argued in the proof of Lemma~\ref{lem:BFS_sssd}, changes occur only due to $\mathcal{E}_{BFS}$. As $t'$ is odd, $source$ drops all its edges (case (2) of $\mathcal{E}_{BFS}$) except for those with neighbors at $t=0$, due to $C^{(t)}$. Additionally, $target$ is disconnected by all its neighbors that are not part of a shortest path (cases (3) and (4) of $\mathcal{E}_{BFS}$), and all nodes $u$ with $d^{(t_0)}(source,u)\in \{D^{(t_0)},D^{(t_0)}+1\}$ get disconnected (case (4)). Also, by case (6), $target$ is connected with nodes $v$ for which $d^{(t_0)}(source,v)=D^{(t_0)}-2, d^{(t_0)}(target,v)=2$. These nodes are obviously part of some shortest path. Finally, all nodes $w$ with $d^{(t_0)}(source,w)=D^{(t_0)}+2$ are connected with both $source$ and $target$. By Lemma~\ref{lem:Disconnect}, there is a time $t''>t'$ where the two graphs are the same, with the difference that non $(D^{(0)}-1)-close$ nodes are isolated (except for $target$). These nodes never form any new edge, by case (1) of $\mathcal{E}$.

We notice that $G^{(t'')}$ is the same as $G^{(0)}$ with the following differences: All nodes which are not $(D^{(0)}-1)-close$ (except for $target$) are isolated, all nodes with the same distance from the source at $t=0$ are disconnected from each other, and $target$ is connected with all nodes contained in some shortest path, at distance $2$ from it at $t=0$.

Repeating the same analysis is even easier, because now $target$ is part of a triangle, which means it will not drop any edges, and also $\mathcal{E}_{Drop}$ never occurs again, as there are no nodes left at a proper distance. Thus, there will be an even time $t_1$ where $G^{(t_1)}$ will be the same as $G^{(t'')}$, but $target$ is connected with all nodes contained in some shortest path, at distance $D^{(0)}-4$, and all nodes $u$ which are not part of a shortest path at $t=0$ and $d^{(t_1)}(source,u)\geq 6$ are isolated. Then, $D^{(t_1)}=5$, and only $\mathcal{E}_{Stop}$ occurs. This function connects $source$ and $target$, and drops all $6-bad$ edges and all edges incident to $target$, except for ones it had since $t=0$, due to $C^{(t_1)}$. This means that we reach our desired graph.

To prove convergence, notice that $D^{(t)}=1$ for $t>t_1$, and there is no node connected both with $source$ and $target$, as that would mean the original shortest path had length $D^{(0)}\leq 2$.
\end{proof}

\section{Turing-Completeness}\label{sec:turing}

In this section we describe a local rule under which our Network System is able to simulate Rule $110$, an one-dimensional cellular automaton that Cook proved to be Turing-Complete \cite{DBLP:journals/compsys/000104}. Thus, we prove that there exist local rules for which our Network System is Turing-Complete.

\begin{definition}
Rule $110$ is an one-dimensional cellular automaton. Let $cell^{(t)}(i)$ be the binary value of the $i$-th cell at time $t$. If $cell^{(t)}(i)=0$, then $cell^{(t+1)}(i)=cell^{(t)}(i+1)$. Else, $cell^{(t+1)}(i)$ is $0$ if $cell^{(t)}(i-1)=cell^{(t)}(i+1)=1$, and $1$ otherwise.
\end{definition}

Let $CN^{(t)}(u,v) = |N_{G^{(t)}}(u) \cap N_{G^{(t)}}(v)|$ be the number of common neighbors of $u$ and $v$ at time $t$, and $CE^{(t)}(u,v)=\left| E(G[CN^{(t)}]) \right|$ be the number of edges between the common neighbors of $u$ and $v$ at time $t$. We pick an arbitrary value for $\beta$ and then set $\alpha=\beta$. The energy between $u$ and $v$ is defined as follows:

\[   
\mathcal{E}^{(t)}(u,v) = 
     \begin{cases}
       CE^{(t)}(u,v)+\beta-10	& if~ CN^{(t)}(u,v)=10 ~and~ |E^{(t)}(u,v)|=0	\\
       \beta+12-CE^{(t)}(u,v)	& if~ CN^{(t)}(u,v)=10 ~and~ |E^{(t)}(u,v)|=1	\\
       CE^{(t)}(u,v)+\beta-6	& if~ CN^{(t)}(u,v)=6  							\\
       \beta-1					& otherwise
     \end{cases}
\]

Informally, our simulation of Rule $110$ follows these steps. First, we design a primitive cell-gadget (henceforth $PCG$) that stores binary values, but fails to capture Rule $110$ since it doesn't distinguish between the left and the right cell. Then, by making use of the $PCG$ as a building block, we build the main cell-gadget (henceforth $CG$) that is used to simulate a single cell of the cellular automaton. Finally, each time-step from rule $110$ is simulated using $2$ time-steps of our process; on the first one, some $PCG$s acquire their proper value. On the second step, the rest of the $PCG$s copy the correct value from the ones that already acquired it. Our construction, along with the piecewise energy function allow us to make these two sets of $PCG$s behave differently (some $PCG$s compute the correct value, while the others copy). For clarity purposes, we slightly abuse notation, and consider the time steps of our process to differ by $0.5$ instead of $1$. Thus, we write that the sequence of graphs is $G^{(0)}, G^{(0.5)}, G^{(1)}...$, where graphs $G^{(t+0.5)}$, for $t \in \mathbb{N}$, are transitional states of the graph and have no correspondence with cell states of the cellular automaton.

More formally, a $PCG$ is a pair of nodes $(h_i,l_i)$, such that the existence of an edge between them corresponds to value $1$ and otherwise it correponds to value $0$. This $PCG$ is connected to another $PCG$ $(h_{i+1},l_{i+1})$ by adding all possible edges between these nodes as shown in Figure~\ref{fig:PCG}. In this way, $CE^{(t)}(h_i,l_i)$ would be the sum of values of the two adjacents cell gadgets.

\begin{figure}[t]
\centering
\includegraphics[scale=0.5]{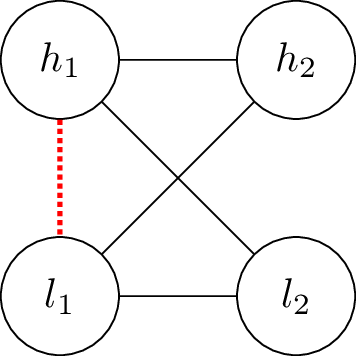}
\caption{We connect $PCG$s $(h_1,l_1)$ and $(h_2,l_2)$ using the $4$ continuous edges. These $4$ edges are never included in any $C^{(t)}$, and thus they never disappear. The dotted edge between $h_1$ and $l_1$ means that the value of $(h_1,l_1)$ is $1$. The value of $(h_2,l_2)$ is $0$.}
\label{fig:PCG}
\end{figure}

The $i-th$ $CG$ that corresponds to the $i$-th cell (we write $CG(i)$) consists of $4$ $PCG$s, which we identify as $A_1(i)$, $A_2(i)$, $B_1(i)$ and $B_2(i)$. We connect each $A_j(i)$ with each $B_k(i)$ ($4$ connections in total, where each connection uses $4$ edges, as depicted in Figure~\ref{fig:PCG}). In order to connect $CG(i)$ (cell $i$) with $CG(i+1)$ (cell $i+1$) we connect $A_j(i)$ with $A_j(i+1)$, and $A_j(i)$ with $B_j(i+1)$, as shown in Figure~\ref{fig:CG}. A $CG$ is said to have value $0$ if all $4$ of its $PCG$s are set to $0$ and $1$ if all $PCG$s are set to $1$. We guarantee that no other case can occur in $G^{(t)}, t\in \mathbb{N}$, although certain cases can occur in the intermediate graphs $G^{(t+0.5)}, t\in \mathbb{N}$.

\begin{figure}[t]
\centering
\includegraphics[scale=0.2]{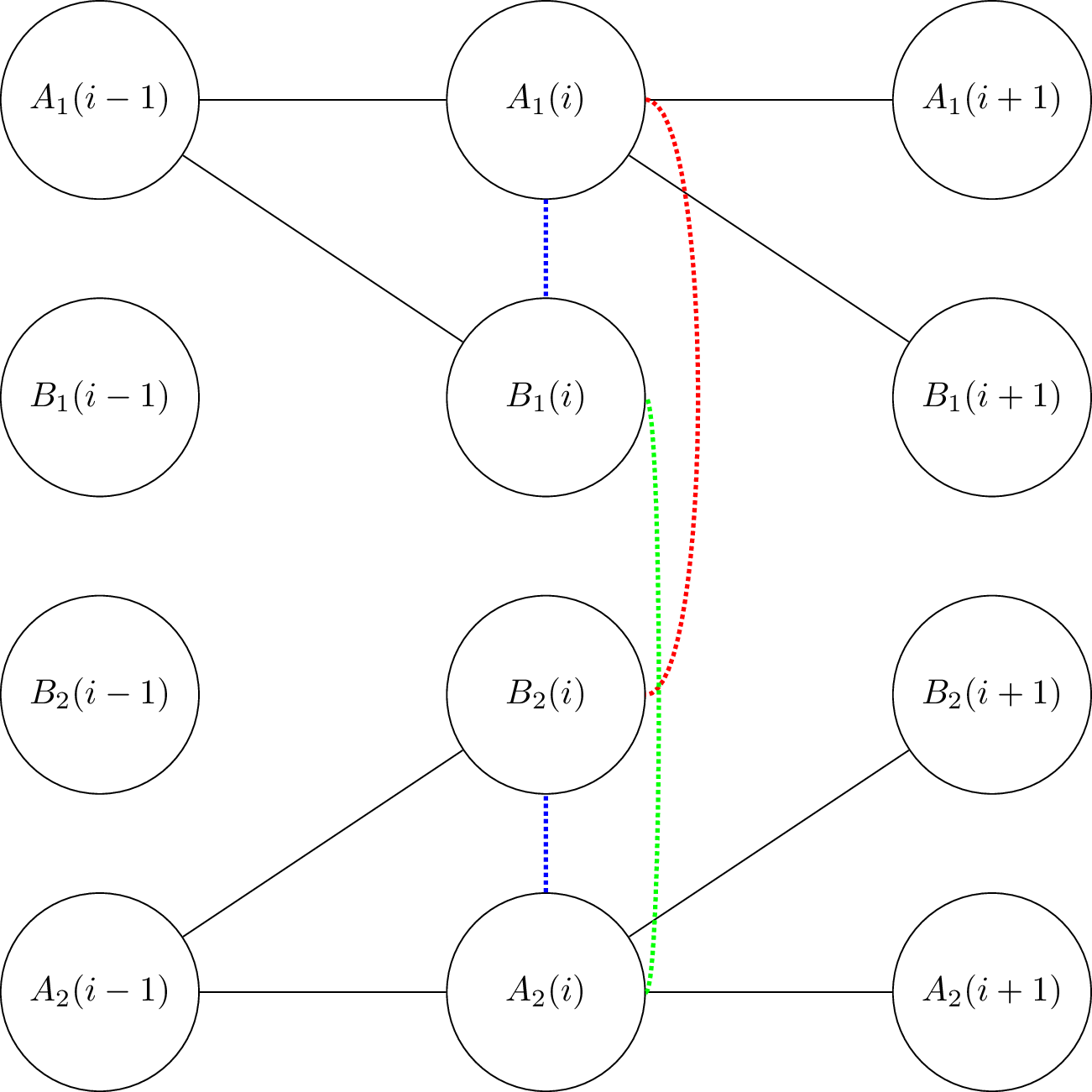}
\caption{Each circle represents a $PCG$ (2 nodes) and each line represents a connection between $PCG$s (4 edges) as in Figure~\ref{fig:PCG}. Only connections relevant to $A_1(i),A_2(i),B_1(i),B_2(i)$ are shown. The $4$ dotted connections in the second column are internal connections of $CG(i)$. All other continuous connections correspond to how $CG(i-1)$ is connected with $CG(i)$ and $CG(i)$ is connected with $CG(i+1)$. None of the edges of these connections is ever included in any $C^{(t)}$, and thus they are always preserved.}
\label{fig:CG}
\end{figure}

Each cell from Rule $110$ is represented by a $CG$, and they are connected by the aforementioned method. At time $t$, where $t\geq 0$ is an integer, we have that $C^{(t)}$ contains all pairs of nodes both belonging in the same $A_j(i)$. In other words, only the edges that define the value of an $A_j(i)$ are allowed to change from time $t$ to time $t+0.5$. Similarly, at time $t+0.5$ we have that $C^{(t+0.5)}$ contains all pairs of nodes both belonging in the same $B_j(i)$, for any valid $i,j$.

We notice that due to the definition of $C^{(t)}$ for any $t$, only edges inside $PCG$s may be allowed to change, meaning that all connections between $PCG$s will remain as is forever. In addition, the number of common neighbors of the pair of nodes $A_j(i)$ is always $CN^{(t)}(A_j(i))=10$, for all valid $t,i,j$, as it has $5$ neighboring $PCG$s, and each $PCG$ has two nodes. Furthermore, it holds that $CE^{(t)}(A_j(i))=8+A_j^{(t)}(i-1)+B_1^{(t)}(i)+B_2^{(t)}(i)+A_j^{(t)}(i+1)+B_j^{(t)}(i+1)$, as the edges between common neighbors are the internal edges of neighboring $PCG$s, plus the connection between $A_j^{(t)}(i-1)$ and $B_j^{(t)}(i)$ ($4$ edges), plus the connection between $A_j^{(t)}(i+1)$ and $B_j^{(t)}(i+1)$ ($4$ edges). Similarly, for a $B_j(i)$ we have that $CN^{(t)}(B_j(i))=6$, and $CE^{(t)}(B_j(i))=4+A_j^{(t)}(i-1)+A_1^{(t)}(i)+A_2^{(t)}(i)$.

\begin{lemma} \label{lem:Simulate_110}
It holds that $A_j^{(t)}(i)=B_j^{(t)}(i)=cell^{(t)}(i)$ for $j\in\{1,2\}$ and all $i,t\in \mathbb{N}$.
\end{lemma}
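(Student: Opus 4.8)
The plan is to prove the statement by induction on the integer time $t$, exploiting the two half-steps of the process to separate the \emph{compute} phase (carried out by the $A$-gadgets) from the \emph{copy} phase (carried out by the $B$-gadgets). The base case $t=0$ is immediate from the initialization of $G^{(0)}$, in which every $PCG$ is set so that $A_j^{(0)}(i)=B_j^{(0)}(i)=cell^{(0)}(i)$. For the inductive step I assume $A_j^{(t)}(i)=B_j^{(t)}(i)=cell^{(t)}(i)$ for all $i$ and $j\in\{1,2\}$, write $c_i=cell^{(t)}(i)$, and show the statement holds at time $t+1$.

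First I would analyse the transition from $t$ to $t+0.5$. Since $C^{(t)}$ contains only the pairs internal to some $A_j(i)$, no inter-$PCG$ edge and no $B$-edge can change; in particular $B_j^{(t+0.5)}(i)=c_i$. Substituting the inductive hypothesis into the stated value of $CE^{(t)}(A_j(i))$ gives $CE^{(t)}(A_j(i))=8+c_{i-1}+2c_i+2c_{i+1}$, which is independent of $j$, so $A_1(i)$ and $A_2(i)$ necessarily receive the same new value. Because $CN^{(t)}(A_j(i))=10$, the update is governed by the first two cases of $\mathcal{E}^{(t)}$, selected according to whether the current edge value $c_i$ equals $0$ or $1$. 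The crux of the argument is the short case analysis that follows: when $c_i=0$, the first case creates the edge iff $CE^{(t)}\ge 10$, i.e. iff $c_{i+1}=1$, matching the Rule~$110$ clause $cell^{(t+1)}(i)=cell^{(t)}(i+1)$; when $c_i=1$, the second case keeps the edge iff $CE^{(t)}\le 12$, i.e. iff $c_{i-1}$ and $c_{i+1}$ are not both $1$, matching the clause that forces $cell^{(t+1)}(i)=0$ exactly when $cell^{(t)}(i-1)=cell^{(t)}(i+1)=1$. Verifying this against all eight neighbourhood patterns confirms $A_j^{(t+0.5)}(i)=cell^{(t+1)}(i)$.

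Next I would treat the transition from $t+0.5$ to $t+1$. Now $C^{(t+0.5)}$ contains only the pairs internal to some $B_j(i)$, so the $A$-gadgets are frozen and $A_j^{(t+1)}(i)=cell^{(t+1)}(i)$. Writing $c'_i=cell^{(t+1)}(i)$ and plugging the phase-one values of the $A$-gadgets into the stated formula for $CE^{(t+0.5)}(B_j(i))$ yields $CE^{(t+0.5)}(B_j(i))=4+c'_{i-1}+2c'_i$; since $CN=6$ here, the third case of $\mathcal{E}^{(t)}$ creates the edge iff $CE^{(t+0.5)}\ge 6$, i.e. iff $c'_i=1$. Hence $B_j^{(t+1)}(i)=c'_i=cell^{(t+1)}(i)$, and together with the frozen $A$-values this closes the induction.

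I expect the main obstacle to be the bookkeeping of \emph{which} cell values feed each $CE$ computation across the half-step split: the $A$-update must read the \emph{old} $B$-values (still equal to $c_i$) while the $B$-update must read the \emph{freshly computed} $A$-values (equal to $c'_i$), and one must also confirm that the inter-$PCG$ connections, never placed in any $C^{(t)}$, persist so that the common-neighbour counts remain fixed at $10$ and $6$ throughout. Once this phase separation is pinned down, what remains is the purely finite truth-table check that aligns the first three cases of $\mathcal{E}^{(t)}$ with the definition of Rule~$110$.
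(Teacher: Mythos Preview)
Your proposal is correct and follows essentially the same inductive two-phase argument as the paper: first show that the $A$-gadgets compute $cell^{(t+1)}(i)$ at the half-step using the $CN=10$ branches of the energy, then show that the $B$-gadgets copy this value using the $CN=6$ branch, with the interaction graphs $C^{(t)}$ and $C^{(t+0.5)}$ freezing the other set of gadgets. The only cosmetic difference is that you keep a single unified expression $CE^{(t)}(A_j(i))=8+c_{i-1}+2c_i+2c_{i+1}$ before splitting on $c_i$, whereas the paper substitutes $c_i$ first and then reads off the threshold; both lead to the same case analysis.
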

\begin{proof}
It holds that $A_j^{(0)}(i)=B_j^{(0)}(i)=cell^{(0)}(i)$ by the initialization of our construction. Suppose that $A_j^{(t)}(i)=B_j^{(t)}(i)=cell^{(t)}(i)$ for integer $t\geq 0$. By using induction we show that the lemma holds for time $t+1$.

First of all, we prove that $A_j^{(t+0.5)}(i)=cell^{(t+1)}(i)$. If $cell^{(t)}(i)=0$, then it holds that $cell^{(t+1)}(i)=cell^{(t)}(i+1)=A_j^{(t)}(i+1)=B_j^{(t)}(i+1)$, due to our inductive hypothesis. Furthermore, due to our inductive hypothesis it holds that $A_j^{(t)}(i)=B_1^{(t)}(i)=B_2^{(t)}(i)=0$. Thus, since $CN^{(t)}(A_j(i))=10$ and $|E^{(t)}(A_j(i))|=0$ (there is no edge between the two nodes in $A_j(i)$) the energy between the pair of nodes is $\mathcal{E}^{(t)}(A_j(i))=CE^{(t)}(A_j(i))+\beta-10$. 
To find the energy of the pair of nodes $A_j(i)$ we compute:
\[CE^{(t)}(A_j(i))=8+A_j^{(t)}(i-1)+B_1^{(t)}(i)+B_2^{(t)}(i)+A_j^{(t)}(i+1)+B_j^{(t)}(i+1)=\]
\[8+cell^{(t)}(i-1)+2cell^{(t)}(i+1)\]
Thus, it follows that the energy of $A_j(i)$ is $\mathcal{E}^{(t)}(A_j(i))=\beta+cell^{(t)}(i-1)+2cell^{(t)}(i+1)-2$, which is at least $\beta$ if and only if $cell^{(t)}(i+1)=1$. Thus, in the case where $cell^{(t)}(i)=0$ we proved that indeed it holds that $A_j^{(t+0.5)}(i)=cell^{(t+1)}(i)$.

We use a similar reasoning for the case where $cell^{(t)}(i)=1$. In particular, since $CN^{(t)}(A_j(i))=10$ and $|E^{(t)}(A_j(i))|=1$ (there is an edge between the two nodes in $A_j(i)$) the energy between the pair of nodes is $\mathcal{E}^{(t)}(A_j(i))=\beta +12-CE^{(t)}(A_j(i))$. We compute:
\[CE^{(t)}(A_j(i))=8+A_j^{(t)}(i-1)+B_1^{(t)}(i)+B_2^{(t)}(i)+A_j^{(t)}(i+1)+B_j^{(t)}(i+1)=\]
\[=10+cell^{(t)}(i-1)+2cell^{(t)}(i+1)\]
Thus, it follows that the energy of $A_j(i)$ is $\mathcal{E}^{(t)}(A_j(i))=\beta+2-cell^{(t)}(i-1)-2cell^{(t)}(i+1)$, which is less than $\beta$ if and only if $cell^{(t)}(i-1)=cell^{(t)}(i+1)=1$. This proves that $A_j^{(t+0.5)}(i)=cell^{(t+1)}(i)$.

Furthermore, it trivially holds that $A_j^{(t+1)}(i)=cell^{(t+1)}(i)$. That's because, by definition, $A_j(i) \not\in C^{(t+0.5)}$, and thus $A_j^{(t+0.5)}(i)=A_j^{(t+1)}(i)$. The energy of $B_j(i)$ at time $t+0.5$ is (recall that $CN^{(t)}(B_j(i))=6$):
\[\mathcal{E}^{(t+0.5)}(B_j(i))=CE^{(t+0.5)}(B_j(i))+\beta-6=\beta+2A_j^{(t+0.5)}(i)+A_j^{(t+0.5)}(i-1)-2\]
This is at least $\beta$ if and only if $A_j^{(t+0.5)}(i)=1$, which proves that $B_j^{(t+1)}(i)=cell^{(t+1)}(i)$. 
\end{proof}

The following corollary is a straightforward consequence of this lemma.

\begin{corollary} \label{cor:eq}
It holds that $cell^{(t)}(i)=CG^{(t)}(i)$.
\end{corollary}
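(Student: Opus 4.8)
The plan is to derive this directly from Lemma~\ref{lem:Simulate_110}, since the corollary is essentially a matter of unpacking the definition of the value of a cell-gadget. Recall that $CG(i)$ is composed of the four $PCG$s $A_1(i)$, $A_2(i)$, $B_1(i)$, $B_2(i)$, and that $CG^{(t)}(i)$ was defined to equal $0$ exactly when all four of these $PCG$s carry value $0$, and to equal $1$ exactly when all four carry value $1$. So the first thing I would do is invoke Lemma~\ref{lem:Simulate_110}, which asserts that for every integer $t$ and every $i$ we have $A_1^{(t)}(i)=A_2^{(t)}(i)=B_1^{(t)}(i)=B_2^{(t)}(i)=cell^{(t)}(i)$.

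From this the conclusion is immediate: the lemma already guarantees that the four constituent $PCG$s share a common value (so the "no other case can occur in $G^{(t)}$, $t\in\mathbb{N}$" guarantee made during the construction is fulfilled, and the value $CG^{(t)}(i)$ is well-defined at integer times), and moreover this common value is precisely $cell^{(t)}(i)$. Hence by the definition of the cell-gadget's value we get $CG^{(t)}(i)=cell^{(t)}(i)$, as claimed.

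The only point worth flagging is that the definition of $CG^{(t)}(i)$ is a priori a partial function: a gadget whose four $PCG$s disagreed would have no assigned value. I would therefore make explicit that Lemma~\ref{lem:Simulate_110} rules this out at every integer time step, so that the restriction of the statement to $t\in\mathbb{N}$ matters (at the intermediate times $t+0.5$ the $A$- and $B$-gadgets may temporarily disagree, which is why the corollary is phrased only for integer $t$). Beyond this observation there is no real obstacle — the argument is a one-line substitution — which is consistent with the statement being labelled a straightforward consequence of the preceding lemma.
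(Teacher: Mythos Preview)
Your proposal is correct and matches the paper's approach: the paper simply declares the corollary a ``straightforward consequence'' of Lemma~\ref{lem:Simulate_110}, and your argument---invoking the lemma to see that all four $PCG$s of $CG(i)$ share the common value $cell^{(t)}(i)$ at integer times, hence $CG^{(t)}(i)=cell^{(t)}(i)$ by definition---is exactly the intended one-line derivation. Your additional remark about the value being well-defined only at integer $t$ is a helpful clarification but not required.
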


We are now ready to prove our main theorem.

\begin{theorem}
The Network System we are studying is Turing-Complete.
\end{theorem}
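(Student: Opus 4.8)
The plan is to reduce directly from the simulation machinery already in place, since Cook proved that Rule $110$ is Turing-Complete~\cite{DBLP:journals/compsys/000104}; it therefore suffices to exhibit a faithful, step-for-step simulation of Rule $110$ by the Network System under the local rule defined above, which is exactly what Lemma~\ref{lem:Simulate_110} and Corollary~\ref{cor:eq} provide. First I would pin down the correspondence between an instance of Rule $110$ and an initial configuration of the NS. Given an initial cell assignment $cell^{(0)}(i)$, I build $G^{(0)}$ by laying out one $CG(i)$ per cell $i$, wiring neighbouring cell-gadgets exactly as in Figures~\ref{fig:PCG} and~\ref{fig:CG}, and setting the internal edge of each of the four $PCG$s in $CG(i)$ to be present precisely when $cell^{(0)}(i)=1$, so that $CG^{(0)}(i)=cell^{(0)}(i)$. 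The interaction graphs are taken as specified: $C^{(t)}$ admits only the intra-$PCG$ edges of the $A_j$'s and $C^{(t+0.5)}$ only those of the $B_j$'s, so the remaining connections are frozen for all time.

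Next I would invoke the invariant. Corollary~\ref{cor:eq} states that $CG^{(t)}(i)=cell^{(t)}(i)$ for every integer $t$ and every cell $i$, so reading off the cell-gadget values of $G^{(t)}$ recovers exactly the configuration of Rule $110$ after $t$ updates. The two half-steps $G^{(t)}\to G^{(t+0.5)}\to G^{(t+1)}$ realise a single Rule $110$ transition, so the simulation incurs only a constant factor-two slowdown, which is irrelevant for computability. Consequently, any function computed by a Turing machine can be computed by the NS: encode the machine into a Rule $110$ tape via Cook's universal construction, initialise $G^{(0)}$ as above, run the process, and decode the answer from the cell-gadget values once the halting pattern is detected. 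The core correctness is thus immediate from the preceding lemma.

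The step that genuinely requires care — and which I expect to be the main obstacle — is the unboundedness of the tape together with the formal meaning of input encoding and output decoding. Rule $110$'s universality relies on an infinite, eventually periodic background over which gliders carry the computation, whereas the convergence sections treated a fixed, finite vertex set $V$. I would resolve this either by permitting $V$ to be countably infinite, laying out one $CG$ per integer $i$ and initialising all but finitely many cells to the periodic background required by Cook's construction, or by noting that any terminating computation touches only finitely many cells and hence can be run on a sufficiently large finite prefix of cell-gadgets, padded with background, whose size depends only on the running time. In both cases the local rule and the interaction graphs are defined uniformly, independent of any global bound, so the simulation remains genuinely local; with the initial-configuration correspondence, the invariant of Corollary~\ref{cor:eq}, and this remark on unboundedness settled, Turing-Completeness of the Network System follows.
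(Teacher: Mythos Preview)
Your proposal is correct and follows essentially the same route as the paper: invoke Corollary~\ref{cor:eq} to establish a step-for-step simulation of Rule~$110$, then appeal to Cook's result. The paper's proof is considerably terser and adds one point you only allude to implicitly, namely that if Rule~$110$ stabilises at step $t$ then the Network System stabilises at time $t+1$ (so halting in the simulated model corresponds to the NS's own notion of convergence); conversely, your discussion of the unbounded tape and eventually-periodic background is a genuine issue that the paper simply does not address.
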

\begin{proof}
By Corollary~\ref{cor:eq} it follows that Rule $110$ is simulated by the particular network system constructed above. If Rule $110$ converges at step $t$ (meaning that no cell changes state for $t'>t$), then our simulation stabilizes also at time $t+1$ since no change will have taken place in the graph from time $t$ to time $t+1$. Since Rule $110$ is Turing- Complete it follows that the particular Network System is also Turing-Complete.
\end{proof}

As a final note, there are local rules that make the Network System Turing-Complete even if we are not allowed to use $C^{(t)}$, that is if all pairs of nodes are always allowed to create an edge $(\forall t: C^{(t)}=K_n)$. The construction uses properties of cliques to create gadgets that work like always-on edges, as well as gadgets that work like always-off edges. Using the always-on gadgets, we can create new gadgets that work like edges that always flip their status. Of course the energy definition should change accordingly. These are enough to replace $C^{(t)}$ in the above simulation. Although feasible, the construction is too technical and we decided not to include it.

\section{Discussion}\label{sec:discussion}

In this paper we have proposed the Network System as a model for dynamic networks and as a model of computation. We provide some initial results related to the emergent behavior of simple algorithms, its computational capabilities as well as its properties with respect to a certain class of algorithms that change only the topology of the network based on the topology itself alone. 

Our immediate goal for future research is to generalize our results. In particular, we wish to prove also the speed of convergence for the Theorem~\ref{thm:conv_only} as well as provide conditions on the convergence when the algorithm access nodes at distance $>1$, that is not its immediate neighbors only. Another immediate goal is to look at other definitions of energy that give rise to interesting emergent behavior. Note the difference between Sections~\ref{sec:min} and \ref{sec:shortest}. In the former, a simple rule leads to emergent behavior while in the latter we program the network system to acquire a particular result based on our knowledge of the solution in a classical model of computation. This interplay between these two different lines of thinking is of great interest.

This is the reason why we mainly envision NS as a framework to study emergent behavior in a mesoscopic scale where other approaches like statistical physics (macroscopic scale) and dynamical systems theory (microscopic scale) seems not to be able to reach easily meaningful results through theoretical analysis but only through experimental evaluation. Efforts on enriching our understanding on how the microscopic gives rise to the macroscopic has produced several interesting results. For example, a well-known such result is the experiments of Nakagaki et al. \cite{journals/n/NakagakiYT00}, who presented the ability of a slime-mold (Physarum polycephalum) to solve mazes. Later on, researchers have established the validity of the aforementioned claim, and provided more functions that Physarum can compute, from a theoretical point of view \cite{DBLP:journals/corr/BeckerBKKM17, DBLP:journals/corr/abs-1901-07231, DBLP:conf/innovations/StraszakV16}. Bird-flocking is also an intriguing such system, where Chazelle managed to prove convergence \cite{DBLP:journals/jacm/Chazelle14}. Based on these results Chazelle coined the term Natural Algorithms \cite{Chazelle:2012:NAI:2380656.2380679} and argued that traditional mathematics seems to fail to attack such problems in an efficient manner (efficiency refers to expressive power) especially due to the existence of memory within these systems that seems to break any symmetry on which traditional mathematics can be based on. Indeed, the work described in \cite{10.1007/978-3-319-18173-8_1} on Influence Systems is similar to ours in the sense that there is a communication graph that governs the interactions in each time step between agents moving in the euclidean space. In our case, we have the interaction graph but the agents lie on a network and not in the Euclidean space. It would be very interesting to see whether the algorithmic renormalization technique introduced by Chazelle could provide general results with respect to convergence for the NS under mild assumptions. 

As a more long-term objective we are really interested in finding a set of simple local rules such that when combined, we can make meaningful local programs which could be analyzed with respect to properties like convergence as well as with respect to the emergent behavior of the network. This is no small task and may have a great impact to various scientific fields. A more immediate goal towards this direction is to define the NS model for the Physarum polycephalum (or some other physical systems) and give a more combinatorial-like proof of convergence of this system. This means that we will have to move from the global viewpoint given by the systems of differential equations to a local viewpoint expressed by the Network System describing this physical system.


\bibliography{Convergence_Network}

\end{document}